\documentclass[reprint,aip,onecolumn,nofootinbib]{revtex4-2}
\usepackage{amssymb}
\usepackage{amsfonts}
\usepackage{amsmath}
\usepackage{graphicx}
\usepackage[ignoreall]{geometry}
\usepackage[colorlinks=true]{hyperref}
\usepackage{enumitem}
\usepackage[all]{xy}
\usepackage{comment}
\usepackage[dvipsnames]{xcolor}
\usepackage{tikz}

\usepackage{dcolumn}
\usepackage{bm}

\usepackage[utf8]{inputenc}
\usepackage[T1]{fontenc}
\usepackage{mathptmx}
\usepackage{etoolbox}

\setcounter{MaxMatrixCols}{10}

\newtheorem{theorem}{Theorem}

\newtheorem{definition}[theorem]{Definition}

\newtheorem{lemma}[theorem]{Lemma}

\newtheorem{proposition}[theorem]{Proposition}
\newtheorem{remark}[theorem]{Remark}

\newenvironment{proof}[1][Proof]{\noindent\textbf{#1.} }{\ \rule{0.5em}{0.5em}}
\flushbottom

\makeatletter
\def\@email#1#2{%
	\endgroup
	\patchcmd{\titleblock@produce}
	{\frontmatter@RRAPformat}
	{\frontmatter@RRAPformat{\produce@RRAP{*#1\href{mailto:#2}{#2}}}\frontmatter@RRAPformat}
	{}{}
}%
\makeatother

\begin{document}

\title[Mathematical foundations for field theories on Finsler spacetimes]{Mathematical foundations for field theories on Finsler spacetimes}

\author{Manuel Hohmann}
\email{manuel.hohmann@ut.ee}
\affiliation{Laboratory of Theoretical Physics, Institute of Physics, University of Tartu, W. Ostwaldi 1, 50411 Tartu, Estonia}
\affiliation{Lepage Research Institute, 17. novembra 1, 08116 Prešov, Slovakia}
\author{Christian Pfeifer}
\email{christian.pfeifer@zarm.uni-bremen.de}
\affiliation{ZARM, University of Bremen, 28359 Bremen, Germany}
\author{Nicoleta Voicu}
\email{nico.voicu@unitbv.ro}
\affiliation{Faculty of Mathematics and Computer Science, Transilvania University, Iuliu Maniu Str. 50, 500091 Brasov, Romania}
\affiliation{Lepage Research Institute, 17. novembra 1, 08116 Prešov, Slovakia}

\begin{abstract}
The paper introduces a general mathematical framework for action based field theories on Finsler spacetimes. As most often fields on Finsler spacetime (e.g., the Finsler fundamental function or the resulting metric tensor) have a homogeneous dependence on the tangent directions of spacetime, we construct the appropriate configuration bundles whose sections are such homogeneous fields; on these configuration bundles, the tools of coordinate free calculus of variations can be consistently applied to obtain field equations. Moreover, we prove that general covariance of natural Finsler field Lagrangians leads to an averaged energy-momentum conservation law which, in the particular case of Lorentzian spacetimes, is equivalent to the usual, pointwise energy-momentum covariant conservation law.
\end{abstract}

\maketitle

\textbf{Keywords:} Finsler spacetime, projectivized tangent bundle, fibered manifold, Euler-Lagrange operator, energy-momentum distribution tensor \\
\textbf{MSC2020:} 83D05,58A20,53B40

\tableofcontents

\section{Introduction}
(Pseudo-)Finsler geometry is the most general geometry admitting a parametrization-invariant arc length of curves. It generalizes Riemannian geometry by using as its fundamental, geometry-defining object, a general line element - which does not necessarily arise as the square root of any quadratic expression in the velocity components, but is just a homogeneous expression of degree one in these. Historically, already Riemmann himself introduced this concept in his habilitation lecture \cite{Riemann1,Riemann2}, however only Finsler investigated it more deeply \cite{Finsler}. Nowadays Finsler geometry is an established field in mathematics \cite{Bao,Bucataru}.

In physics, pseudo-Riemannian geometry is used to describe one of the four fundamental interactions, gravity. In general relativity, gravity is encoded in the Lorentzian geometry of the four-dimensional spacetime manifold, which is determined by the matter content of spacetime via the Einstein equations \cite{EinsteinGR}.
The idea to use geometry based on non-quadratic line elements to describe physical interactions goes far back, at least to Randers \cite{Randers}, who used, in addition to a metric, a $1$-form to search for a unified geometric description of gravity and electromagnetism. Since then, numerous applications of Finsler geometry in physics emerged \cite{Asanov,Pfeifer:2019wus}, for example in the geometric description of fields in media \cite{Cerveny,Klimes,MARKVORSEN2016208,Yajima2009,Perlick,Rubilar:2007qm}, to study non-local Lorentz invariant extensions of fundamental physics \cite{TAVAKOL198523,Tavakol1986,Schreck:2015seb,Kostelecky:2010hs,Kostelecky:2003fs,Kostelecky:2011qz,Bogoslovsky1994,Bogoslovsky,Raetzel:2010je,Amelino-Camelia:2014rga,Lobo:2020qoa}, and to find extensions and modifications of general relativity for an improved description of gravity \cite{Gibbons:2007iu,Lammerzahl:2018lhw,Rutz,Pfeifer-Wohlfarthgravity,kinetic-gas,Minguzzi:2014fxa} that might explain dark matter or dark energy geometrically \cite{Kouretsis:2008ha,Mavromatos:2010nk,Papagiannopoulos:2017whb,Li:2015sja,Hohmann:2016pyt,Saridakis:2021lqd}.

From the mathematical point of view, a major difficulty in the formulation of pseudo-Finsler geometry as generalization of peudo-Riemannian geometry is the existence, in each tangent space, of vectors along which the geometry defining function is either non-smooth or leading to a degenerate metric tensor. One of the first attempts to construct mathematically well defined Lorentz-Finsler spacetimes goes back to Beem \cite{Beem}. It turned out that Beem's definition was to restrictive to cover all cases one is interested in physics and numerous extensions and refinements have been discussed \cite{Asanov,Pfeifer:2011tk,Minguzzi:2014aua,Lammerzahl:2018lhw,Javaloyes:2018lex,Hohmann:2018rpp,Bernal2020}.

On the basis of the improved modern Finsler spacetime definitions, it has recently been suggested that the gravitational field of kinetic gases can be described by Finsler spacetime geometry \cite{kinetic-gas}.

Motivated by the coupling of the kinetic gas to Finsler spacetime geometry and by the recent example of a Finslerian vacuum action, \cite{Hohmann:2018rpp}, we introduce a general framework for mathematically consistent action-based field theories over Finsler spacetimes. The main difficulty for such a construction is that, most often, fields on a Finsler spacetime (e.g., the fundamental function $L$, or the resulting metric tensor) have a non-trivial \textit{homogeneous} dependence on the tangent directions of spacetime. For such homogeneous geometric objects, the naive formulation of the calculus of variations - on configuration manifolds sitting over the tangent bundle - is not well defined, since the variation itself possesses the same homogeneity as the original field; thus, imposing the variation to vanish on the boundary of the arbitrary compact integration domain would typically force the variation to identically vanish, also inside this domain. Therefore, in order to correctly apply the apparatus of the calculus of variations, one must first carefully choose the fiber bundles serving as configuration manifolds for the theory, in such a way as to naturally and consistently accommodate homogeneity. \vspace{12pt}

The main goal of this article is to provide a general construction of configuration bundles, define action integrals for homogeneous fields, in the just explained sense, over Finsler spacetimes and to apply the coordinate free calculus of variations to obtain field equations for the fields in a mathematically rigorous way. Also, we prove that invariance of the corresponding field Lagrangians under lifted spacetime diffeomorphisms leads to an averaged energy-momentum conservation law, which generalizes the (pointwise) energy-momentum conservation law in general relativity. \vspace{12pt}

To achieve this, we proceed as follows:

Section \ref{sec:Fins} reviews the notion of Finsler spacetimes, \cite{cosmoFinsler} and the geometry of Finsler spacetimes, on which our later construction is based. We also give a brief discussion on the different definitions, which can be found in the literature. Most importantly, we discuss the homogeneity properties of the appearing geometric objects.

Section \ref{sec:PTM+} presents the positively projectivized tangent bundle $PTM^+$ (also called in the literature on positive definite Finsler spaces, the \textit{projective sphere bundle}, \cite{Bao}), which will serve as base manifold for action integrals for field theories on Finsler spacetimes. We first discuss in detail the general concept of $PTM^+$ without any geometric fields on the manifold in consideration, and then, how Finsler geometry can be understood on $PTM^+$. Albeit this is still a preparatory topic, it is treated in quite some detail, since a systematic analysis of $PTM^{+}$ and of the various structures it gives rise to, both on general manifolds and on pseudo-Finsler spaces, seems to be missing in the literature.

Having set the stage, we use $PTM^+$ as base manifold for general physical fields having a homogeneous dependence on the direction; these fields are modeled as sections into configuration bundles over $PTM^+$ in Section \ref{sec:Fields}. We introduce the corresponding configuration bundles and fibered automorphisms thereof, that serve in deforming sections and thus give rise to variations.

Eventually, Section \ref{sec:FinslerFieldTheories} combines all the previous concepts to write down the general form of well defined action integrals for homogeneous fields on Finsler spacetime. Once this is done, their field equations are then obtained by the standard techniques of calculus of variations - discussed here in a coordinate-free form.

In Section \ref{sec:EMDistri}, we derive the response of Lagrangians to compactly supported diffeomorphisms on the spacetime manifold, which leads to the novel notion of an \textit{energy-momentum distribution tensor}. It satisfies an averaged covariant conservation law and can be integrated to an energy-momentum tensor \textit{density} on spacetime. Only in very special cases, in particular in the case of a Lorentzian spacetime geometry, this energy-momentum tensor density can be "un-densitized" to yield an energy-momentum tensor on the base manifold.

The necessary notions of geometric calculus of variations (jet bundles over fibered manifolds, fibered automorphisms, the first variation formula in terms of differential forms and their Lie derivatives) are briefly presented in Appendix \ref{app:A}.

\section{Pseudo-Finsler spaces and Finsler spacetime manifolds}

\label{sec:Fins}

We begin this article by presenting a precise definition of Finsler
spacetimes, \cite{cosmoFinsler} on which we will base the presentation and discussion of this
article. We will comment on its relation to other definitions of Finsler
spacetimes given in the literature \cite{Javaloyes:2018lex,Bernal2020,Lammerzahl:2012kw,Hasse:2019zqi}, highlight the importance of the details
in the definition which ensure the existence of a well defined causal
structure and discuss some classes of examples. Moreover, we briefly review the
geometric notions on Finsler spacetimes such as connections and curvature.

The notions presented in this section set the stage for the construction of
action based field theories on Finsler spacetimes.

\subsection{The notion of Finsler spacetime}

\label{sec:defFins}

Let $M$ be a connected, orientable smooth manifold and $TM$, its tangent
bundle with projection $\pi _{TM}:TM\rightarrow M$. We will denote by $x^{i}$ the coordinates in a local chart on $M$ and by
$(x^{i},\dot{x}^{i})$, the naturally induced local coordinates of points $(x,\dot{x}) \in TM$. Whenever there is no risk of confusion, we will omit the indices, i.e., write $(x,\dot{x})$ instead of $(x^{i},\dot{x}^{i})$. Commas $_{,i}$ will mean partial
differentiation with respect to the base coordinates $x^{i}$ and dots $%
_{\cdot i}$ partial differentiation with respect to the fiber coordinates $%
\dot{x}^{i}$. Also, by $\overset{\circ }{TM}=TM\backslash \{0\}$, we will mean the tangent
bundle of $M$ without its zero section.

A \emph{conic subbundle} of $TM$ is a non-empty open submanifold $\mathcal{Q}%
\subset TM\backslash \{0\}$, with the following properties:

\begin{itemize}
\item $\pi_{TM}(\mathcal{Q})=M$;

\item \textit{conic property:} if $(x,\dot{x})\in \mathcal{Q}$, then, for
any $\lambda >0:$ $(x,\lambda \dot{x})\in \mathcal{Q}$.
\end{itemize}

A \textit{pseudo-Finsler space} is, \cite{Bejancu}, a triple $(M,\mathcal{A}%
,L),$ where $M$ is a smooth manifold, $\mathcal{A}\subset ~\overset{\circ }{%
TM}$ is a conic subbundle and $L:\mathcal{A}\rightarrow \mathbb{R}$ is a
smooth function obeying the following conditions:

\begin{enumerate}
\item positive 2-homogeneity:\ $L(x,\alpha \dot{x})=\alpha ^{2}L(x,\dot{x}),$
$\forall \alpha >0,$ $\forall (x,\dot{x})\in \mathcal{A}.$

\item at any $\left( x,\dot{x}\right) \in \mathcal{A}$ and in one (and then,
in any) local chart around $(x,\dot{x}),$ the Hessian:
\begin{equation*}
g_{ij}=\dfrac{1}{2}\dfrac{\partial ^{2}L}{\partial \dot{x}^{i}\partial \dot{x%
}^{j}}
\end{equation*}
is nondegenerate.
\end{enumerate}

The conic subbundle $\mathcal{A},$ where $L$ is defined, smooth and with
nondegenerate Hessian, is called the set of \textit{admissible vectors}. In
the following, we will consider as $\mathcal{A},$ the maximal set with these
properties - hence, we will write simply $(M,L)$ instead of $(M,\mathcal{A}%
,L).$

Another important conic subbundle in a pseudo-Finsler space is the set of
\textit{non-null admissible vectors:}
\begin{equation}
\mathcal{A}_{0}:=\mathcal{A~}\backslash ~L^{-1}\{0\}.  \label{A_0}
\end{equation}%
This is the set where we can divide by $L$ in order to adjust the
homogeneity degree of geometric objects in $\dot{x}$.

\begin{definition}[Finsler spacetimes]\label{def} A Finsler spacetime is a 4-dimensional, connected
pseudo-Finsler space obeying the extra condition:

\begin{enumerate}
\item[3.] There exists a conic subbundle $\mathcal{T}\subset
\mathcal{A}$ with connected fibers $\mathcal{T}_x, x \in M$, such that, on $\mathcal{T}:$ $L>0,$ $g$ has Lorentzian
signature $(+,-,-,-)$ and $L$ can be continuously extended as $0$ to the
boundary $\partial \mathcal{T}.$
\end{enumerate}
\end{definition}
The role of condition $3.$ is to ensure the existence of a proper causal
structure for $(M,L)$.

In the following, though we will not specify this explicitly, we will always consider that $L$ is continuously prolonged as 0 on $\partial \mathcal{T}$; in particular, $L(0)=0$.

The above definition is a minor modification of the one introduced in \cite{cosmoFinsler} to ensure that the cone of timelike vectors is salient at each point. It recovers, in the particular case $\mathcal{A}:=\mathcal{T}$, the definition of improper Finsler spacetimes in \cite{Bernal2020}\footnote{Physically, this ensures the existence of a global time orientation.}. The existence and uniqueness of geodesics with given initial conditions $(x,\dot{x})\in \mathcal{\bar{T}}$, which was explicitly required in an older version of this definition in \cite{Hohmann:2018rpp}, follows from the axioms $1-3$ above, see \cite{Bernal2020}, and thus the definition presented here also covers the Finsler spacetimes discussed in \cite{Lammerzahl:2012kw,Hasse:2019zqi}.

In principle it would be possible to include directions in $\mathcal{T}$
that are not in $\mathcal{A},$ but just in $\overline{\mathcal{A}}$, see e.g.,
\cite{Caponio-Masiello}, \cite{Caponio-Stancarone} yet, for our purposes, it
will be more convenient to assume that $\mathcal{T}\subset \mathcal{A}$, in
order to avoid unnecessary complications in variational procedures involving $\mathcal{T}$ or subsets thereof.

\noindent \textbf{Timelike vectors and the observer space.}

For the application of Finsler spacetimes in physics, besides the sets of
admissible (respectively, non-null admissible) directions $\mathcal{A}$ and $%
\mathcal{A}_{0}$, the following subsets of $TM$ play an important role:

\begin{enumerate}
\item The conic subbundle $\mathcal{T},$ called the set of \textit{future
pointing timelike vectors}.

\item The \textit{observer space}, or set of unit future-pointing timelike directions
\begin{equation}  \label{def observer space}
\mathcal{O}:=\{(x,\dot{x})\in \mathcal{T~}|~L(x,\dot{x})=1\}\,,
\end{equation}
which satisfies the inclusion $\mathcal{O\subset T\subset A}_{0}\subset
\mathcal{A}$. Moreover, due to the homogeneity of $L$, we have at any $x\in
M $:
\begin{equation}  \label{T_O_rel}
\mathcal{T}_{x}=\left( 0,\infty \right) \cdot \mathcal{O}_{x}.
\end{equation}

\item The set $\mathcal{N}:=L^{-1}\{0\}$ has the meaning of set of \textit{%
null} or \textit{lightlike} vectors. By continuously extending $L$ as zero to the boundary $\partial \mathcal{T}$, as specified above, we always have the
inclusion
\begin{equation}
\partial \mathcal{T\subset N}.  \label{eq:nullbdry}
\end{equation}%
It is important to notice that the null cone $\mathcal{N}$ might not be
contained in $\mathcal{A}$, but just in $\overline{\mathcal{A}}$.
\end{enumerate}



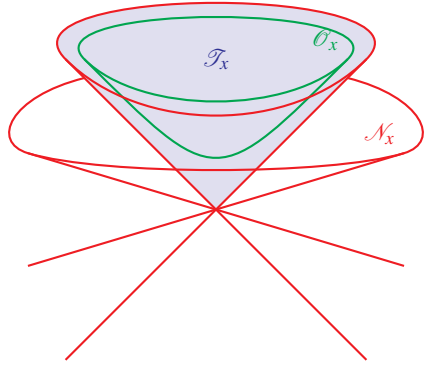
\begin{figure}[h!]
	\centering
\begin{tikzpicture}[thick,scale=0.5]
\fill[Blue!10!White] (-4,4) -- (0,0) -- (4,4) .. controls +(2,2) and +(-2,2) .. (-4,4);
\draw[Red] (-5,-1.5) -- (5,1.5) node [above left] {$\mathcal{N}_x$};
\draw[Red] (-5,1.5) -- (5,-1.5);
\draw[Red] (-4,-4) -- (4,4);
\draw[Red] (-4,4) -- (4,-4);
\draw[Green] (-3.5,4) .. controls +(3.5,-3.5) and +(-3.5,-3.5) .. (3.5,4) node [above left] {$\mathcal{O}_x$};
\draw[Green] (-3.5,4) .. controls +(1.5,-1.5) and +(-1.5,-1.5) .. (3.5,4) .. controls +(1.5,1.5) and +(-1.5,1.5) .. (-3.5,4);
\draw[Red] (-4,4) .. controls +(2,-2) and +(-2,-2) .. (4,4) .. controls +(2,2) and +(-2,2) .. (-4,4);
\draw[Red] (-3.5,3.5) .. controls +(-2,-0.5) and +(-1,0.3) .. (-5,1.5) .. controls +(2,-0.6) and +(-2,-0.6) .. (5,1.5) .. controls +(1,0.3) and +(2,-0.5) .. (3.5,3.5);
\draw[Blue] (0,4) node {$\mathcal{T}_x$};
\end{tikzpicture}
\caption{Future pointing timelike vectors \(\mathcal{T}_x\), observer space \(\mathcal{O}_x\) and null directions \(\mathcal{N}_x\) of a double null-cone Finsler spacetime, in the tangent space at $x\in M$.}
\label{fig:1}
\end{figure}

The null boundary condition \eqref{eq:nullbdry} ensures that, for every $%
x\in M,$ the future timelike cone $\mathcal{T}_{x}$ is actually an entire
connected component of $L^{-1}((0,\infty ))\cap T_{x}M$. This leads to the
following consequences.

\begin{proposition}
At any $x\in M,$ the observer space $\mathcal{O}_{x}$ is a connected component of the indicatrix $I_{x}=L^{-1}(1)\cap T_{x}M$.
\end{proposition}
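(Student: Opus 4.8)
The plan is to show that $\mathcal{O}_{x}$ is a non-empty subset of the indicatrix $I_{x}$ which is, simultaneously, connected, open and closed in the subspace topology induced on $I_{x}$ from $T_{x}M$; a standard point-set argument then forces $\mathcal{O}_{x}$ to be exactly a connected component of $I_{x}$. Concretely, if $C\subseteq I_{x}$ denotes the connected component containing the non-empty set $\mathcal{O}_{x}$, then $\mathcal{O}_{x}=\mathcal{O}_{x}\cap C$ is non-empty and both open and closed in the connected space $C$, hence $\mathcal{O}_{x}=C$.

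Non-emptiness and connectedness would come from the Finsler spacetime axioms via a normalization. Since $\mathcal{T}$ is a conic subbundle, $\pi_{TM}(\mathcal{T})=M$, so $\mathcal{T}_{x}\neq\emptyset$, and by the third condition in Definition \ref{def} the fibre $\mathcal{T}_{x}$ is connected and $L>0$ on it. Hence the map
\[
\nu\colon \mathcal{T}_{x}\longrightarrow T_{x}M,\qquad \dot{x}\longmapsto \frac{\dot{x}}{\sqrt{L(x,\dot{x})}}\,,
\]
is well defined and continuous; by positive $2$-homogeneity $L(x,\nu(\dot{x}))=1$, and since $\mathcal{T}$ is conic we have $\nu(\dot{x})\in\mathcal{T}_{x}$, so $\nu(\mathcal{T}_{x})\subseteq\mathcal{O}_{x}$. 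As $\nu$ restricts to the identity on $\mathcal{O}_{x}$ we actually get $\mathcal{O}_{x}\subseteq\nu(\mathcal{T}_{x})$, i.e.\ $\nu(\mathcal{T}_{x})=\mathcal{O}_{x}$ (which also re-expresses \eqref{T_O_rel}). Therefore $\mathcal{O}_{x}$ is a continuous image of the non-empty connected set $\mathcal{T}_{x}$, hence non-empty and connected. Openness is immediate: $\mathcal{T}$ is open in $TM$, so $\mathcal{T}_{x}$ is open in $T_{x}M$, and thus $\mathcal{O}_{x}=\mathcal{T}_{x}\cap I_{x}$ is open in $I_{x}$.

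The substantive step, which I expect to be the only delicate one, is closedness of $\mathcal{O}_{x}$ in $I_{x}$; here the null boundary condition \eqref{eq:nullbdry} is essential. Let $\dot{x}\in I_{x}$ lie in the closure of $\mathcal{O}_{x}$ inside $I_{x}$; then $\dot{x}$ lies in the closure of $\mathcal{O}_{x}$ in $T_{x}M$ as well, and since $\mathcal{O}_{x}\subseteq\mathcal{T}_{x}$ we get $\dot{x}\in\overline{\mathcal{T}_{x}}\subseteq\overline{\mathcal{T}}\cap T_{x}M$. If $\dot{x}\notin\mathcal{T}_{x}$ then, $\mathcal{T}$ being open, $\dot{x}\in(\overline{\mathcal{T}}\setminus\mathcal{T})\cap T_{x}M=\partial\mathcal{T}\cap T_{x}M$, so by \eqref{eq:nullbdry} $\dot{x}\in\mathcal{N}=L^{-1}\{0\}$, i.e.\ $L(x,\dot{x})=0$; but $\dot{x}\in I_{x}$ means $L(x,\dot{x})=1$, a contradiction (note $\dot{x}\neq0$, since $L(0)=0$, so no difficulty arises at the zero section). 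Hence $\dot{x}\in\mathcal{T}_{x}\cap L^{-1}(1)=\mathcal{O}_{x}$, which proves that $\mathcal{O}_{x}$ is closed in $I_{x}$. Combining this with the previous paragraph and the point-set remark, $\mathcal{O}_{x}$ is a connected component of $I_{x}$.
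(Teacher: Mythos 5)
Your proof is correct and supplies, in full point-set detail, exactly the argument the paper compresses into a single line (``follows immediately from the connectedness and maximality of $\mathcal{T}_{x}$''): the connectedness of $\mathcal{O}_{x}$ via the normalization $\dot{x}\mapsto\dot{x}/\sqrt{L(x,\dot{x})}$ of the connected fiber $\mathcal{T}_{x}$, and your closedness step using $\partial\mathcal{T}\subset\mathcal{N}$ is precisely the ``maximality'' of $\mathcal{T}_{x}$ (as a connected component of $L^{-1}((0,\infty))\cap T_{x}M$) that the paper invokes. Same approach, no gaps.
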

This statement follows immediately from the connectedness and maximality of $\mathcal{T}_{x}$.

As a consequence of the maximal connectedness of $\mathcal{O}_{x}$, a result
by Beem, \cite{Beem} ensures that $\mathcal{O}_{x}$ is a strictly convex
hypersurface of $T_{x}M$ and moreover, the set
\begin{equation*}
\mathcal{C}_x:=\mathcal{T}_{x}\cap L^{-1}([1,\infty ))=[1,\infty )\cdot
\mathcal{O}_{x}
\end{equation*}%
is also convex. Based on this, we can state:

\begin{proposition}
In a Finsler spacetime as defined above, all future timelike cones $\mathcal{T}_{x},$ $x\in
M, $ are convex.
\end{proposition}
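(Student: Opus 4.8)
My plan is to reduce the convexity of the full cone $\mathcal{T}_x$ to the convexity of the truncated cone $\mathcal{C}_x=\mathcal{T}_x\cap L^{-1}([1,\infty))=[1,\infty)\cdot\mathcal{O}_x$, which is the piece already supplied by the quoted result of Beem (together with the strict convexity of $\mathcal{O}_x$ and the fact, recorded just above, that $\mathcal{O}_x$ is a connected component of the indicatrix). The first observation I would use is that $\mathcal{T}_x$ is a cone, i.e. stable under multiplication by positive scalars; for such a set, convexity is equivalent to stability under addition, since once $u,v\in\mathcal{T}_x\Rightarrow u+v\in\mathcal{T}_x$ is known, every convex combination $tu+(1-t)v$ with $t\in(0,1)$ lies in $\mathcal{T}_x$ because $tu$ and $(1-t)v$ already do.

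So the crux is to show $\mathcal{T}_x$ is closed under addition. Take $u,v\in\mathcal{T}_x$. Using \eqref{T_O_rel} I would write $u=a\hat u$, $v=b\hat v$ with $a,b>0$ and $\hat u,\hat v\in\mathcal{O}_x$, and then pick $\lambda>0$ large enough that $\lambda a\ge1$ and $\lambda b\ge1$; by $2$-homogeneity of $L$ this places $\lambda u$ and $\lambda v$ in $\mathcal{C}_x=[1,\infty)\cdot\mathcal{O}_x$. Convexity of $\mathcal{C}_x$ then gives $\tfrac12(\lambda u+\lambda v)\in\mathcal{C}_x\subset\mathcal{T}_x$, and rescaling by the positive factor $2/\lambda$ yields $u+v\in\mathcal{T}_x$. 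Combined with the preceding remark, this proves that $\mathcal{T}_x$ is convex. (Equivalently, and slightly more conceptually: $\mathcal{T}_x=(0,\infty)\cdot\mathcal{C}_x$ is the positive-dilation cone over a convex set, and such a cone is always convex, by the elementary normalization $t\lambda_1v_1+(1-t)\lambda_2v_2=s\bigl(\tfrac{t\lambda_1}{s}v_1+\tfrac{(1-t)\lambda_2}{s}v_2\bigr)$ with $s=t\lambda_1+(1-t)\lambda_2>0$.)

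I do not expect a genuine obstacle here: all the hard geometric content — strict convexity of a maximal connected component of the indicatrix in Lorentzian signature, and the ensuing convexity of the closed exterior region $\mathcal{C}_x$ — is concentrated in the cited work of Beem and is imported as a black box. The only point I would be careful about is that every manipulation must stay inside the single tangent space $T_xM$ and be phrased through positive multiples of points of $\mathcal{O}_x$, so that the segment joining two timelike vectors is never inadvertently pushed across the null boundary $\partial\mathcal{T}$ or outside the admissible set $\mathcal{A}$; writing everything in terms of $\mathcal{C}_x$, which is a subset of $\mathcal{T}_x\subset\mathcal{A}$ by construction, makes this automatic.
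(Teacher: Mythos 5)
Your proposal is correct and follows essentially the same route as the paper: both arguments rescale the given timelike vectors by a factor of at least $\max(L(u)^{-1/2},L(v)^{-1/2})$ so that they land in $\mathcal{C}_x=[1,\infty)\cdot\mathcal{O}_x$, invoke the convexity of $\mathcal{C}_x$ imported from Beem's result, and then rescale back using the conicity of $\mathcal{T}_x$. The only cosmetic difference is your preliminary reduction of convexity to closure under addition, which the paper bypasses by applying the rescaling directly to the convex combination $(1-\alpha)u+\alpha v$.
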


\begin{proof}
Fix $x\in M$ and consider some arbitrary $u,v\in \mathcal{T}_{x},$ $\alpha
\in \lbrack 0,1].$ In order to show that $w:=(1-\alpha )u+\alpha v$ lies in $%
\mathcal{T}_{x},$ we rescale it by $\beta \geq \max
(L(u)^{-1/2},L(v)^{-1/2});$ this way, the endpoints $\beta u,\beta v$ lie in
$\mathcal{C}_x$ and, by the convexity of $\mathcal{C}_x,$ we find that $%
\beta w\in \mathcal{C}_x\subset \mathcal{T}_{x}.$ The statement then
follows from the conicity of $\mathcal{T}_{x}.$
\end{proof}

The above result generalizes a similar statement in \cite{Hohmann:2018rpp}, which was proven under the more restrictive condition that $L$ is defined and continuous on the entire $TM$.
The convexity of the cones $\mathcal{T}_{x}$ ensures a well defined causal
structure on a Finsler spacetime and that timelike geodesics locally extremize the
length between timelike separated points.

The \textit{Finslerian pseudo-norm}, which defines the canonical geometric
length measure for curves on a Finsler spacetime, is defined by
\begin{align}
F:= \sqrt{\left\vert L\right\vert}\,,
\end{align}
which implies
\begin{equation}
L=\epsilon F^{2},~\ \ \epsilon =sign(L).  \label{relation L-F}
\end{equation}

\bigskip

\noindent \textbf{Examples of Finsler spacetimes}

The above definition allows for Finsler spacetimes of:

\begin{enumerate}
\item Lorentzian type\textit{. }If $a:M\rightarrow T_{2}^{0}M,$ $x\mapsto
a_{x}=a_{ij}(x)dx^{i}\otimes dx^{j}$ is a Lorentzian metric on $M$, then, we
can set $\mathcal{A}=\overset{\circ }{TM}$, as\ $L:TM\rightarrow \mathbb{R},$
$L(x,\dot{x})=a_{x}(\dot{x},\dot{x})$ is smooth on $TM.$ Accordingly, $F(x,
\dot{x})=\sqrt{\left\vert a_{ij}(x)\dot{x}^{i}\dot{x}^{j}\right\vert }.$

\item Randers type, \cite{Randers}, given by $F(x,\dot{x})=\sqrt{|a_{ij}(x)\dot{x}^{i}\dot{x}^{j}|}%
+b_{i}(x) \dot{x}^{i}$, where $a$ is as above and $%
b=b_{i}dx^{i}$ is a differential 1-form on $M.$ We proved in \cite%
{Hohmann:2018rpp} that, if $a^{ij}b_{i}b_{j}\in \left( 0,1\right) $ then $F$
provides a Finsler spacetime structure on $M.$ In the context of physics,
these geometries are employed to study the motion of an electrically charged
particle in an electromagnetic field, the propagation of light in static
spacetimes \cite{Werner:2012rc}, Lorentz violating field theories from the standard model extension \cite{Schreck:2015seb,Kostelecky:2012ac,Silva:2020tqr} and Finsler gravitational waves \cite{Heefer:2020hra}. Recently also spinors have been constructed on Randers geometries in terms of Clifford bundles \cite{torrome2021note}.

\item Bogoslovsky/Kropina type $F=(|a_{ij}(x)\dot{x}^{i}\dot{x}^{j}|)^{\frac{1-q%
}{2}}(b_{k}(x)\dot{x}^{k})^{q}$ \cite{Bogoslovsky,Kropina}, where $q\in
\mathbb{R}$; the conditions upon the 1-form $b$, such that $F$ defines a
spacetime structure depend on the value of $q;$ a detailed discussion is
made in \cite{Hohmann:2018rpp}. In physics, approaches to quantum field
theories and modifications of general relativity that are only invariant
under a subgroup of the Lorentz group, based on this geometry, have been
investigated under the name very special and very general relativity \cite%
{Cohen:2006ky,Gibbons:2007iu,Fuster:2015tua,Fuster:2018djw,Elbistan:2020mca}.

\item Polynomial $m$-th root type $F=|G_{a_{1}\cdots a_{m}}(x)\dot{x}%
^{a_{1}}\ldots \dot{x}^{a_{m}}|^{\frac{1}{m}}$, which appear in physics for
example in the description of propagation in birefringent media, in the
context of premetric electrodynamics and in the minimal standard model
extension \cite{Perlick,Rubilar:2007qm,Schreck:2015seb,Gurlebeck:2018nme}.

\item Anisotropic conformal transformations of pseudo-Riemannian geometry $F = e^{\sigma(x,\dot x)}\sqrt{|a_{ij}(x)\dot{x}^{i}\dot{x}^{j}|}$ have been studied in the context of an extension of the Ehlers-Pirani-Schild axiomatic to Finsler geometry \cite{Tavakol1986,Tavakol2009,TAVAKOL198523}, as well as examples for Finsler spacetimes according to Beem's definition \cite{Minguzzi:2014aua}.

\item General first order perturbations of  $F = \sqrt{|a_{ij}(x)\dot{x}^{i}\dot{x}^{j}|} + \epsilon h(x,\dot x)$ pseudo-Riemannian geometry, which are often used in the study of the physical  phenomenology of Planck scale modified dispersion relations \cite{Amelino-Camelia:2014rga,Girelli:2006fw,Letizia:2016lew,Lobo:2020qoa,Raetzel:2010je}.

\item Finsler spacetime metrics $L(x,\dot{x})=\omega_{x}^{2}(\dot{x}) - \hat{F}^{2}(x,\dot{x})$ obtained, \cite{Javaloyes:2018lex}, by "signature-reversing" from positive definite Finsler ones $\hat{F}:TM \rightarrow \mathbb{R}$ using a nonzero 1-form $\omega \in \Omega_{1}(M)$.
Their (2-homogeneous) Finsler function $L$ is well defined and smooth on the entire slit tangent bundle - which is quite a rare feature among pseudo-Finsler metrics. The  set $\mathcal{T}= \{(x,\dot{x}):  \omega_{x}(\dot{x}) \geq \hat{F}(x,\dot{x}) \}$ satisfies the requests for the set of future-pointing timelike vectors.

\end{enumerate}

\subsection{Geometric objects on Finsler spacetimes}

Typical Finslerian objects in a Finsler spacetime $(M,L)$ (more generally,
in a pseudo-Finsler space) are obtained similarly to the corresponding
objects in positive definite Finsler spaces $(M,F)$, see, e.g., \cite%
{Bao,Chern-Chen-Lam,Bucataru}, just taking care that we have to restrict
them to $\mathcal{A}$ or, if necessary, to $\mathcal{A}_{0}$.

Apart from the Finslerian pseudo-norm $F$, the fundamental building blocks
of the geometry of Finsler spacetimes are obtained from partial derivatives
of the Finsler Lagrangian function $L$. Below, we briefly present the coordinate expressions of the typical Finslerian geometric objects to be used in the following.

\noindent \textbf{Hilbert form, Finsler metric tensor and Cartan tensor}

On a Finsler spacetime $(M,L)$ the \textit{Hilbert form} $\omega: \mathcal{A}_{0} \rightarrow T^{0}_{1}M$, the
\textit{Finslerian metric tensor} $g: \mathcal{A} \rightarrow T^{0}_{2}M $ and the \textit{Cartan tensor} $C: \mathcal{A} \rightarrow T^{0}_{3}M $ are
expressed, in every manifold induced local coordinate chart, as
\begin{align}
\omega _{(x,\dot{x})}& := F_{\cdot i}dx^{i}\,, & F_{\cdot i} &=
\epsilon \dfrac{g_{ij}\dot{x}^{j}}{F}\,,  \label{eq:HilbertForm} \\
g_{(x,\dot{x})}& :=g_{ij}(x,\dot{x})dx^{i}\otimes dx^{j}\,, & g_{ij}& :=%
\frac{1}{2}L_{\cdot i\cdot j}  \label{eq:FinsMet} \\
C_{(x,\dot{x})}& :=C_{ijk}(x,\dot{x})dx^{i}\otimes dx^{j}\otimes dx^{k}\,, &
C_{ijk}& :=\frac{1}{4}L_{\cdot i\cdot j\cdot k}\,.  \label{eq:CartanT}
\end{align}%
We note that the Hilbert form $\omega$ is only defined on $\mathcal{A}_{0}$ (as it involves  derivatives of $F=\sqrt{|L|}$, which are not defined at points where $L=0$), while the Finsler metric and the Cartan tensor are defined on $\mathcal{A}$.

A curve $c:[a,b]\rightarrow M$ is called admissible if all its tangent
vectors are in $\mathcal{A}.$ The arc length of a regular admissible curve $%
c:t\in \lbrack a,b]\mapsto c(t)$ on $M$ is calculated as%
\begin{equation}
l(c)=\underset{a}{\overset{b}{\int }}F(c(t),\dot{c}(t))dt,
\label{eq:Flength}
\end{equation}%
where $\dot{c}(t)=\frac{dc}{dt}(t).$
If, moreover, $\dot{c}(t)$ is nowhere lightlike, i.e., $(c(t),\dot{c}(t))\in \mathcal{%
A}_{0}$ for all $t$, then $l(c)$ can also be expressed in terms of the Hilbert form as:
\begin{equation}
l(c)=\underset{a}{\overset{b}{\int }}C^{\ast }\omega =\underset{{Im}C}{\int } \omega ,  \label{arc_length_Hilbert_form}
\end{equation}%
where the symbol $C:[a,b]\rightarrow TM,$ $t \mapsto (c(t),\dot{c}(t))$ denotes here the canonical lift of $c$
to $TM$.

\begin{proposition}[Finsler geodesics], see, e.g., \cite{Bucataru}:
Critical points of the length functional \eqref{eq:Flength} are called Finsler geodesics. In arc
length parametrization, they are determined by the Finsler geodesic equation
\begin{equation}
\ddot{x}^{i}(s)+2G^{i}(x(s),\dot{x}(s))=0,  \label{geodesic_eqn}
\end{equation}%
where $\dot{x}(s)=\frac{dx}{ds}(s)$; the geodesic coefficients are well defined at all points $\left( x,\dot{x}%
\right) \in \mathcal{A}$ and given by
\begin{equation}
2G^{i}(x,\dot{x})=\dfrac{1}{2}g^{ih}(L_{\cdot h,j}\dot{x}^{j}-L_{,h})\,.
\label{geodesic_spray}
\end{equation}
\end{proposition}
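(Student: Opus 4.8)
The plan is to compute the first variation of the length functional \eqref{eq:Flength}, read off its Euler--Lagrange equations, convert them to the claimed form using the relation \eqref{relation L-F} between $L$ and $F$, and finally treat the null admissible directions separately. I would take a regular admissible curve $c:[a,b]\to M$ with nowhere-lightlike tangent, i.e.\ $(c(t),\dot c(t))\in\mathcal{A}_0$, and a variation $c_\varepsilon$ with fixed endpoints and variation field $V$. Since $F$ is positively $1$-homogeneous in $\dot x$, the functional $l$ is invariant under orientation-preserving reparametrizations; thus criticality is a reparametrization-invariant property and we may assume $c$ is parametrized by arc length, so that $F(c(t),\dot c(t))\equiv 1$ along $c$ (and, $\mathrm{sign}(L)$ being continuous along $c$, $\epsilon$ is constant there). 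Differentiating $l(c_\varepsilon)=\int_a^bF(c_\varepsilon,\dot c_\varepsilon)\,dt$ at $\varepsilon=0$ and integrating by parts,
\[
\left.\frac{d}{d\varepsilon}\right|_{0} l(c_\varepsilon)=\int_a^b\Big(F_{,i}-\frac{d}{dt}F_{\cdot i}\Big)V^i\,dt,
\]
with vanishing boundary terms since $V$ vanishes at the endpoints; as $V$ is arbitrary, $c$ is critical iff $\tfrac{d}{dt}F_{\cdot i}-F_{,i}=0$.

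Next I would pass from $F$ to $L$. From \eqref{relation L-F} one has $F_{\cdot i}=\epsilon L_{\cdot i}/(2F)$ and $F_{,i}=\epsilon L_{,i}/(2F)$, and since $F\equiv1$ is constant along the arc-length-parametrized curve, the Euler--Lagrange equations become $\tfrac{d}{dt}L_{\cdot i}-L_{,i}=0$. Expanding $\frac{d}{dt}L_{\cdot i}=L_{\cdot i,j}\dot x^j+L_{\cdot i\cdot j}\ddot x^j$ and using $L_{\cdot i\cdot j}=2g_{ij}$ from \eqref{eq:FinsMet}, this reads $2g_{ij}\ddot x^j+L_{\cdot i,j}\dot x^j-L_{,i}=0$; contracting with $\tfrac12 g^{ki}$, which exists everywhere on $\mathcal{A}$ by the non-degeneracy condition in the definition of a pseudo-Finsler space, yields
\[
\ddot x^k+\tfrac12 g^{kh}\big(L_{\cdot h,j}\dot x^j-L_{,h}\big)=0,
\]
that is, \eqref{geodesic_eqn} with $2G^k$ as in \eqref{geodesic_spray}.

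Finally, the right-hand side of \eqref{geodesic_spray} involves only the functions $L_{,h}$ and $L_{\cdot h,j}$, which are smooth on $\mathcal{A}$, together with the inverse metric $g^{ih}$, which exists on all of $\mathcal{A}$ by non-degeneracy; it never requires division by $L$ nor differentiation of $F=\sqrt{|L|}$. Hence $2G^i$ extends smoothly across the lightlike set $\mathcal{N}\cap\mathcal{A}$, and \eqref{geodesic_eqn} can be used to define geodesics through null admissible directions by continuity. I expect this to be the only genuine subtlety: the variational derivation is carried out in the arc-length gauge and so applies directly only to non-null curves, so the extension to lightlike directions is a definition legitimized a posteriori by the smoothness of \eqref{geodesic_spray} rather than an immediate output of the variational principle for $l$; everything else is the routine first-variation and chain-rule computation sketched above.
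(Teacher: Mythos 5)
Your derivation is correct: the paper itself offers no proof of this proposition (it simply cites the literature), and your argument — first variation of the length functional on $\mathcal{A}_0$, reduction to arc-length gauge so that the Euler--Lagrange equations for $F$ become those for $L$, expansion via $L_{\cdot i\cdot j}=2g_{ij}$, and the observation that \eqref{geodesic_spray} needs only the nondegeneracy of $g$ and hence extends to all of $\mathcal{A}$ including null directions — is exactly the standard route taken in the cited reference. The remark that the extension to lightlike admissible directions is by smoothness of the spray rather than by the variational principle itself is a correct and worthwhile clarification of the phrase ``well defined at all points $(x,\dot x)\in\mathcal{A}$'' in the statement.
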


A \textit{nonlinear connection} will be understood as a
connection on the fibered manifold $\mathcal{\ A}$ in the sense of \cite[pp.
30-32]{Giachetta}, i.e.\,, as a splitting of the tangent bundle $T\mathcal{A}
$ of $\mathcal{A}$\,,
\begin{equation*}
T\mathcal{A}=H\mathcal{A}\oplus V\mathcal{A}.
\end{equation*}%
The \textit{vertical subbundle} $V\mathcal{A}=\ker d(\pi_{TM}|_\mathcal{A})$
and the \textit{horizontal subbundle} $H\mathcal{A}$ are vector subbundles
of the tangent bundle $(T\mathcal{A},\pi_{T\mathcal{A}},\mathcal{A})$.
The local adapted basis will be denoted by $(\delta _{i},\dot{\partial}%
_{i}), $ where $\delta _{i}:=\tfrac{\partial }{\partial x^{i}}-G_{~i}^{j}%
\tfrac{\partial }{\partial \dot{x}^{j}},$ $\dot{\partial}_{i}=\tfrac{%
\partial }{\partial \dot{x}^{i}}$ and its dual basis, by $(dx^{i},\delta
\dot{x}^{i}=d\dot{x}^{i}+G_{~j}^{i}dx^{j})$. Here $G^i{}_j$ are general local connection coefficients.

We denote by $\mathfrak{h}$ and $\mathfrak{v}$ the horizontal and,
accordingly, the vertical projector determined by the nonlinear connection; that is, for any vector $%
X\in T\mathcal{A},$ locally written as $X=X^{i}\delta _{i}+\dot{X}^{i}\dot{%
\partial}_{i}$ \ we will have: $\mathfrak{h}X=X^{i}\delta _{i}$ and $%
\mathfrak{v}X=\dot{X}^{i}\dot{\partial}_{i}$.

\noindent \textbf{Cartan (canonical) nonlinear connection.}
The Cartan nonlinear connection $N$ on a Finsler spacetime $(M,L)$
is defined by the local connection coefficients
\begin{equation}
G_{~j}^{i}=G_{~\cdot j}^{i}\,,  \label{def:nlin}
\end{equation}%
Arc-length parametrized geodesics of the Finsler spacetime $(M,L)$ are
autoparallel curves of the canonical nonlinear connection.

\noindent \textbf{Nonlinear curvature tensor and Finsler Ricci scalar}

The curvature tensor of the canonical nonlinear connection on a
Finsler spacetime $(M,L)$ is a tensor on $TM$, which has the coordinate expression:
\begin{equation}
R^{i}{}_{jk}dx^{j}\wedge dx^{k}\otimes \dot{\partial}_{i}=dx^{j}\wedge dx^{k}\otimes [\delta
_{j},\delta _{k}]=(\delta _{k}G^{i}{}_{j}-\delta
_{j}G^{i}{}_{k})dx^{j}\wedge dx^{k}\otimes \dot{\partial}_{i}\,.
\label{N_curvature_comps}
\end{equation}%
The \textit{Finsler-Ricci scalar} $R_{0}$ makes sense on $\mathcal{A}_{0}$ and is given
by
\begin{equation*}
R_{0}=\dfrac{1}{L}R^{i}{}_{ik}\dot{x}^{k}\,.
\end{equation*}
The way introduced it above is equal to minus the one employed in \cite{Bao} (and denoted by $Ric$).

Besides the canonical nonlinear connection, it is possible to additionally define several linear connections on $\mathcal{A}$, which preserve the distributions
generated by the nonlinear connection. In this article we will pick, for simplicity, one of
these linear connections as a mathematical tool to ensure that all objects
we are dealing with are well defined tensors. Our particular choice of the linear connection is unessential, since it is just an auxiliary tool. The whole construction is independent of the typical Finslerian linear connections that one may use.

\noindent \textbf{Chern-Rund linear connection}

The Chern-Rund linear covariant derivative on a Finsler spacetime $(M,L)$,
defined on $\mathcal{A}\subset TM$, is locally given by the relations
\begin{equation}
\mathrm{D}_{\delta _{k}}\delta _{j}=\Gamma ^{i}{}_{jk}\delta _{i},\quad
\mathrm{D}_{\delta _{k}}\dot{\partial}_{j}=\Gamma ^{i}{}_{jk}\dot{\partial}%
_{i},\quad \mathrm{D}_{\dot{\partial}_{k}}\delta _{j}=\mathrm{D}_{\dot{%
\partial}_{k}}\dot{\partial}_{j}=0\,,  \label{Chern-Rund connection}
\end{equation}%
where $\Gamma ^{i}{}_{jk}:=\frac{1}{2}g^{ih}(\delta _{k}g_{hj}+\delta
_{j}g_{hk}-\delta _{h}g_{jk})$. We denote by $_{|i}$ $\mathrm{D}$-covariant
differentiation with respect to $\delta _{i}$.

The Chern-Rund linear covariant derivative allows us to introduce the
\textit{dynamical covariant derivative} in a very simple way, namely, as $%
\nabla :\Gamma (T\mathcal{A})\rightarrow \Gamma (T\mathcal{A})$ with $\nabla
=\dot{x}^{i}\mathrm{D}_{\delta _{i}}$. An important remark is that,
since $\dot{x}^{i}\Gamma _{~ji}^{k}=G_{~j}^{k},$ the dynamical covariant
derivative only depends on the canonical nonlinear connection $N,$ see \cite%
{Bucataru} (it can actually be introduced independently of $D$ or of any
other additional structure).

The dynamical covariant derivative can be used to define a measure of the change of the Cartan tensor along horizontal curves, called the
Landsberg tensor, see \cite{Bao}.

\noindent \textbf{Landsberg tensor}

The Landsberg tensor $P=P_{~jk}^{i}dx^{j}\otimes dx^{k}\otimes \delta _{i}$
is a tensor on $TM$, defined, in any local chart, by:
\begin{equation}
P_{~jk}^{i}=g^{mi}\nabla C_{mjk}=G_{~\cdot j \cdot k}^{i}-\Gamma _{~jk}^{i}\,.
\label{Landsberg tensor}
\end{equation}%
Its trace is denoted by $P_{i}=P^{j}{}_{ij}$.

The following identities will be useful when we consider action integrals
and calculus of variations on Finsler spacetimes:
\begin{align}
& \delta _{i}L=L_{|i}=0,\quad g_{ij|k}=0,~\ \ \ \dot{x}_{|j}^{i}=0,\  \\
& \nabla L=0,\quad \quad \nabla g_{ij}=0,~\ \ \ \ \nabla \dot{x}^{i}=0, \\
& P_{~jk}^{i}\dot{x}^{k}=0,~\ \ P_{i}\dot{x}^{i}=0.
\end{align}%
They can all be proven by using the homogeneity properties of the tensors
involved and the definition of the canonical nonlinear connection in terms
of the Finsler Lagrangian.

\noindent \textbf{(Semi)-Riemannian geometry as Finsler geometry}

Choosing $L=g_{ij}(x)\dot{x}^{i}\dot{x}^{j}$, the geometry of a Finsler
spacetime $(M,L)$ becomes essentially the geometry of the pseudo-Riemannian
spacetime manifold $(M,g)$. In this case, $G_{~j}^{i}=\gamma _{~jk}^{i}(x)%
\dot{x}^{k}$ and $R_{~jk}^{i}=r_{j~kl}^{~i}\dot{x}^{l}$ (where we have
denoted by small letters the geometric objects specific to Riemannian
geometry). The relation between the Finsler-Ricci scalar $R_{0}$ and the
usual Riemannian one $r=g^{ij}r_{i~jk}^{~k}$ is: $g^{ij}(LR_{0})_{\cdot
i\cdot j}=-2r.$

\subsection{Homogeneous geometric objects on $TM$} \label{sec:homog_geom_obj}

Homogeneity is a key concept in pseudo-Finslerian geometry, as the positive
homogeneity of $L$ in $\dot{x}$ entails the positive homogeneity of all
typical Finslerian geometric objects. We will briefly present here some
results on homogeneous geometric objects defined on conic subbundles $%
\mathcal{Q}$ $\subset \overset{\circ }{TM}.$ The results are straightforward
extensions of the results in \cite{Bucataru} and \cite{Szilasi:2014csf},
referring to objects defined on the whole slit tangent bundle.

\begin{definition}[Fiber homotheties]\label{def:homot}
By fiber homotheties on $\overset{\circ }{TM},$ we understand the mappings $%
\chi _{\alpha }:\overset{\circ }{TM}\rightarrow \overset{\circ }{TM},$ $\chi
_{\alpha }(x,\dot{x})=\left( x,\alpha \dot{x}\right) ,$ where $\alpha >0$.
\end{definition}

Fiber homotheties form a 1-parameter group of diffeomorphisms of $\overset{%
\circ }{TM},$ isomorphic to $(\mathbb{R}_{+}^{\ast },\cdot )$ and are generated
by the \textit{Liouville vector field}%
\begin{equation*}
\mathbb{C}=\dot{x}^{i}\dot{\partial}_{i}. \label{Liouville vector}
\end{equation*}%
We denote the corresponding group action by $\chi ,$ i.e.:%
\begin{equation}
\chi :\overset{\circ }{TM}\times \mathbb{R}_{+}^{\ast }\rightarrow \overset{%
\circ }{TM},~\ \chi \left( (x,\dot{x}),\alpha \right) =\chi _{\alpha }(x,%
\dot{x}).  \label{chi}
\end{equation}

\begin{definition}[Homogeneous tensor field]
Let $T$ be a tensor field over the conic subbundle $\mathcal{Q\subset }%
\overset{\circ }{TM}$. $T$ is called \textit{positively homogeneous} of
degree $k\in \mathbb{R}$ if and only if, for all $\alpha >0,$ its pullback
along the restriction $\chi _{\alpha }:\mathcal{Q}\rightarrow \mathcal{Q}$
satisfies
\begin{equation}
\chi _{\alpha }^{\ast }T=\alpha ^{k}T.
\end{equation}.%
\end{definition}

\begin{theorem}
\label{thm:homogeneity} A tensor field $T$ over $\mathcal{Q}$ is positively
homogeneous of degree $k\in \mathbb{R}$ if and only if
\begin{equation} \label{homog_Lie_deriv}
 \mathfrak{L}_{\mathbb{C}}T = kT\,.
 \end{equation}
\end{theorem}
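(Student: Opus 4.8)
The plan is to identify the fiber–homothety group $\{\chi_\alpha\}_{\alpha>0}$ with the flow of the Liouville vector field $\mathbb{C}$ and then to read \eqref{homog_Lie_deriv} off as the infinitesimal version of $\chi_\alpha^\ast T=\alpha^k T$. Concretely, I would first note that in the induced coordinates $(x^i,\dot x^i)$ the field $\mathbb{C}=\dot x^i\dot\partial_i$ has integral curves $t\mapsto(x^i,e^t\dot x^i)$, so its flow is $\phi_t=\chi_{e^t}$. The point that must be checked here is that, although $\mathcal{Q}$ is only a conic subbundle and not all of $\overset{\circ}{TM}$, the conic property guarantees $\chi_\alpha(\mathcal{Q})\subset\mathcal{Q}$ for every $\alpha>0$; hence $\phi_t$ is a globally defined flow on $\mathcal{Q}$ for all $t\in\mathbb{R}$. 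Since $t\mapsto e^t$ is a group isomorphism $(\mathbb{R},+)\to(\mathbb{R}_+^\ast,\cdot)$, a statement quantified over all $t\in\mathbb{R}$ is equivalent to one quantified over all $\alpha>0$.

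For the ``only if'' direction, assume $\chi_\alpha^\ast T=\alpha^k T$ on $\mathcal{Q}$ for all $\alpha>0$. Putting $\alpha=e^t$ this reads $\phi_t^\ast T=e^{kt}T$; differentiating in $t$ at $t=0$ and using the flow definition of the Lie derivative, $\mathfrak{L}_{\mathbb{C}}T=\frac{d}{dt}\big|_{t=0}\phi_t^\ast T$, gives $\mathfrak{L}_{\mathbb{C}}T=kT$.

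For the converse, suppose $\mathfrak{L}_{\mathbb{C}}T=kT$. From the semigroup property $\phi_{t+s}=\phi_t\circ\phi_s$ one has the standard identity $\frac{d}{dt}\phi_t^\ast T=\phi_t^\ast\mathfrak{L}_{\mathbb{C}}T$, which here becomes the linear ODE $\frac{d}{dt}\phi_t^\ast T=k\,\phi_t^\ast T$. Evaluating at an arbitrary point $p\in\mathcal{Q}$, the curve $t\mapsto(\phi_t^\ast T)_p$ lies in the fixed finite-dimensional tensor space over $p$ and solves $\frac{d}{dt}(\phi_t^\ast T)_p=k\,(\phi_t^\ast T)_p$ with initial value $T_p$; by uniqueness for linear ODEs, $(\phi_t^\ast T)_p=e^{kt}T_p$. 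As $p$ was arbitrary and $t$ ranges over all of $\mathbb{R}$, we conclude $\chi_\alpha^\ast T=\alpha^k T$ for every $\alpha=e^t>0$, i.e.\ $T$ is positively homogeneous of degree $k$.

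The only genuine subtlety — and the step I would treat most carefully — is the completeness of the flow of $\mathbb{C}$ on $\mathcal{Q}$: it is precisely the conic property in the definition of a conic subbundle that makes $\phi_t$ well defined on $\mathcal{Q}$ for all real $t$, and without it the equivalence could fail. Everything else is a routine application of the flow characterization of the Lie derivative together with uniqueness for linear ordinary differential equations; the coordinate computation of the integral curves of $\mathbb{C}$, if one chooses to include it, is a one-line verification.
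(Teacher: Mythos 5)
Your proof is correct and follows essentially the same route as the paper's: reparametrize the homotheties as the flow $\phi_t=\chi_{e^t}$ of $\mathbb{C}$, differentiate $\phi_t^\ast T=e^{kt}T$ at $t=0$ for one direction, and for the converse derive $\frac{d}{dt}\phi_t^\ast T=\phi_t^\ast\mathfrak{L}_{\mathbb{C}}T$ from the group property and integrate the resulting linear ODE. Your explicit remark that the conic property of $\mathcal{Q}$ is what makes the flow complete for all $t\in\mathbb{R}$ is a worthwhile point that the paper leaves implicit.
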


\begin{proof}
See~\cite[Lemma 4.2.9]{Szilasi:2014csf} for the proof in the special case of scalar
	functions and~\cite[Lemma 4.2.14]{Szilasi:2014csf} for vector fields. In order to prove it in the general case, we momentarily reinterpret the multiplicative 1-parameter group $\{\chi_{\alpha}\}$ as the additive group $\mathbb{R}$, by setting $t:=\log(\alpha) \in \mathbb{R}$ and $\phi_{t}(x,\dot{x})= (x,e^{t}\dot{x})= \chi_{\alpha}(x,\dot{x})$, for all $(x,\dot{x}) \in TM$.
Assume, first, that $T$ is $k$-homogeneous, which means: $\phi _{t}^{\ast }T= e^{kt}T$. Then,

\begin{equation*}
\mathfrak{L}_{\mathbb{C}}T=\left. \dfrac{d}{dt }(\phi _{t}^{\ast
}T)\right\vert _{t =0 }=\left. \dfrac{d}{dt }(e^{kt}T)\right\vert _{t=0}=kT.
\end{equation*}%
Conversely, assume (\ref{homog_Lie_deriv}) holds. Differentiating the identity $\phi_{t}^{\ast}\phi_{\varepsilon}^{\ast}T = \phi_{t+ \varepsilon}^{\ast}T$ with respect to $\varepsilon$ at $\varepsilon = 0$, one finds:
\begin{equation*}
\phi _{t} ^{\ast }\mathfrak{L}_{\mathbb{C}}T =%
\dfrac{d}{dt} (\phi _{t} ^{\ast } T), ~\ \forall t.
\end{equation*}
Using (\ref{homog_Lie_deriv}), this leads to the differential equation $\dfrac{d}{dt} (\phi _{t} ^{\ast } T) = k\phi _{t} ^{\ast } T$ in the unknown $f(t)=\phi _{t} ^{\ast } T$. Integrating this equation with the initial condition $f(0) = \phi_{0}^{ \ast}T = T $, we find $\phi _{t} ^{\ast }T = e^{kt}T$, which, reverting to the old notation, is precisely $\chi _{\alpha }^{\ast }T=\alpha ^{k}T$.
\end{proof}

In particular, positive 0-homogeneity in $\dot{x},$ i.e., invariance under the fiber rescalings $\chi _{\alpha },$ $\alpha >0,$ can be treated as invariance under the flow  of $\mathbb{C}$.

\textbf{Note.} In \cite{Bucataru}, $k$-homogeneity of vector fields is defined differently (it is, in our terms ($k-1$)-homogeneity). We prefer, yet, this definition, which allows a unified treatment of tensor fields of any rank.

In the following, we will simply refer to positive homogeneity in $\dot{x}$ as \textit{homogeneity.} Some canonical examples of homogenous structures on the tangent bundle are:
\begin{enumerate}
\item The Liouville vector field $\mathbb{C}$ is homogeneous of degree $0$,
since $\mathcal{L}_{\mathbb{C}}\mathbb{C}=[\mathbb{C},\mathbb{C}]=0\,.$

\item The vertical local basis vectors $\dot{\partial}_{i}$ are homogeneous
of degree -1, as $[\mathbb{C},\dot{\partial}_{i}]=-\dot{\partial}_{i}.$

\item The\textit{\ natural tangent structure of }$TM,$\textit{\ }%
\begin{equation}
J=dx^{i}\otimes \dot{\partial}_{i}  \label{tangent structure}
\end{equation}%
is a globally defined, $\left( -1\right) $-homogeneous tensor of type (1,1).
Homogeneity follows from:%
\begin{equation*}
\mathfrak{L}_{\mathbb{C}}J=\mathfrak{L}_{\mathbb{C}}(dx^{i})\otimes \dot{%
\partial}_{i}+dx^{i}\otimes \lbrack \mathbb{C},\dot{\partial}%
_{i}]=0-dx^{i}\otimes \dot{\partial}_{i} = -J,
\end{equation*}%
where we have used 2.\ and $\mathfrak{L}_{\mathbb{C}}(dx^{i})=d\mathbf{i}_{%
\mathbb{C}}dx^{i}+\mathbf{i}_{\mathbb{C}}ddx^{i}=0$.
\end{enumerate}

\begin{definition}[Homogeneous nonlinear connection]
A nonlinear connection $T\mathcal{Q}=H\mathcal{Q}\oplus V\mathcal{Q}$ on the
conic subbundle $\mathcal{Q\subset }\overset{\circ }{TM},$ is called \textit{%
homogeneous,} if fiber homotheties preserve the horizontal subbundle,\ i.e., $(\chi _{\alpha })_{\ast }X\in H\mathcal{Q}$ for all $%
\alpha >0$ and all $X\in H\mathcal{Q}$.

\end{definition}
As it has been shown in \cite[Prop. 2.10.1]{Bucataru}, or in \cite[Cor.
7.5.10]{Szilasi2011}, that a nonlinear connection on $TM$ is homogeneous if and
only if the almost product structure $\mathfrak{P}=\mathfrak{h}-\mathfrak{v}$
is $0$-homogeneous; the result holds without modifications on $\mathcal{Q}%
\subset TM$.

In coordinates, homogeneity of a connection is characterized by the fact
that its coefficients $G_{~j}^{i}=G_{~j}^{i}\left( x,\dot{x}\right) $ are $1$%
-homogeneous functions in $\dot x$. An example of a homogeneous nonlinear connection is the Cartan nonlinear connection, \eqref{def:nlin}, of a Finsler spacetime.

Almost all Finsler geometric objects discussed above are \textit{anisotropic tensor fields,} which thus deserve
a special mentioning here. These can be mapped into specific tensor fields on the tangent bundle, called \textit{distinguished tensor fields}, or \textit{d-tensor fields}; for the latter, homogeneity can be discussed in a natural manner.

\begin{definition}\label{def:aniso}, \cite{Javaloyes_IJGMMP_2019}:
An anisotropic tensor field on the conic subbundle $\mathcal{Q\subset }\overset%
{\circ }{TM}$ is a section of the pullback bundle $\pi _{TM|\mathcal{Q}}^{\ast }(%
\mathcal{T}_{q}^{p}M),$ i.e.\ , a smooth mapping:

\begin{equation*}
T:\mathcal{Q}\rightarrow \mathcal{T}_{q}^{p}M,~\left( x,\dot{x}\right)
\mapsto T_{\left( x,\dot{x}\right) };
\end{equation*}%
i.e., for any $\left( x,\dot{x}\right) \in \mathcal{Q},$ $T_{\left( x,\dot{x}%
\right) }$ is a tensor on $M,$ based at $x=\pi _{TM}(x,\dot{x})$.
\end{definition}

Consequently, an anisotropic tensor field will be locally expressed as: $T_{\left( x,%
\dot{x}\right) }=T_{j_{1}...j_{q}}^{i_{1}...i_{p}}\left( x,\dot{x}\right)
(\partial _{i_{1}}\otimes ...\otimes \partial _{i_{p}}\otimes
dx^{j_{1}}\otimes ...\otimes dx^{j_{q}})|_{x}.$

In the presence of a nonlinear connection $N$ on $TM$ the following definition makes sense.

\begin{definition}, \cite{Bucataru}:
A d-tensor field on a conic subbundle $\mathcal{Q}\subset \overset{\circ}{TM}$ (regarded as a manifold) is a tensor field ${T\in }\mathcal{T}_{q}^{p}(\mathcal{Q})$, obeying the condition:
\begin{equation*}
T\left( \omega _{1},...,\omega _{p},V_{1},...,V_{q}\right) =T\left(
\varepsilon _{1}\omega _{1},...\varepsilon _{p}\omega _{p},\varepsilon
_{p+1}V_{1},...,\varepsilon _{p+q}V_{q}\right) ,
\end{equation*}%
for an arbitrarily \textit{fixed} choice of the projectors $\varepsilon
_{1},..,\varepsilon _{p+q}\in \left\{ \mathfrak{h,v}\right\} .$

\end{definition}

For instance, if $V$ is an arbitrary vector field on $\mathcal{Q},$ its
horizontal and vertical components $\mathfrak{h}V$ and $\mathfrak{v}V$,
taken separately, are d-tensor fields (of type $\left( 1,0\right) $), as
each of them acts on a single specified component $\mathfrak{h}\omega $ or $%
\mathfrak{v}\omega $ of a 1-form $\omega \in \Omega _{1}(\mathcal{Q})$,
whereas their sum is typically, not a d-tensor field.

With respect to the horizontal/vertical adapted local bases of $T\mathcal{Q}$ and $T^{\ast }%
\mathcal{Q},$ a d-tensor field $T$ will be expressed as a linear combination of
tensor products of $\delta _{i},\dot{\partial}_{i},dx^{i}$ and $\delta \dot{x%
}^{i}$, i.e., ${T}_{\left( x,\dot{x}\right)
}=T_{j_{1}...j_{q}}^{i_{1}...i_{p}}\left( x,\dot{x}\right) (\delta
_{i_{1}}\otimes ...\otimes \dot{\partial}_{i_{p}}\otimes dx^{j_{1}}\otimes
...\otimes \delta \dot{x}^{j_{q}})|_{(x,\dot{x})}.$

In the presence of a \textit{homogeneous} nonlinear connection, the adapted
basis elements $\delta _{i}$ are 0-homogeneous (and, as we have seen above, $%
\dot{\partial}_{i}$ are $\left( -1\right) $-homogeneous), hence the degree
of homogeneity (if any) of a d-tensor field $T$ can be established in local
coordinates, by simply evaluating the $\dot{x}$-homogeneity degree of the
coefficients $T_{j_{1}...j_{q}}^{i_{1}...i_{p}}.$ \bigskip

\textbf{Note.} Anisotropic tensor fields can be mapped into (multiple) d-tensor fields on $\mathcal{Q} \subset TM$ via horizontal or vertical lifts determined by the nonlinear connection. Yet, when doing this, one must take into account that using horizontal lifts $\partial_{i} \mapsto \delta_{i}$, one obtains a d-tensor of different degree of homogeneity, compared to the one obtained via a vertical lift $\partial_{i} \mapsto \dot{\partial}_{i}$, due to the $-1$-homogeneity of $\dot{\partial}_{i}$.

\bigskip

Examples of canonical homogeneous d-tensors are the Liouville vector field $%
\mathbb{C}$ and the tangent structure~$J$. Further examples of homogeneous
d-tensors arise once we consider a pseudo-Finsler structure on $M$. For
instance, on a Finsler spacetime, all the tensor fields encountered in the
previous section are homogeneous d-tensor fields on $\mathcal{A},$
of some degree $m$:

\begin{itemize}
\item the Finslerian metric tensor $g=g_{ij}dx^{i}\otimes dx^{j}$ ($k=0$);

\item the curvature $R=R_{~jk}^{i}dx^{j}\otimes dx^{k}\otimes \dot{\partial}%
_{i}$ of the canonical linear connection $\left( k=0\right) $;

\item the Landsberg tensor $P=P_{~jk}^{i}dx^{j}\otimes dx^{k}\otimes \delta_i$ $\left( k=0\right) .$
\end{itemize}

Other d-tensor fields, such as the Hilbert form $\omega =F_{\cdot i}dx^{i},$ or the \emph{Reeb vector field}
\begin{equation*}
\ell=l^{i}\delta _{i},
\end{equation*}%
are only defined on $\mathcal{A}_{0}=\mathcal{A}\backslash L^{-1}\left(
0\right) ,$ since the functions $ l^{i}=\dot x^i F^{-1}$ are only
defined on $\mathcal{A}_{0}.$ Both $\omega $ and $\ell$ are $0$-homogeneous
in $\dot x$ and will play a crucial role in the following, as we will see in
Section \ref{sssec:FinsPTM+}.

An important feature of both the Chern connection $D$ (and more generally, of any
of the typical Finslerian connections in the literature) and of the
dynamical covariant derivative $\nabla $ on $T\mathcal{A},$ is that they
preserve the distributions generated by the canonical nonlinear connection $N$ and hence, they map d-tensors into d-tensors, \cite{Bucataru}. Moreover, the degree of homogeneity of d-tensors is preserved by $D$-covariant differentiation with respect to 0-homogeneous vector fields (and is increased by 1 by dynamical covariant differentiation).

\section{The positively projectivized tangent bundle $PTM^{+}$\label{Section:PTM+}}

\label{sec:PTM+}

The positively projectivized tangent bundle $PTM^{+}$ is essential for a
mathematically well defined calculus of variations on Finsler spacetimes. It
also gives a nice way to understand positively homogeneous geometric objects
on $TM$, such as the Finsler function, or d-tensors, as sections of bundles
sitting over $PTM^{+}$, which we will discuss in detail in Section \ref%
{sec:Fields}.

We will first introduce $PTM^{+}$ over general manifolds before we formulate
the geometry of Finsler spacetimes on $PTM^{+}$. This reformulation is
important to construct well defined integrals of homogeneous functions. We will show that integration on domains in $PTM^{+}$ is actually
equivalent to integration over subsets of the observer space $\mathcal{O}$, with the
advantage that $PTM^{+}$ is explicitly independent of the
Finsler Lagrangian $L$, whereas the observer space (and therefore, all its subsets, which one may use as integration domains) are defined in terms of $L$.

\subsection{Definition and structure over general manifolds}

We first give the definition of the positively projectivized tangent bundle,
before we analyze its structure and point out how objects on $PTM^{+}$ are
related to $0$-homogeneous objects on $TM$. In the context of Finsler spacetimes, the positively projectivized tangent bundle was briefly discussed in \cite{Hohmann:2018rpp}. In the literature on positive definite Finsler geometry, $PTM^+$ is typically called the \textit{projective sphere bundle}.

Actually, in positive definite Finsler geometry, this bundle is interchangeably used with the \textit{indicatrix bundle}, as the two bundles are globally diffeomorphic. But, in Lorentzian Finsler geometry, as we will see below, this diffeomorphism does no longer exist, hence, in order to avoid any confusion, we preferred to make a clear distinction by the used terminology.

\subsubsection{Definition and structure}

\begin{definition}[The positive, or oriented, projective tangent bundle]
Let $M$ be a connected, orientable smooth manifold of dimension $n$. The
positive projective tangent bundle is defined as the quotient
space
\begin{equation}
PTM^{+}:=\overset{\circ }{TM}_{/\sim }  \label{PTM+}
\end{equation}%
where $\sim $ is the equivalence relation on $\overset{\circ }{TM}$ given
by:
\begin{equation}
(x,\dot{x})~\sim ~\left( x,\dot x'\right) \Leftrightarrow ~\dot x'=\alpha \dot{x}~~\text{%
for~~some~\ }\alpha >0\,.  \label{equiv_rel_PTM+}
\end{equation}
\end{definition}

In other words we identify the half-line $\left\{ (x,\alpha \dot{x}%
)~|~\alpha >0\right\} $ as a single point. We denote by
\begin{equation*}
\pi ^{+}:\overset{\circ }{TM}\rightarrow PTM^{+},(x,\dot{x})\mapsto \lbrack
(x,\dot{x})]
\end{equation*}%
the canonical projection.

The usual projectivized tangent bundle $PTM$ is obtained from $PTM^{+}$ by
deleting the distinction between positive and negative scaling factors, in
other words:%
\begin{equation*}
PTM=PTM^{+}{}_{/\mathbb{Z}_{2}}.
\end{equation*}
Conversely, by attaching orientations to the lines representing points of $%
PTM,$ one gets $PTM^{+}.$ In other words, $PTM^{+}$ is the  \textit{canonical oriented double covering} (also called orientation covering in (\textit{\cite{Lee}, } p. 394)) of the $(2n-1)$-dimensional manifold $PTM,$ in particular, it is always orientable. The above discussion can be summarized as follows.

\begin{proposition}
If $M$ is a connected smooth $n$-dimensional manifold, then $PTM^{+}$ is a
smooth, orientable manifold of dimension $2n-1$.
\end{proposition}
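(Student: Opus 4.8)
The plan is to realize $PTM^{+}$ as the orbit space of a free and proper action of a one-dimensional Lie group, so that its smooth structure and dimension follow from the quotient manifold theorem, and then to deduce orientability from the orientability of $\overset{\circ}{TM}$ together with the nowhere-vanishing Liouville vector field.

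First I would note that, directly from Definition~\ref{def:homot} and \eqref{chi}, $PTM^{+}=\overset{\circ}{TM}/\mathbb{R}_{+}^{\ast}$ is the orbit space of the smooth action $\chi$ of the one-dimensional Lie group $(\mathbb{R}_{+}^{\ast},\cdot)$ by fiber homotheties. This action is \emph{free}, since $\chi_{\alpha}(x,\dot{x})=(x,\dot{x})$ with $\dot{x}\neq 0$ forces $\alpha=1$; and it is \emph{proper}. For properness I would fix an auxiliary Riemannian metric on $M$ (available by paracompactness) with induced fiber norm $\|\cdot\|$: on any compact $K\subset\overset{\circ}{TM}$ one has $\|\dot{x}\|\in[a_{K},b_{K}]$ with $0<a_{K}\le b_{K}$, so for compact $K_{1},K_{2}\subset\overset{\circ}{TM}$ the set $\{(\alpha,(x,\dot{x})):(x,\dot{x})\in K_{2},\ \chi_{\alpha}(x,\dot{x})\in K_{1}\}$ is a closed subset of $[a_{K_{1}}/b_{K_{2}},\,b_{K_{1}}/a_{K_{2}}]\times K_{2}$, hence compact. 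The quotient manifold theorem (see~\cite{Lee}) then gives at once that $PTM^{+}$ is a smooth manifold, that $\pi^{+}$ is a smooth surjective submersion, and that $\dim PTM^{+}=\dim\overset{\circ}{TM}-\dim\mathbb{R}_{+}^{\ast}=2n-1$. (Equivalently, one may build a direct atlas: over a chart $(U,x^{i})$ of $M$ and for each index $k$ and sign $\epsilon\in\{+,-\}$, the set $\{[(x,\dot{x})]\in PTM^{+}:x\in U,\ \epsilon\dot{x}^{k}>0\}$ is coordinatized by the $x^{i}$ together with the $n-1$ ratios $\dot{x}^{j}/(\epsilon\dot{x}^{k})$, $j\neq k$, with visibly smooth overlap maps; since the fiber over $x$ is the set of rays in $T_{x}M\setminus\{0\}$, i.e.\ diffeomorphic to $S^{n-1}$, this again yields $\dim PTM^{+}=n+(n-1)=2n-1$.)

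For orientability I would first record that $TM$ itself is orientable: the transition maps of its natural atlas, $(x,\dot{x})\mapsto(\tilde{x}(x),\tfrac{\partial\tilde{x}}{\partial x}\dot{x})$, have Jacobian in block-triangular form with both diagonal blocks equal to $\partial\tilde{x}/\partial x$, hence Jacobian determinant $(\det(\partial\tilde{x}/\partial x))^{2}>0$; restricting to the open submanifold $\overset{\circ}{TM}$ gives an orientation of $\overset{\circ}{TM}$. Next, the Liouville vector field $\mathbb{C}=\dot{x}^{i}\dot{\partial}_{i}$, which generates $\chi$, is globally defined and nowhere vanishing on $\overset{\circ}{TM}$, so the line subbundle of $T\overset{\circ}{TM}$ spanned by $\mathbb{C}$ — namely the tangent bundle to the $\chi$-orbits — is trivial, in particular oriented. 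Hence the quotient vector bundle $T\overset{\circ}{TM}/\langle\mathbb{C}\rangle$ inherits an orientation, and since $d\pi^{+}$ identifies it canonically with the pullback $(\pi^{+})^{\ast}(TPTM^{+})$, the latter is oriented. Finally, because $\pi^{+}$ is a surjective submersion whose fibers are the orbits, each diffeomorphic to the connected set $(0,\infty)$, a continuous orientation of $(\pi^{+})^{\ast}(TPTM^{+})$ is necessarily locally constant — hence constant — along each fiber, and therefore descends to an orientation of $TPTM^{+}$. Thus $PTM^{+}$ is orientable. (Alternatively, in the spirit of the remark preceding the proposition: $\pi^{+}$ exhibits $PTM^{+}$ as the canonical orientation double covering of $PTM$, which is orientable by construction, cf.~\cite{Lee}.)

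The main obstacle, to the extent there is one, is the properness of the $\mathbb{R}_{+}^{\ast}$-action (equivalently, the Hausdorffness and second countability of the quotient), but this follows immediately from $\overset{\circ}{TM}$ being a bundle of cones, since any background Riemannian metric confines the scaling parameter to a compact interval over compact sets; and, in the orientability step, the descent of the orientation along the connected fibers of $\pi^{+}$. Freeness, the dimension count, and the block-triangular Jacobian computation are routine.
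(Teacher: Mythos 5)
Your proof is correct, but its primary route is genuinely different from the paper's. The paper does not invoke the quotient manifold theorem here: it obtains the smooth structure and the dimension $2n-1$ by writing down the explicit atlas $\{(U^{+},\phi^{+})\}$ with the ratio coordinates $(x^{i},u^{\alpha})$, $u^{\alpha}=\dot{x}^{\alpha}/\dot{x}^{i}$ (exactly your parenthetical alternative), and it obtains orientability by identifying $PTM^{+}$ with the canonical orientation double covering of the $(2n-1)$-dimensional manifold $PTM$, which is orientable by general theory (\cite{Lee}). You instead derive the manifold structure from the free and proper $\mathbb{R}_{+}^{\ast}$-action --- your properness argument via an auxiliary Riemannian fiber norm is sound, and this is essentially the principal-bundle structure that the paper only records afterwards, in Proposition~\ref{prop:ptm+bundle} --- and you prove orientability directly by descending the orientation of $\overset{\circ}{TM}$ (orientable because tangent-bundle transition Jacobians are squares of determinants) through the quotient by the line bundle trivialized by the Liouville field, using connectedness of the fibers of $\pi^{+}$. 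Your orientability argument is self-contained and avoids the detour through $PTM$ and its orientation covering; the paper's route gets the smooth structure essentially for free from the explicit atlas, which it needs anyway since the induced (homogeneous) coordinates are used throughout the rest of the text. Both arguments are complete; the only cosmetic slip on your side is that it is the $\mathbb{Z}_{2}$-quotient map $PTM^{+}\rightarrow PTM$, not $\pi^{+}$, that exhibits $PTM^{+}$ as the orientation covering of $PTM$.
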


The orientability of $PTM^{+}$ is essential when considering integrals on $%
PTM^{+}.$

\bigskip

The smooth structure on $PTM^{+}$ is constructed as follows. Start with an atlas on $TM$, induced by an atlas on $M$ and denote by $(x^{i},\dot{x}^{i})$ the corresponding coordinate functions; then, for each local chart domain $U \in TM$ and each $i=0,\dots ,n-1$, define the open sets: $U_{i}=\left\{ (x,\dot{x})\in TU~|~\dot{x}^{i}>0\right\} ,$ $U_{i'}=\left\{ (x,\dot{x})\in U~|~\dot{x}^{i}<0\right\} .$ Then, for each $U^{+} \in \{ \pi^{+}(U_{i}), \pi^{+}(U_{i'}) \} $ and each $[(x,\dot{x})]\in U^{+}$, we
define the diffeomorphisms $\phi^{+}:=(x^{i},u^{\alpha })$ as:%
\begin{equation}
(x^{i},u^{\alpha })=(x^{0},...,x^{n-1},\dfrac{\dot{x}^{0}}{\dot{x}^{i}},...,%
\dfrac{\dot{x}^{i-1}}{\dot{x}^{i}},\dfrac{\dot{x}^{i+1}}{\dot{x}^{i}},...,%
\dfrac{\dot{x}^{n-1}}{\dot{x}^{i}}).  \label{standard coords PTM+}
\end{equation}%
The result is a differentiable atlas $\{ (U^{+},\phi^{+} )\} $ on $PTM^{+}.$

\bigskip

Using these charts, a quick direct computation shows that the projection $%
\pi ^{+}:\overset{\circ }{TM}\rightarrow PTM^{+},$ $(x^{i},\dot{x}%
^{i})\mapsto (x^{i},u^{\alpha })$ is a submersion. Since, obviously, $\pi
^{+}$ is surjective, it follows that $(\overset{\circ }{TM},\pi
^{+},PTM^{+})$ is a fibered manifold; actually, it posseses an even richer
structure, as has already been pointed out in \cite{Hohmann:2018rpp}. Let us
briefly recall this result:

\begin{proposition}[The principal bundle $(\protect\overset{\circ }{TM},%
\protect\pi ^{+},PTM^{+},\mathbb{R^*_+})$]\label{prop:ptm+bundle}
The slit tangent bundle $\overset{\circ }{TM}$ is a principal bundle over $%
PTM^{+},$ with fiber $(\mathbb{R}_{+}^{\ast },\cdot ).$
\end{proposition}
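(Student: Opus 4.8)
The plan is to exhibit a free, proper, fiber-preserving right action of the multiplicative group $(\mathbb{R}^\ast_+,\cdot)$ on $\overset{\circ}{TM}$ whose orbits are precisely the equivalence classes $[(x,\dot x)]$, and then invoke the standard quotient-manifold theorem for Lie group actions to conclude that $\pi^+\colon\overset{\circ}{TM}\to PTM^+$ is a principal bundle. First I would define the action $\overset{\circ}{TM}\times\mathbb{R}^\ast_+\to\overset{\circ}{TM}$ by $((x,\dot x),\alpha)\mapsto(x,\alpha\dot x)$; this is exactly the restriction of the fiber-homothety action $\chi$ from Definition~\ref{def:homot} (equation~\eqref{chi}), which has already been recorded as a smooth $\mathbb{R}^\ast_+$-action generated by the Liouville vector field $\mathbb{C}$. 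It is manifestly a right action since $\chi_\alpha\circ\chi_\beta=\chi_{\alpha\beta}$, and it preserves fibers of $\pi_{TM}$ since it only rescales $\dot x$.

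Next I would verify the three conditions needed. Freeness: if $(x,\alpha\dot x)=(x,\dot x)$ with $\dot x\neq 0$, then $\alpha=1$; so no nontrivial group element fixes any point (here the slit tangent bundle — removing the zero section — is exactly what makes the action free). Properness: since $\mathbb{R}^\ast_+$ is not compact this needs a small argument; I would note that the map $\overset{\circ}{TM}\times\mathbb{R}^\ast_+\to\overset{\circ}{TM}\times\overset{\circ}{TM}$, $((x,\dot x),\alpha)\mapsto((x,\alpha\dot x),(x,\dot x))$, is proper because one can recover $\alpha$ continuously from $(x,\alpha\dot x)$ and $(x,\dot x)$ (e.g.\ as the ratio of any common nonzero component, or as $\|\alpha\dot x\|/\|\dot x\|$ in an auxiliary Riemannian metric), so preimages of compact sets are compact. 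Orbit identification: the orbit of $(x,\dot x)$ is $\{(x,\alpha\dot x):\alpha>0\}$, which is precisely the $\sim$-class defining $PTM^+$ in~\eqref{equiv_rel_PTM+}; hence the orbit space coincides with $PTM^+$ and the quotient map with $\pi^+$.

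With these facts, the quotient-manifold theorem for proper free Lie group actions gives at once that the orbit space is a smooth manifold (consistent with the already-established smooth structure and the submersion property of $\pi^+$) and that $\pi^+\colon\overset{\circ}{TM}\to PTM^+$ is a smooth principal $\mathbb{R}^\ast_+$-bundle; in particular local trivializations exist, and in the standard charts~\eqref{standard coords PTM+} one obtains them explicitly by taking the rescaling parameter (the value of $\dot x^i$, say, on $U_i$) as the fiber coordinate, with transition functions valued in $\mathbb{R}^\ast_+$ given by ratios $\dot x^i/\dot x^j$. The main obstacle is the verification of properness of the action, since $\mathbb{R}^\ast_+$ is noncompact; the key point to handle carefully there is that the action is fiberwise, so properness reduces to the elementary fact that on each punctured vector space $T_xM\setminus\{0\}$ the scalar $\alpha$ depends continuously and properly on the pair $(\alpha\dot x,\dot x)$, and this estimate is uniform enough to patch over $M$ using a background Riemannian metric. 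Everything else is routine.
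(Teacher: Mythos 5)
Your proposal is correct, and it rests on the same underlying object as the paper's proof, namely the fiber-homothety action $\chi$ of $(\mathbb{R}_{+}^{\ast},\cdot)$ on $\overset{\circ}{TM}$. The difference is one of completeness of the argument. The paper's proof is very terse: it observes that $\chi$ preserves the fibers $(\pi^{+})^{-1}[(x,\dot{x})]$ and that each such fiber is homeomorphic to $\mathbb{R}_{+}^{\ast}$, and otherwise leans on the previously constructed atlas \eqref{standard coords PTM+} and the already-established fact that $(\overset{\circ}{TM},\pi^{+},PTM^{+})$ is a fibered manifold; the local trivializations are left implicit. You instead verify freeness and properness of the action, identify the orbits with the equivalence classes \eqref{equiv_rel_PTM+}, and invoke the quotient-manifold theorem for free proper Lie group actions, then read off the local trivializations from the charts. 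This buys a self-contained and fully rigorous derivation of both the smooth structure on the quotient and the principal-bundle local triviality, at the cost of the properness verification (which you handle correctly by recovering $\alpha$ continuously from the pair $(\alpha\dot{x},\dot{x})$ via a background metric). Notably, your route is exactly the strategy the paper itself adopts later, in Section \ref{sec:Homog_objects_sections}, when it quotients the bundle $\overset{\circ}{Y}$ by the free proper action $H$ of $\mathbb{R}_{+}^{\ast}$; so your proof can be viewed as the special case of that later, more careful construction applied to $\overset{\circ}{Y}=\overset{\circ}{TM}$.
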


\begin{proof}
Consider $\chi $, as defined in (\ref{chi}), as the right action of the Lie
group $(\mathbb{R}_{+}^{\ast },\cdot )$ on $\overset{\circ }{TM}$. This
action preserves the fibers $(\pi ^{+})^{-1}(\left[ x,\dot{x}\right]
)=\left\{ \left( x,\alpha \dot{x}\right) ~|~\alpha \in \mathbb{R}_{+}^{\ast
}\right\} $ of $\pi ^{+},$ i.e., the half-lines with direction $(x,\dot{x}).$
Moreover, each of the fibers of $\pi ^{+}$ is obviously homeomorphic to $\mathbb{R}%
_{+}^{\ast }$.
\end{proof}

\bigskip

\noindent Actually, $PTM^{+}$ is nothing but the space of orbits of the Lie group action \eqref{chi}.

\bigskip
The Liouville vector field $\mathbb{C}$ is tangent to the fibers $(\pi
^{+})^{-1}[(x,\dot{x})]$ (i.e., it is $\pi ^{+}$\textit{-vertical}), which,
taking into account that these fibers are $1$-dimensional, means that $%
\mathbb{C}$ actually generates the tangent spaces to these fibers.

\bigskip

In its turn, $PTM^{+}$ is a fibered manifold over $M.$ More precisely, we
have the following result.

\begin{proposition}[Structure of the bundle $(PTM^{+},\protect\pi _{M},M,\mathbb{S}^{n-1})$]
The triple $(PTM^{+},\pi _{M},M)$, where $\pi _{M}:PTM^{+}\rightarrow M,$ $%
[(x,\dot{x})]\mapsto x$, is a fibered manifold with fibers diffeomorphic to
Euclidean spheres.
\end{proposition}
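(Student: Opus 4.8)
The plan is to verify the three defining properties of a fibered manifold for the projection $\pi_M$ — smoothness, the submersion property, and surjectivity — and then, separately, to identify the typical fibre with $\mathbb{S}^{n-1}$. First I would observe that $\pi_M$ is well defined: the relation \eqref{equiv_rel_PTM+} rescales only the fibre component of $(x,\dot x)$ by positive numbers, so the base point is an invariant of the class, and in fact $\pi_M\circ\pi^+=\pi_{TM}|_{\overset{\circ}{TM}}$. Working in the standard charts $(U^+,\phi^+)=(x^i,u^\alpha)$ of \eqref{standard coords PTM+}, the map $\pi_M$ is simply $(x^i,u^\alpha)\mapsto(x^i)$, from which smoothness and maximality of the rank of $d\pi_M$ (equal to $n$ everywhere) are immediate; surjectivity is clear since every $T_xM$ contains nonzero vectors. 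This already yields that $(PTM^+,\pi_M,M)$ is a fibered manifold in the sense recalled in Appendix~\ref{app:A}.

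Next I would identify the typical fibre. By construction $\pi_M^{-1}(x)=\overset{\circ}{T_xM}_{/\sim}$ is the set of oriented rays in $T_xM$. The cleanest argument is to fix a coordinate chart around $x$, use the resulting linear isomorphism $T_xM\cong\mathbb{R}^n$ together with the Euclidean norm, and note that the normalization map $\dot x\mapsto\dot x/\lvert\dot x\rvert$ is a smooth surjection $\overset{\circ}{T_xM}\to\mathbb{S}^{n-1}$ that is constant exactly on $\sim$-classes; hence it descends to a bijection $\pi_M^{-1}(x)\to\mathbb{S}^{n-1}$ whose inverse is the smooth restriction of $\pi^+$ to the unit sphere. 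To see that this bijection is a diffeomorphism for the submanifold structure that $\pi_M^{-1}(x)$ inherits from $PTM^+$, I would remark that in the chart $\phi^+$ attached to $U_i$ the fibre coordinates $u^\alpha$, i.e.\ the ratios $\dot x^\beta/\dot x^i$ with $\beta\neq i$, are precisely the central (gnomonic) projection of the open hemisphere $\{v\in\mathbb{S}^{n-1}:v^i>0\}$ onto $\mathbb{R}^{n-1}$; since the hemispheres $\{v^i>0\}$ and $\{v^i<0\}$, $i=0,\dots,n-1$, cover $\mathbb{S}^{n-1}$, the restrictions of the $PTM^+$-charts to a single fibre form an atlas of the standard differentiable structure on $\mathbb{S}^{n-1}$, proving $\pi_M^{-1}(x)\cong\mathbb{S}^{n-1}$.

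If one additionally wants local triviality — not required by the statement as phrased, but natural to record — I would restrict a local trivialization $TM|_U\cong U\times\mathbb{R}^n$ to $\overset{\circ}{TM}|_U\cong U\times\overset{\circ}{\mathbb{R}^n}$ and quotient fibrewise by the $\mathbb{R}_+^\ast$-action of Proposition~\ref{prop:ptm+bundle}, obtaining $\pi_M^{-1}(U)\cong U\times\mathbb{S}^{n-1}$, with transition maps acting on the $\mathbb{S}^{n-1}$-factor through the induced smooth $GL(n)$-action on rays; thus $PTM^+\to M$ is in fact a smooth fibre bundle with typical fibre $\mathbb{S}^{n-1}$. No step above is a genuine obstacle: the statement is essentially bookkeeping on top of the chart construction already carried out for $PTM^+$. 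The only point I would write out in full is the compatibility of the fibrewise smooth structure inherited from $PTM^+$ with the standard structure on $\mathbb{S}^{n-1}$, which is exactly what the gnomonic-chart description settles and where the explicit form of the charts \eqref{standard coords PTM+}, rather than the mere quotient description, is actually used.
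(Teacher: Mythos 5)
Your proof is correct, and for the part that carries the actual content --- the identification of the fibres with $\mathbb{S}^{n-1}$ --- you take a genuinely different route from the paper. The paper disposes of the fibered-manifold property in one line (``$\pi_M$ is obviously a surjective submersion'') and then identifies each fibre $\pi_M^{-1}(x)$ as the orientation double covering of the projective space $PT_xM\simeq P\mathbb{R}^{n-1}$, invoking the earlier observation that $PTM^+$ is the orientation covering of $PTM$ and the standard fact that the orientation covering of real projective space is the round sphere. You instead argue directly: the normalization map $\dot x\mapsto \dot x/\lvert\dot x\rvert$ descends to a bijection of the fibre with $\mathbb{S}^{n-1}$, and the restrictions of the charts \eqref{standard coords PTM+} to a fibre are exactly the gnomonic projections of the $2n$ open hemispheres, which form an atlas for the standard smooth structure on the sphere. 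Your version is more elementary and self-contained (it never appeals to orientation coverings), and it has the merit of actually verifying that the smooth structure the fibre inherits from $PTM^+$ agrees with the standard one on $\mathbb{S}^{n-1}$ --- a compatibility the paper leaves implicit. The paper's version is shorter and conceptually tied to the $PTM^+{}_{/\mathbb{Z}_2}=PTM$ picture it has already set up. Your closing remark on local triviality goes beyond what either proof needs, but it is consistent with the bundle notation $(PTM^+,\pi_M,M,\mathbb{S}^{n-1})$ in the proposition's title and is a reasonable addition.
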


\begin{proof}
The projection $\pi _{M}$ is obviously a surjective submersion, meaning that
$(PTM^{+},\protect\pi _{M},M)$ is, indeed, a fibered manifold. Its fibers $\pi
_{M}^{-1}\left( x\right) =\left\{ [(x,\dot{x})]~|~\dot{x}\in
T_{x}M\right\} $ are orientation coverings of the projective tangent
spaces $PT_{x}M\simeq P\mathbb{R}^{n};$ but, the orientation covering of the
projective space $P\mathbb{R}^{n}$ is nothing but the round sphere $\mathbb{S%
}^{n-1}.$
\end{proof}

Moreover, $PTM^{+}$ is a \textit{natural bundle} over $M$, meaning that it
is obtained from $M$ via a covariant functor (see the Appendix for more
details on natural bundles); the natural lift to $PTM^{+}$ of a diffeomorphism $f:M \rightarrow M$ is given by $[(x, \dot{x})] \mapsto [(f(x), df_{x}(\dot{x}))]$, which is well defined by virtue of the linearity of $df_{x}$. 
On natural bundles, one can speak about general covariance of
Lagrangians, which is essential in ensuring the existence of a well defined
notion of energy-momentum tensor.

\textbf{Note:\ } As already stated above, the bundle $PTM^{+}$ is better known in the literature on positive definite Finsler spaces under the
name of \textit{projective sphere bundle} over $M$, see, e.g., \cite{Bao}; though the name is very well
justified by the above Proposition, we preferred to avoid this
terminology, in order to avoid any confusions with the
\textit{indicatrix bundle} $L=1$. This distinction is necessary
since, for positive definite Finsler structures, the fibers $%
I_{x}=L^{-1}\left( 1\right) $ of the indicatrix bundle are diffeomorphic to Euclidean spheres (i.e., diffeomorphic to the fibers of $PTM^{+}$), while in Lorentzian signature, this is no longer the case; actually, in the latter case, we have already seen that the fibers of the indicatrix bundle are generally disconnected, containing the observer spaces $\mathcal{O}_{x}$ as connected components. Moreover, it is essential for our later considerations to stress that the
construction of $PTM^{+}$ is completely independent of any pseudo-Finslerian
(or pseudo-Riemannian) structure whatsoever.

\subsubsection{From $PTM^{+}$ to $TM$ and back}

\label{sssec:PTM+TM}

Local computations on $PTM^{+}$ are much simplified if one uses \textit{%
local homogeneous coordinates} instead of the usual local coordinates $%
(x^{i},u^{\alpha })$ defining the manifold structure, in the same fashion as
on $PTM$, see \cite{Chern-Chen-Lam}.

For a given equivalence class $[(x,\dot{x})]$, local homogeneous  coordinates\footnote{Here, the word "local" is just meant to stress that these coordinates do not make sense globally, but only over a coordinate neighborhood. (Local) homogeneous coordinates are, obviously, not local coordinates as typically defined in differential geometry, since their number is equal to $2 dim(M)$, whereas the dimension of $PTM^{+}$ is $2 dim(M)-1$ .} \label{footnote_homog_coords} are defined as the coordinates $(x^{i},\dot{x}^{i})$ in the corresponding chart on $TM$ of an arbitrarily chosen representative of the class $[(x,\dot{x})];$ i.e., homogeneous coordinates are only unique up to multiplication by a positive scalar of the $\dot{x}$-coordinates.

In these coordinates, local computations on $PTM^{+}$ will become identical
to those on $TM,$ just with due care that the involved expressions in $%
\left( x^{i},\dot{x}^{i}\right) $ - which formally correspond to geometric
objects on $TM$ - should really define objects on $PTM^{+}$. A necessary
(but not always sufficient) condition is that these formally defined
geometric objects on $TM$ should be positively 0-homogeneous\ in $\dot{x}$,
i.e., invariant under the flow of $\mathbb{C}.$ Here we list the most
frequently encountered examples:

\begin{itemize}
\item \textbf{Functions.} A function $f:\overset{\circ }{TM}\rightarrow
\mathbb{R},$ $f=f(x,\dot{x})$ can be identified with a function $f^{+}$ on $%
PTM^{+}$ if and only if it is positively 0-homogeneous in $\dot{x};$ more
precisely, $f^{+}:PTM^{+}\rightarrow \mathbb{R}$ is defined by:%
\begin{equation*}
f^{+}[(x,\dot{x})]=f\left( x,\dot{x}\right) ,
\end{equation*}%
i.e., $f:=f^{+}\circ \pi ^{+}$; in homogeneous coordinates, $f^{+}$ and $f$
have identical coordinate expressions. The function $f^{+}$ is
differentiable at $[(x,\dot{x})]$ if and only if $f$ is differentiable at
one representative~$(x,\dot{x});$

\item \textbf{Vector fields.} For a vector field $X=X^{i}\partial _{i}+%
\tilde{X}^{i}\dot{\partial}_{i}\in \mathcal{X}(\overset{\circ }{TM})$, the
projection%
\begin{equation*}
X^{+}:=(\pi ^{+})_{\ast }X
\end{equation*}%
is a well defined vector field on $PTM^{+}$ if and only if $X$ is positively
0-homogeneous in $\dot{x}$, i.e., $\mathfrak{L}_{\mathbb{C}}X=0.$

This is justified as follows. Having in view that $\pi ^{+}$ is surjective,
the necessary and sufficient condition for $X^{+}$ to be a well defined
vector field on $PTM^{+}$ is that the mapping $[(x,\dot{x})]\mapsto X_{~%
\left[ (x,\dot{x})\right] }^{+}=(\pi ^{+})_{\ast }X_{\left( x,\dot{x}\right)
}$ is independent on the choice of $\left( x,\dot{x}\right) $ in the class $%
\left[ \left( x,\dot{x}\right) \right] ;$ but this means precisely
0-homogeneity of~$X$.

In coordinates, this boils down to the fact that the functions $X^{i}$ are
positively 0-homogeneous, while $\tilde{X}^{i}$ are
1-homogeneous in $\dot{x}.$

An interesting remark is that the correspondence $X\mapsto X^{+}$ is
surjective, but not injective, as all vector fields of the form $X + f\mathbb{C}$ on $TM$, where $f$ is a 0-homogeneous function, descend onto the same vector field $X^{+}\in \mathcal{X}(PTM^{+})$.

\item \textbf{Differential forms. }For differential forms $\rho $ on $%
\overset{\circ }{TM}$, $0$-homogeneity is necessary, but not sufficient in
order to be identified with differential forms on $PTM^{+}.$ The following
result (derived in \cite{Hohmann:2018rpp}) is just a coordinate-free
restatement of a similar result on $PTM,$ see \cite{Chern-Chen-Lam}:

\begin{proposition}
Let $\rho \in \Omega (TM)$ be defined on a conic subbundle of $TM.$ Then,
there exists a unique differential form $\rho ^{+}\in \Omega (PTM^{+})$ such
that $\rho =(\pi ^{+})^{\ast }\rho ^{+}$ if and only if the following
conditions are fulfilled:

\begin{enumerate}
\item $\rho $ is 0-homogeneous in $\dot{x},$ i.e.,
\begin{equation}
\mathfrak{L}_{\mathbb{C}}\rho =0;  \label{Lie deriv rho}
\end{equation}

\item $\rho $ is $\pi ^{+}$-horizontal, i.e.,
\begin{equation}
\mathbf{i}_{\mathbb{C}}\rho =0.  \label{contraction_rho_C}
\end{equation}
\end{enumerate}
\end{proposition}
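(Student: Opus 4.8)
The plan is to prove both directions of the equivalence by working with a local representative of $\rho$ on $\overset{\circ}{TM}$ and exploiting the fact that $\pi^+$ is a surjective submersion whose fibers are the orbits of the Liouville flow. First I would establish necessity: suppose $\rho = (\pi^+)^*\rho^+$ for some $\rho^+ \in \Omega(PTM^+)$. Since $\mathbb{C}$ is $\pi^+$-vertical (it generates the fibers of $\pi^+$, as recalled above), the vector $(\pi^+)_*\mathbb{C} = 0$, so $\mathbf{i}_{\mathbb{C}}\rho = \mathbf{i}_{\mathbb{C}}(\pi^+)^*\rho^+ = (\pi^+)^*(\mathbf{i}_{(\pi^+)_*\mathbb{C}}\rho^+) = 0$, giving \eqref{contraction_rho_C}. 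For \eqref{Lie deriv rho}, use that pullback commutes with exterior derivative and apply Cartan's magic formula: $\mathfrak{L}_{\mathbb{C}}\rho = \mathbf{i}_{\mathbb{C}}d\rho + d\,\mathbf{i}_{\mathbb{C}}\rho$. The second term vanishes by what we just showed; for the first, $d\rho = d(\pi^+)^*\rho^+ = (\pi^+)^*d\rho^+$ and the same vertical argument yields $\mathbf{i}_{\mathbb{C}}(\pi^+)^*d\rho^+ = 0$. Hence $\mathfrak{L}_{\mathbb{C}}\rho = 0$.

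For sufficiency, assume \eqref{Lie deriv rho} and \eqref{contraction_rho_C} hold. I would define $\rho^+$ pointwise: for $p = [(x,\dot{x})] \in PTM^+$ and tangent vectors $w_1,\dots,w_k \in T_p PTM^+$, pick a representative $(x,\dot{x}) \in \overset{\circ}{TM}$ and lifts $\tilde{w}_a \in T_{(x,\dot{x})}\overset{\circ}{TM}$ with $(\pi^+)_*\tilde{w}_a = w_a$, then set $\rho^+_p(w_1,\dots,w_k) := \rho_{(x,\dot{x})}(\tilde{w}_1,\dots,\tilde{w}_k)$. The bulk of the argument is checking this is well defined. Independence of the choice of lifts $\tilde{w}_a$ at a fixed representative follows from $\ker (\pi^+)_* = \mathrm{span}(\mathbb{C})$ together with horizontality \eqref{contraction_rho_C}: changing one lift by a multiple of $\mathbb{C}$ changes $\rho$ by a term containing $\mathbf{i}_{\mathbb{C}}\rho = 0$. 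Independence of the choice of representative along the fiber is where homogeneity enters: two representatives are related by $\chi_\alpha$ for some $\alpha > 0$, and one must compare $\rho_{\chi_\alpha(x,\dot{x})}$ pushed back along $(\chi_\alpha)_*$ with $\rho_{(x,\dot{x})}$; by Theorem~\ref{thm:homogeneity}, $\chi_\alpha^*\rho = \alpha^0\rho = \rho$, which is exactly the statement that these agree. Smoothness of $\rho^+$ then follows because $\pi^+$ admits local sections (it is a submersion), and pulling $\rho^+$ back along such a section recovers a smooth form; uniqueness is immediate from surjectivity of $\pi^+$, since $(\pi^+)^*$ is injective on forms.

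The step I expect to be the main obstacle is verifying well-definedness cleanly at the level of the quotient, i.e. checking that the two independence conditions (over the fiber of $\pi^+$ via homogeneity, and transverse to it via horizontality) interact correctly — in particular that a general lift at one representative is carried to a general lift at another representative under $(\chi_\alpha)_*$, modulo $\mathbb{C}$. This is conceptually straightforward but requires care to phrase without coordinates; in homogeneous coordinates it reduces to the observation, already noted in \cite{Chern-Chen-Lam} for $PTM$, that a form built from $dx^i$, $d\dot{x}^i$ descends precisely when its coefficients are $0$-homogeneous and the $d\dot{x}^i$ only appear in the $\mathbb{C}$-horizontal combinations, which is the coordinate translation of \eqref{Lie deriv rho}–\eqref{contraction_rho_C}. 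Since the analogous statement on $PTM$ is classical, the cleanest exposition is simply to reduce to it via the double cover $PTM^+ \to PTM$, or alternatively to cite it directly and note the argument is identical.
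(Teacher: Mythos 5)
The paper does not actually prove this proposition: it is quoted from \cite{Hohmann:2018rpp} and described as a coordinate-free restatement of the classical result on $PTM$ in \cite{Chern-Chen-Lam}, so there is no in-text argument to compare against line by line. Your proof is correct and is exactly the standard argument that those references rely on: a form descends along the surjective submersion $\pi^{+}$ (whose fibers are the connected orbits of the Liouville flow) if and only if it is horizontal, $\mathbf{i}_{\mathbb{C}}\rho=0$, and invariant, $\chi_{\alpha}^{\ast}\rho=\rho$, the latter being equivalent to $\mathfrak{L}_{\mathbb{C}}\rho=0$ by Theorem~\ref{thm:homogeneity} with $k=0$. Your necessity direction via Cartan's formula is clean, and you correctly identify the only delicate point in sufficiency, namely that well-definedness requires both conditions jointly: $(\chi_{\alpha})_{\ast}$ carries lifts at one representative to lifts at another because $\pi^{+}\circ\chi_{\alpha}=\pi^{+}$, the discrepancy between two such lifts lies in $\ker(\pi^{+})_{\ast}=\mathrm{span}(\mathbb{C})$ and is killed by horizontality, and the remaining comparison is precisely $\chi_{\alpha}^{\ast}\rho=\rho$. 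The only cosmetic remark is that since $\rho$ lives on a conic subbundle $\mathcal{Q}$, the descended form $\rho^{+}$ is defined on $\mathcal{Q}^{+}=\pi^{+}(\mathcal{Q})$; the conic property guarantees $\mathcal{Q}$ is saturated with respect to $\pi^{+}$, so your argument applies verbatim. No gap.
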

\end{itemize}

\begin{remark}\cite{Hohmann:2018rpp}
	\begin{enumerate}
		\item The projection $\pi ^{+}$ is locally represented in homogeneous
		coordinates as the identity: $\pi ^{+}:(x^{i},\dot{x}^{i})\mapsto (x^{i},%
		\dot{x}^{i})$. The latter relation tells us that the
			coordinate expressions of geometric objects on $TM$ (e.g., $f,X$, $\rho $)\
	that can be identified with geometric objects $f^{+},X^{+},\rho ^{+}$ etc. on
		 $PTM^{+},$ will be identical to the expressions of the
			latter in homogeneous coordinates.

		\item Exterior differentiation of forms $\rho ^{+}\in \Omega (PTM^{+})$ can
		be carried out identically to exterior differentiation of the corresponding
		form $\rho \in \Omega (\overset{\circ }{TM}),$ since:%
		\begin{equation*}
			d\rho =d\left( (\pi ^{+})^{\ast }\rho ^{+}\right) =\left( \pi ^{+}\right)
			^{\ast }d\rho ^{+}.
		\end{equation*}%
		In particular, differentiation of functions on $PTM^{+}$ is carried out identically to the one on $TM$.
	\end{enumerate}
\end{remark}

\subsection{Over Finsler spacetimes $(M,L)$}

After having introduced $PTM^{+}$ in the previous section, we now
demonstrate that the geometry of a Finsler spacetime can be understood in
terms of geometric objects on $PTM^{+}$. This eventually enables us to write
down the desired action integrals for field theories in a mathematically
precise way.

\subsubsection{Finsler Geometry on $PTM^+$ and volume forms}

\label{sssec:FinsPTM+} On a Finsler spacetime, we defined the conic subbundles $\mathcal{A}, \mathcal{A}_{0},\mathcal{T},\mathcal{N}\subset
\overset{\circ }{TM}$, and the observer space $\mathcal{O}$, see Section \ref%
{sec:defFins}. We will denote by a plus sign, e.g., $\mathcal{T}^{+}=\pi
^{+}(\mathcal{T}),$ $\mathcal{A}^{+}=\pi ^{+}\mathcal{(A})$ etc., their
images through $\pi ^{+}:\overset{\circ }{TM}\rightarrow PTM^{+}$; also, we
will always use local homogeneous coordinates on $PTM^{+}$.

\noindent \textbf{Canonical nonlinear connection.}

The canonical nonlinear connection $N$ on $\mathcal{A}$, see equation~\eqref%
{def:nlin} can be transplanted to $\mathcal{A}^{+}$ in a natural way, as follows.
Start with an arbitrary vector $X^{+}\in T\mathcal{A}^{+}$. As we have seen above, it always corresponds to a positively
0-homogeneous vector $X\in T\mathcal{A}$ (which is unique up to a multiple of $\mathbb{C}$). Then, $X$ is decomposed into its $N$-horizontal and vertical
components $\mathfrak{h}X=X^{i}\delta _{i}$ and $\mathfrak{v}X=\dot{X}^{i}%
\dot{\partial}_{i};$ both components are positively 0-homogeneous, due to the homogeneity of $N$, hence they descend back onto vectors $\mathfrak{h}X^{+},\mathfrak{v}X^{+}$ on $T\mathcal{A%
}^{+}$. Moreover,  $\mathfrak{h}X^{+},\mathfrak{v}X^{+}$ are uniquely defined by $X^{+}$, as the possible multiple of $\mathbb{C}$ appearing in the procedure will be projected back to $PTM^{+}$ into the zero vector.
This naturally gives rise to a splitting%
\begin{equation*}
X^{+}=\mathfrak{h}X^{+}+\mathfrak{v}X^{+},
\end{equation*}%
i.e., to a connection $N^{+}$ on $\mathcal{A}^{+}:=\pi ^{+}(\mathcal{A}):$%
\begin{equation}
T\mathcal{A}^{+}=H\mathcal{A}^{+}\oplus V\mathcal{A}^{+}.
\label{connection A+}
\end{equation}%
The vectors $\mathfrak{h}X^{+}=(\pi ^{+})_{\ast }(\mathfrak{h}X)$ and $%
\mathfrak{v}X^{+}:=(\pi ^{+})_{\ast }(\mathfrak{v}X)$ are expressed in
homogeneous coordinates as:%
\begin{equation*}
\mathfrak{h}X^{+}=X^{i}\delta _{i},~\ \ \mathfrak{v}X^{+}=\dot{X}^{i}\dot{%
\partial}_{i}.
\end{equation*}%
Similarly, the Chern-Rund connection $D$ gives rise to a linear connection $%
D^{+} $ on $\mathcal{A}^{+},$ having the same local expression of covariant
derivatives as $D$.\\

\noindent \textbf{Contact structure and volume form for the set of non-null admissible
directions.}

In the following, we will identify a
canonical volume form on the set of admissible non-null directions $\mathcal{A}%
_{0}^{+}=\pi ^{+}(\mathcal{A}_{0})$. The Hilbert form $\omega =F_{\cdot i}dx^{i}$,
defined on $\mathcal{A}_{0}$ obeys the conditions:
\begin{equation*}
\mathbf{i}_{\mathbb{C}}\omega =0,~\ \ \mathfrak{L}_{\mathbb{C}}\omega =d%
\mathbf{i}_{\mathbb{C}}\omega +\mathbf{i}_{\mathbb{C}}d\omega =0,
\end{equation*}%
which means that it can be identified with a differential form $\omega ^{+}$
on $\mathcal{A}_{0}^{+}\subset PTM^{+},$ such that $(\pi ^{+})^{\ast }\omega
^{+}=\omega ;$ in homogeneous coordinates:
\begin{equation}
\omega ^{+}=F_{\cdot i}dx^{i}  \label{Hilbert form}
\end{equation}%
and%
\begin{equation}
d\omega ^{+}=\frac{1}{F}(\epsilon g_{ij} - F_{\cdot i} F_{\cdot j} )\delta \dot{x}^{j}\wedge dx^{i}.
\end{equation}

A direct calculation, see, \cite{Chern-Chen-Lam}, shows that, for $\dim M=4,$%
\begin{equation}
\omega ^{+}\wedge ~d\omega ^{+}\wedge d\omega ^{+}\wedge d\omega ^{+}~=3!%
\dfrac{\det g}{L^{2}}\mathbf{i}_{\mathbb{C}}(d^{4}x\,\wedge d^{4}\dot{x})=3!%
\dfrac{\det g}{L^{2}}\mathrm{Vol}_{0},  \label{omega contact structure}
\end{equation}%
where
\begin{equation}\label{eq:Vol0}
\mathrm{Vol}_{0}=\mathbf{i}_{\mathbb{C}}(d^{4}x\,\wedge d^{4}\dot{x})\,,
\end{equation}
is always nonzero. In other words, the Hilbert form $\omega ^{+}$
defines a \textit{contact structure} on $PTM^{+}$.

\bigskip

In contact geometry, the \textit{Reeb vector field} $\ell^{+}\in \mathcal{X}(%
\mathcal{A}_{0}^{+})$ corresponding to the contact structure $\omega ^{+}$
is uniquely defined by the conditions%
\begin{equation}
\mathbf{i}_{\ell^{+}}(\omega ^{+})=1,~\ \mathbf{i}_{\ell^{+}}d\omega ^{+}=0.
\label{Reeb vector field conds}
\end{equation}%
In our case, this gives:

\begin{proposition}
The Reeb vector field $\ell^{+}$ corresponding to the contact structure $%
\omega ^{+}$ on $\mathcal{A}_{0}^{+}$ is expressed in local homogeneous
coordinates as:
\begin{equation*}
\ell^{+}=l^{i}\delta _{i},~\ \ l^{i}= \dfrac{\dot{x}^{i}}{\sqrt{F}}.
\end{equation*}
\end{proposition}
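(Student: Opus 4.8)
The plan is to determine the Reeb vector field directly from its two defining conditions \eqref{Reeb vector field conds}, by contracting a general vector field on $\mathcal{A}_{0}^{+}$ into $\omega^{+}$ and into $d\omega^{+}$ and solving. Since $\omega^{+}$ has already been shown in \eqref{omega contact structure} to be a genuine contact form on $\mathcal{A}_{0}^{+}$, the associated Reeb field exists and is unique by standard contact geometry, so it suffices to produce one vector field $\ell^{+}$ with $\mathbf{i}_{\ell^{+}}\omega^{+}=1$ and $\mathbf{i}_{\ell^{+}}d\omega^{+}=0$. Throughout I would work in local homogeneous coordinates and use the duality of the adapted coframe $(dx^{i},\delta\dot{x}^{i})$ and the adapted frame $(\delta_{i},\dot{\partial}_{i})$ — i.e. $\langle dx^{i},\delta_{j}\rangle=\delta^{i}_{j}$, $\langle dx^{i},\dot{\partial}_{j}\rangle=0$, $\langle\delta\dot{x}^{i},\delta_{j}\rangle=0$, $\langle\delta\dot{x}^{i},\dot{\partial}_{j}\rangle=\delta^{i}_{j}$ — all of which follow at once from $\delta_{j}=\partial_{j}-G^{k}{}_{j}\dot{\partial}_{k}$ and $\delta\dot{x}^{i}=d\dot{x}^{i}+G^{i}{}_{k}dx^{k}$.

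First I would contract a generic $\ell^{+}=l^{k}\delta_{k}+\tilde{l}^{k}\dot{\partial}_{k}$ into $\omega^{+}=F_{\cdot i}\,dx^{i}$, obtaining $\mathbf{i}_{\ell^{+}}\omega^{+}=l^{i}F_{\cdot i}$; the vertical part drops out because $\omega^{+}$ is $\pi^{+}$-horizontal. Then, from $d\omega^{+}=\tfrac{1}{F}(\epsilon g_{ij}-F_{\cdot i}F_{\cdot j})\,\delta\dot{x}^{j}\wedge dx^{i}$ (displayed just before the statement) and the Leibniz rule $\mathbf{i}_{X}(\alpha\wedge\beta)=(\mathbf{i}_{X}\alpha)\,\beta-\alpha\,(\mathbf{i}_{X}\beta)$, one gets $\mathbf{i}_{\ell^{+}}d\omega^{+}=\tfrac{1}{F}h_{ij}\,(\tilde{l}^{j}\,dx^{i}-l^{i}\,\delta\dot{x}^{j})$, with $h_{ij}:=\epsilon g_{ij}-F_{\cdot i}F_{\cdot j}$ the (pseudo-)angular metric. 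Using $F_{\cdot i}=\epsilon g_{ij}\dot{x}^{j}/F$ from \eqref{eq:HilbertForm} and Euler's theorem for the $1$-homogeneous $F$, which gives $\dot{x}^{i}F_{\cdot i}=F$, one checks immediately that $h_{ij}\dot{x}^{j}=F F_{\cdot i}-F_{\cdot i}\,F=0$; conversely, solving $h_{ij}v^{j}=0$ with the dual identity $g^{ij}F_{\cdot j}=\epsilon\dot{x}^{i}/F$ (raise an index in \eqref{eq:HilbertForm}) shows that the kernel of $h$ is exactly $\mathbb{R}\dot{x}$. Since $dx^{i}$ and $\delta\dot{x}^{j}$ are linearly independent, $\mathbf{i}_{\ell^{+}}d\omega^{+}=0$ forces both $l^{i}$ and $\tilde{l}^{i}$ to be pointwise proportional to $\dot{x}^{i}$; the component along $\dot{x}^{i}\dot{\partial}_{i}=\mathbb{C}$ is immaterial, since $\mathbb{C}$ projects to the zero vector on $PTM^{+}$, so one may take $\tilde{l}^{i}=0$ and $l^{i}=c\,\dot{x}^{i}$.

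It then remains to fix $c$ by the normalisation $\mathbf{i}_{\ell^{+}}\omega^{+}=1$: combined with $\dot{x}^{i}F_{\cdot i}=F$ this reads $cF=1$, hence $c=F^{-1}$, so $\ell^{+}=F^{-1}\dot{x}^{i}\delta_{i}$, i.e. $l^{i}=\dot{x}^{i}/F$ — in agreement with the Reeb vector field $\ell=l^{i}\delta_{i}$ already introduced in Section~\ref{sec:homog_geom_obj}. Finally I would record that this object genuinely lives on $\mathcal{A}_{0}^{+}\subset PTM^{+}$: the functions $\dot{x}^{i}/F$ are $0$-homogeneous in $\dot{x}$ (a ratio of two $1$-homogeneous quantities) and are defined precisely where $F\neq0$, i.e.\ on $\mathcal{A}_{0}$; hence $l^{i}\delta_{i}$ is $\mathfrak{L}_{\mathbb{C}}$-invariant and $\pi^{+}$-projectable, the $\delta_{i}$ themselves descending because the Cartan nonlinear connection \eqref{def:nlin} is homogeneous.

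I do not anticipate a serious obstacle: the argument amounts to two interior-product computations together with Euler's theorem, and the formula for $d\omega^{+}$ is already available. The only steps requiring care are the duality and sign bookkeeping for the adapted frame and coframe — so that $\omega^{+}$ contracts to exactly $l^{i}F_{\cdot i}$ and the $dx^{i}$- and $\delta\dot{x}^{j}$-components of $\mathbf{i}_{\ell^{+}}d\omega^{+}$ separate cleanly — and the homogeneity bookkeeping certifying that the result descends to $PTM^{+}$ rather than living only on $\overset{\circ}{TM}$.
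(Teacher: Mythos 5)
Your argument is correct, and it follows the same computational core as the paper's proof -- the two interior products $\mathbf{i}_{\ell^{+}}\omega^{+}$ and $\mathbf{i}_{\ell^{+}}d\omega^{+}$, evaluated with the identities $\dot{x}^{j}F_{\cdot j}=F$ and $\epsilon g_{ij}\dot{x}^{j}=FF_{\cdot i}$ -- but you run it as a derivation rather than a verification: the paper simply plugs in the stated $l^{i}$ and checks the two Reeb conditions, whereas you start from a generic $l^{k}\delta_{k}+\tilde{l}^{k}\dot{\partial}_{k}$, prove that the kernel of the angular metric $h_{ij}=\epsilon g_{ij}-F_{\cdot i}F_{\cdot j}$ is exactly $\mathbb{R}\dot{x}$, and thereby show the formula is \emph{forced} (modulo the $\mathbb{C}$-direction, which you correctly dismiss as projecting to zero on $PTM^{+}$). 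That extra work buys an explicit uniqueness argument and the homogeneity check that the result genuinely descends to $\mathcal{A}_{0}^{+}$. One point worth flagging: you arrive at $l^{i}=\dot{x}^{i}/F$, whereas the proposition as printed reads $l^{i}=\dot{x}^{i}/\sqrt{F}$. Your answer is the right one -- the printed formula is a typo, since $\dot{x}^{i}/\sqrt{F}$ is $\tfrac{1}{2}$-homogeneous (so it would not descend to $PTM^{+}$) and gives $\mathbf{i}_{\ell^{+}}\omega^{+}=\sqrt{F}\neq 1$; indeed the paper's own proof invokes $F_{\cdot j}l^{j}=1$, which holds precisely for $l^{j}=\dot{x}^{j}/F$, in agreement with the definition $l^{i}=\dot{x}^{i}F^{-1}$ given earlier in Section~\ref{sec:homog_geom_obj}.
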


\begin{proof}
We have $\mathbf{i}_{\ell^{+}}\omega ^{+}=1$ and
\begin{align}
\mathbf{i}_{\ell^{+}}d\omega ^{+}
= F^{-1}(\epsilon g_{ij}- F_{\cdot i} F_{\cdot j}) \mathbf{i}_{\ell^{+}}(\delta \dot{x}^{i}\wedge dx^{j})
= F^{-1}(\epsilon g_{ij}- F_{\cdot i} F_{\cdot j}) l^{j}\delta \dot{x}^{i}=0,
\end{align}
where for the last equallity we used that $F_{\cdot j} l^j = 1$ and $\epsilon g_{ij} l^j =F_{\cdot i}$.
\end{proof}

The importance of the Reeb vector field is given by the following result.

\begin{proposition}
Let $c:[a,b]\rightarrow M,$ $s\mapsto x(s)$ be a non-lightlike admissible curve parametrized by arc
length and $C:[a,b]\rightarrow \mathcal{A}_{0}^{+},$ $s\mapsto \lbrack (x(s),\dot{x}%
(s))],$ its canonical lift. Then, $C$ is an integral curve of $\ell ^{+}$ if
and only if $c$ is an arc-length parametrized geodesic of $(M,L).$
\end{proposition}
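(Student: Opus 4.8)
The plan is to reduce the statement to a computation on $\overset{\circ }{TM}$. Writing $\tilde{c}: s\mapsto(x(s),\dot{x}(s))$ for the canonical lift of $c$ to $\overset{\circ }{TM}$, one has $C=\pi^{+}\circ\tilde{c}$ and hence $\dot{C}(s)=(\pi^{+})_{\ast}\dot{\tilde{c}}(s)$. First I would expand $\dot{\tilde{c}}(s)=\dot{x}^{i}\partial_{i}+\ddot{x}^{i}\dot{\partial}_{i}$ in the adapted basis of the Cartan nonlinear connection, using $\partial_{i}=\delta_{i}+G^{j}{}_{i}\dot{\partial}_{j}$ and the homogeneity identity $G^{j}{}_{i}\dot{x}^{i}=2G^{j}$, which gives $\dot{\tilde{c}}(s)=\dot{x}^{i}\delta_{i}+(\ddot{x}^{i}+2G^{i})\dot{\partial}_{i}$. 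Projecting with $(\pi^{+})_{\ast}$ and passing to homogeneous coordinates on $PTM^{+}$, where the adapted bases are carried over identically and the Liouville field is killed, $\dot{x}^{i}\dot{\partial}_{i}=(\pi^{+})_{\ast}\mathbb{C}=0$, yields the horizontal/vertical splitting of $\dot{C}(s)$.

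Next I would invoke the hypotheses: along a non-lightlike arc-length parametrized curve one has $F(x(s),\dot{x}(s))\equiv1$, so the Reeb field evaluated along $C$ reduces to the purely horizontal vector $\ell^{+}_{C(s)}=\dot{x}^{i}\delta_{i}$, whose horizontal part already matches that of $\dot{C}(s)$. Therefore $C$ is an integral curve of $\ell^{+}$ exactly when the vertical part of $\dot{C}(s)$ vanishes, i.e.\ $(\ddot{x}^{i}+2G^{i})\dot{\partial}_{i}=0$ in $T_{C(s)}\mathcal{A}_{0}^{+}$. Here lies the one subtlety I would flag as the main obstacle: on $PTM^{+}$ the vertical vectors $\dot{\partial}_{i}$ are not independent, they span an $(n-1)$-dimensional space subject to the single relation $\dot{x}^{i}\dot{\partial}_{i}=0$, so this vanishing condition is only $\ddot{x}^{i}+2G^{i}=\mu\,\dot{x}^{i}$ for some function $\mu$ — a priori just the pre-geodesic equation.

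The last step removes the factor $\mu$. Since $L$ is continuous and nonzero on the connected image of $\tilde{c}$, it has constant sign $\epsilon$ there, so $L(x(s),\dot{x}(s))=\epsilon F^{2}=\epsilon$ is constant; differentiating along $s$ and using $\delta_{i}L=0$ (i.e.\ $L_{,i}=G^{j}{}_{i}L_{\cdot j}$) together with $G^{j}{}_{i}\dot{x}^{i}=2G^{j}$ and Euler's identity $L_{\cdot j}\dot{x}^{j}=2L$, one obtains $0=\tfrac{d}{ds}L=L_{\cdot j}(\ddot{x}^{j}+2G^{j})=\mu\,L_{\cdot j}\dot{x}^{j}=2\mu L=2\mu\epsilon$, hence $\mu=0$. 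Thus $C$ is an integral curve of $\ell^{+}$ iff $\ddot{x}^{i}+2G^{i}=0$, the Finsler geodesic equation \eqref{geodesic_eqn}; the converse direction is the same chain of equivalences read in reverse (arc-length parametrization being part of the hypothesis throughout). All steps other than the non-injectivity bookkeeping of $(\pi^{+})_{\ast}$ on vertical vectors are routine unwindings of the homogeneity identities collected in Section~\ref{sec:Fins}.
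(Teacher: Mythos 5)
Your proof is correct and follows the same route as the paper's: expand $\dot{\tilde c}(s)$ in the adapted basis of the Cartan nonlinear connection to get $\dot x^i\delta_i+(\ddot x^i+2G^i)\dot\partial_i$, push forward by $\pi^+$, and compare with $\ell^+$, the horizontal matching being automatic from $F\equiv 1$. The one place where you go beyond the paper is worth noting: the paper simply equates homogeneous-coordinate components and asserts $\delta_s\dot x^i=0$, silently ignoring that $\ker(\pi^+)_*$ is spanned by $\mathbb{C}=\dot x^i\dot\partial_i$, so that on $PTM^+$ the vertical condition is a priori only the pre-geodesic equation $\ddot x^i+2G^i=\mu\,\dot x^i$. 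Your elimination of $\mu$ by differentiating $L(x(s),\dot x(s))=\epsilon$ along the curve (equivalently: the difference of the two vectors is tangent to the level set of $L$, to which $\mathbb{C}$ is transverse since $\mathbb{C}(L)=2L\neq 0$ on $\mathcal{A}_0$) is exactly the missing step, and it correctly uses the arc-length hypothesis. So this is the paper's argument, done more carefully.
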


\begin{proof}
In homogeneous coordinates, $\dot{C}=\dot x^i(s)\delta _{i}+\delta_s \dot{x}%
^{i}(s)\dot{\partial}_{i}$, where $\delta_s \dot{x}^{i}(s)=\ddot
x^i(s)+2G^{i}(x^{j}(s),\dot x^j(s));$ that is, $C$ is an integral curve of $%
\ell^{+}$ is and only if:%
\begin{equation*}
\dot x^i(s)=l^{i},~\delta_s \dot x^i(s) = 0.
\end{equation*}%

The first condition above is trivially satisfied by any curve parametrized by arc length, since $L(x,\dot x(s))=\mathrm{sign}(L)$; taking into account the properties
of the canonical nonlinear connection, the second condition is equivalent to
the fact that $c$ is a arc-length parametrized geodesic of $(M,L)$, see \eqref{geodesic_eqn}.
\end{proof}

The contact structure $\omega ^{+}$, now enables us to identify a
canonical volume form on $\mathcal{A}_{0}^{+}$.

\begin{definition}[Canonical volume form]
Let $(M,L)$ be a Finsler spacetime, $\mathcal{A}_{0}^{+} \subset PTM^{+}$, the set of its admissible, non-null directions
and $\omega ^{+},$ the Hilbert form on $\mathcal{A}_{0}^{+}$. Then
\begin{equation}
d\Sigma ^{+}:=\dfrac{\mathrm{sign}(\det g) }{3!}\omega ^{+}\wedge (d\omega ^{+})^{3}=%
\dfrac{\left\vert \det g\right\vert }{L^{2}}\mathrm{Vol}_{0},
\label{canonical volume form}
\end{equation}%
where $\mathrm{Vol}_{0}$ is as in \eqref{eq:Vol0}, is called the canonical volume form on $\mathcal{A}_{0}^{+}$.
\end{definition}

Note that, on $\mathcal{A}_{0}^{+}$, $g$ is nondegenerate, so, $d\Sigma ^{+}$ is well defined.

With respect to this volume form, the divergence of horizontal and vertical
vector fields, $X=X^{i}\delta _{i}$ and $Y=Y^{i}\dot{\partial}_{i}$, on $%
\mathcal{A}_{0}^{+}$, is \cite{Hohmann:2018rpp}:
\begin{align}
\mathrm{div}(X)& =(X^{i}{}_{|i}-P_{i}X^{i})\,,
\label{divergence_horizontal} \\
\mathrm{div}(Y)& =(Y^{i}{}_{\cdot i}+2C_{i}Y^{i}-\frac{4}{L}Y^{i}\dot{x}%
_{i})\,,  \label{divergence_vertical}
\end{align}%
where $P_{i}$ is the trace of the Landsberg tensor \eqref{Landsberg tensor}
and $C_{i}$ is the trace of the Cartan tensor \eqref{eq:CartanT}. For any $%
f:\mathcal{A}_{0}^{+} \rightarrow \mathbb{R}$, the above equations imply
\begin{equation}
\nabla f=\mathrm{div}(f\ell )=\mathrm{div}(fl^{i}\delta _{i}).\
\label{divergence nabla}
\end{equation}

\subsubsection{Integration on $PTM^{+}$ and integration on observer space}

In positive definite Finsler spaces, the unit sphere bundle~$L^{-1}(1)$ is
globally diffeomorphic to $PTM^{+}$, \cite{Bao}. But, passing to Finsler
spacetimes, this is no longer true; this is easy to see since the fibers $%
I_{x}=L^{-1}(1)\cap T_{x}M$ are non-compact (they are, even in the simplest
case of Lorentzian metrics, hyperboloids), while the fibers of $PTM^{+}$ are
compact. Still, we will be able to establish a correspondence between the
observer space $\mathcal{O}$ and the set of future pointing timelike
directions $\mathcal{T}^{+}:=\pi ^{+}(\mathcal{T}).$ A
preliminary result, proven in \cite{Hohmann:2018rpp}, refers to compact subsets of $\mathcal{T}.$

\begin{proposition}, see \cite{Hohmann:2018rpp}:
\label{diffeo_compact subsets}

\begin{enumerate}
\item For any admissible compact, connected subset $D\subset L^{-1}(1),$ the projection $\pi ^{+}:D\mapsto
\pi ^{+}(D)\subset \pi ^{+}(L^{-1}(1))$ is a diffeomorphism.

\item For any connected, admissible and non-null compact subset $%
D^{+}\subset \pi ^{+}(\mathcal{A}_{0})$ and any differential form $\rho ^{+}
$ on $PTM^{+}$:
\begin{equation}
\underset{D^{+}}{\int }\rho ^{+}=\underset{D}{\int }\rho ,
\label{rel_integrals_SM_PTM}
\end{equation}
where $\rho =\left( \pi ^{+}\right) ^{\ast }\rho ^{+}$ is a differential
form on $\overset{\circ }{TM}$ and $D:=\left( \pi ^{+}\right)
^{-1}(D^{+})\cap L^{-1}(1)$.
\end{enumerate}
\end{proposition}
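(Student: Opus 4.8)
The plan is to establish (1) by exhibiting an explicit smooth inverse of $\pi^{+}$ along the indicatrix, and then to deduce (2) from (1) together with the invariance of integration of differential forms under diffeomorphisms.

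For part (1), I would first pass to the open conic set $\mathcal{U}:=\{(x,\dot{x})\in\mathcal{A}\mid L(x,\dot{x})>0\}\supset L^{-1}(1)$, on which $F=\sqrt{L}$ is smooth and strictly positive. The heart of the matter is that the map
\begin{equation*}
\psi:\pi^{+}(\mathcal{U})\longrightarrow L^{-1}(1),\qquad [(x,\dot{x})]\longmapsto\bigl(x,\,\dot{x}/F(x,\dot{x})\bigr),
\end{equation*}
is well defined: if $\dot{x}'=\alpha\dot{x}$ with $\alpha>0$, the $1$-homogeneity $F(x,\alpha\dot{x})=\alpha F(x,\dot{x})$ forces $\dot{x}'/F(x,\dot{x}')=\dot{x}/F(x,\dot{x})$; it is smooth (as one checks in the standard charts on $PTM^{+}$, using smoothness and positivity of $F$ on $\mathcal{U}$); and by the $2$-homogeneity of $L$ its image lies in $L^{-1}(1)$. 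A direct computation gives $\pi^{+}\circ\psi=\mathrm{id}_{\pi^{+}(\mathcal{U})}$ and $\psi\circ(\pi^{+}|_{L^{-1}(1)})=\mathrm{id}_{L^{-1}(1)}$, and since $\pi^{+}(\mathcal{U})=\pi^{+}(L^{-1}(1))$ (every ray on which $L>0$ meets the level set $L=1$ exactly once), $\pi^{+}$ restricts to a diffeomorphism $L^{-1}(1)\to\pi^{+}(L^{-1}(1))$; restricting once more to $D$ yields the claim, and $\pi^{+}(D)$ inherits the regularity of $D$ (compactness, connectedness, boundary structure). An equivalent, more hands-on route: $\ker d\pi^{+}$ is spanned by the Liouville field $\mathbb{C}$, Euler's relation gives $\mathbb{C}(L)=2L=2\neq0$ on $L^{-1}(1)$ so $\mathbb{C}$ is transverse to that hypersurface, whence $\pi^{+}|_{L^{-1}(1)}$ is an immersion; it is injective because $L(x,\dot{x})=L(x,\alpha\dot{x})=1$ with $\alpha>0$ forces $\alpha=1$; and on the compact $D$ an injective immersion is an embedding.

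For part (2), note that on the connected set $D^{+}\subset\pi^{+}(\mathcal{A}_{0})$ the sign of $L$ along the direction is constant (it is continuous and nowhere zero there), and I take it to be positive — the case occurring in the applications, where $D^{+}\subset\mathcal{T}^{+}$; the opposite case is identical after replacing $L^{-1}(1)$ by $L^{-1}(-1)$. Then $D^{+}\subset\pi^{+}(\mathcal{U})$, and since $(\pi^{+})^{-1}(D^{+})$ is a union of rays each meeting $L^{-1}(1)$ in the single point $\psi$ of it, one gets $D=(\pi^{+})^{-1}(D^{+})\cap L^{-1}(1)=\psi(D^{+})$, so $\psi|_{D^{+}}:D^{+}\to D$ is a diffeomorphism with inverse $\pi^{+}|_{D}$. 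Orienting $D$ by transporting the orientation of $D^{+}$ through $\psi$, the change-of-variables formula for integrals of forms over oriented manifolds gives
\begin{equation*}
\int_{D}\rho=\int_{\psi(D^{+})}\rho=\int_{D^{+}}\psi^{*}\rho=\int_{D^{+}}\psi^{*}(\pi^{+})^{*}\rho^{+}=\int_{D^{+}}(\pi^{+}\circ\psi)^{*}\rho^{+}=\int_{D^{+}}\rho^{+},
\end{equation*}
using $\rho=(\pi^{+})^{*}\rho^{+}$ and $\pi^{+}\circ\psi=\mathrm{id}$; only the degree-$(2\dim M-1)$ component of $\rho^{+}$ contributes to either side.

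The only substantial step is part (1), and within it the single delicate point is upgrading "injective immersion" to "diffeomorphism onto the image"; the explicit inverse $\psi$ settles this in one line, with compactness and connectedness of $D$ playing no role there and serving only to make $\pi^{+}(D)$ a legitimate integration domain in part (2). What remains is bookkeeping: choosing $F=\sqrt{|L|}$ on the correct sign component, verifying smoothness of $\psi$ precisely because $F$ is smooth and nonvanishing on $\mathcal{U}\subset\mathcal{A}$, and tracking orientations so that the change-of-variables identity carries a $+$ sign.
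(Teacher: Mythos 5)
Your proposal is correct, and it is essentially the argument the paper relies on: the paper itself only cites \cite{Hohmann:2018rpp} for this proposition, but the proof it does give for the analogous Proposition \ref{diffeo_O_T} uses exactly your key idea — the explicit rescaling $(x,\dot{x})\mapsto(x,\dot{x}/\sqrt{L})$ as inverse to $\pi^{+}$, with smoothness read off in homogeneous coordinates where $\pi^{+}$ is the identity. One small point in your favour: you correctly observe that the statement's definition $D:=(\pi^{+})^{-1}(D^{+})\cap L^{-1}(1)$ is only meaningful when $L>0$ along $D^{+}$ (otherwise $D=\emptyset$), and your constancy-of-sign remark plus the replacement $L^{-1}(1)\rightarrow L^{-1}(-1)$ is the right way to handle the other component of $\mathcal{A}_{0}$.
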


The above result will be mostly applied to \textit{pieces} $D\subset \mathcal{O}%
\subset L^{-1}(1)$, where, by a piece $D\subset X$, we will understand, \cite{Krupka-book}, a compact $n$-dimensional submanifold of $X$ with boundary. Yet, it can be extended to the whole observer
space, as long as we integrate compactly supported differential forms.

\begin{proposition}
\label{diffeo_O_T}{\ }
In any Finsler spacetime:
\begin{enumerate}
\item The mapping $\pi ^{+}:\mathcal{O}\rightarrow \mathcal{T }^{+}$ is a
diffeomorphism;

\item for any compactly supported 7-form $\rho ^{+}$ on $\mathcal{T}^{+}:$
\begin{equation}
\underset{\mathcal{T}^{+}}{\int }\rho ^{+}=\underset{\mathcal{O}}{\int }\rho
,  \label{int_O}
\end{equation}
where $\rho =\left( \pi ^{+}\right) ^{\ast }\rho ^{+}.$
\end{enumerate}
\end{proposition}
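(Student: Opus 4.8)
The plan is to establish part 1 and then obtain part 2 as a formal consequence via the change-of-variables formula.

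For part 1, I would first record the two facts that make the assertion meaningful: $\mathcal{T}^{+}=\pi^{+}(\mathcal{T})$ is open in $PTM^{+}$, since $\pi^{+}$ is a submersion (hence an open map) and $\mathcal{T}$ is open in $\overset{\circ}{TM}$; and $\mathcal{O}$ is an embedded hypersurface of $\overset{\circ}{TM}$, because $\mathcal{O}=(L|_{\mathcal{T}})^{-1}(1)$ and $1$ is a regular value of $L|_{\mathcal{T}}$ — indeed, Euler's identity for the $2$-homogeneous function $L$ gives $\mathbb{C}(L)=\dot{x}^{i}L_{\cdot i}=2L$, which equals $2\neq 0$ along $\mathcal{O}$; in particular $\dim\mathcal{O}=2n-1=\dim PTM^{+}$. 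The restriction $\pi^{+}|_{\mathcal{O}}$ maps into $\mathcal{T}^{+}$ and is smooth; it is surjective by \eqref{T_O_rel} (equivalently, for $(x,\dot{x})\in\mathcal{T}$ the vector $v:=\dot{x}/F(x,\dot{x})$ lies in $\mathcal{T}_{x}$ by conicity and satisfies $L(x,v)=1$, so $v\in\mathcal{O}_{x}$ and $[(x,v)]=[(x,\dot{x})]$), and injective, since $v,v'\in\mathcal{O}$ with $v'=\lambda v$, $\lambda>0$, force $1=L(v')=\lambda^{2}L(v)=\lambda^{2}$, i.e. $\lambda=1$. To upgrade this bijection to a diffeomorphism I would exhibit its inverse explicitly as
\[
\Psi:\mathcal{T}^{+}\rightarrow\mathcal{O},\qquad [(x,\dot{x})]\mapsto\Bigl(x,\tfrac{\dot{x}}{F(x,\dot{x})}\Bigr),
\]
which is well defined because the functions $\dot{x}^{i}/F$ are $0$-homogeneous in $\dot{x}$, hence independent of the chosen representative, and is smooth because, over any neighbourhood $V\subset PTM^{+}$ admitting a smooth section $\sigma$ of the principal bundle $(\overset{\circ}{TM},\pi^{+},PTM^{+},\mathbb{R}_{+}^{\ast})$, it equals the fiberwise rescaling $(x,\dot{x})\mapsto(x,\dot{x}/F(x,\dot{x}))$ composed with $\sigma$ — a composition of smooth maps, as $F=\sqrt{L}$ is smooth and positive on $\mathcal{T}$ — and is independent of the choice of $\sigma$ by the well-definedness just noted. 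One checks at once that $\Psi\circ(\pi^{+}|_{\mathcal{O}})=\mathrm{id}_{\mathcal{O}}$ and $(\pi^{+}|_{\mathcal{O}})\circ\Psi=\mathrm{id}_{\mathcal{T}^{+}}$, which proves 1. (Alternatively: since $\ker d\pi^{+}$ is spanned by $\mathbb{C}$, and $\mathbb{C}$ is transverse to $\mathcal{O}$ by the computation $\mathbb{C}(L)=2\neq0$ above, $\pi^{+}|_{\mathcal{O}}$ is an immersion between manifolds of equal dimension, hence a local diffeomorphism; a bijective local diffeomorphism is a diffeomorphism.)

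For part 2, note first that $\dim PTM^{+}=2\cdot4-1=7$, so a compactly supported $7$-form $\rho^{+}$ on the open set $\mathcal{T}^{+}$ has a well-defined integral. Orient $\mathcal{O}$ by the orientation that makes $\pi^{+}|_{\mathcal{O}}$ orientation preserving (consistently with the orientation convention underlying Proposition \ref{diffeo_compact subsets}). Writing $\iota:\mathcal{O}\hookrightarrow\overset{\circ}{TM}$ for the inclusion, we have $\iota^{\ast}\rho=\iota^{\ast}(\pi^{+})^{\ast}\rho^{+}=(\pi^{+}|_{\mathcal{O}})^{\ast}\rho^{+}$, which is compactly supported on $\mathcal{O}$ because $\pi^{+}|_{\mathcal{O}}$ is a homeomorphism, so $\int_{\mathcal{O}}\rho$ makes sense. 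The change-of-variables formula for the orientation-preserving diffeomorphism $\pi^{+}|_{\mathcal{O}}$ then yields $\int_{\mathcal{T}^{+}}\rho^{+}=\int_{\mathcal{O}}(\pi^{+}|_{\mathcal{O}})^{\ast}\rho^{+}=\int_{\mathcal{O}}\rho$, i.e. \eqref{int_O}. One may equally avoid invoking the global change-of-variables formula by choosing a finite partition of unity subordinate to a cover of the compact set $\mathrm{supp}(\rho^{+})$ by connected admissible non-null pieces and applying Proposition \ref{diffeo_compact subsets} to each summand, then summing.

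The substantive content is entirely in part 1, whose only geometric input is the transversality $\mathbb{C}(L)=2L\neq0$ along $\mathcal{O}$ — equivalently, the fact that division by $F$ yields a well-defined smooth section of $PTM^{+}$ with values in $\mathcal{O}$. Part 2 is then purely formal: it is the naturality of integration under pullback by a diffeomorphism, the only things to watch being the orientation convention on $\mathcal{O}$ and the role of compact support — which is exactly what legitimizes the passage from the compact pieces of Proposition \ref{diffeo_compact subsets} to the non-compact observer space $\mathcal{O}$.
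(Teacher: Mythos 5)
Your proof is correct and follows essentially the same route as the paper: injectivity from $\alpha^{2}=1$, surjectivity by rescaling with $L^{-1/2}$, and part 2 as an immediate consequence of pulling back along the diffeomorphism. The only difference is that you flesh out the smoothness of the inverse (via the explicit map $\Psi$ or the transversality of $\mathbb{C}$ to $\mathcal{O}$) and the orientation/compact-support bookkeeping, where the paper simply appeals to the identity representation of $\pi^{+}$ in homogeneous coordinates; this is a welcome sharpening but not a different argument.
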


\begin{proof}

\begin{enumerate}
\item \textit{Injectivity}: Assume $\pi ^{+}(x,\dot{x})=\pi ^{+}(x',\dot x')$ for
some $(x,\dot{x}),(x',\dot x')\in \mathcal{O}.$ It follows that $[(x,\dot{x}%
)]=[(x',\dot x')],$ i.e., $x=x'$ and there exists an $\alpha >0$ such that $\dot x'=\alpha
\dot{x}.$ Applying $L$ to both hand sides, we find $L(x,\dot x')=\alpha ^{2}L(x,%
\dot{x});$ but, on $\mathcal{O},$ $L=1,$ which means that $\alpha ^{2}=1.$
Since $\alpha >0,$ it follows that $(x,\dot x')=(x,\dot{x}).$

\textit{Surjectivity:} Consider an arbitrary $[(x,\dot{x})]\in \mathcal{T}
^{+}.$ It means that $(x,\dot{x})\in \mathcal{T};$ as $\mathcal{T}$
is a conic subbundle of $TM,$ the vector $(x,\alpha \dot{x}),$ with $\alpha
:=L\left( x,\dot{x}\right) ^{-1/2},$ also belongs to $\mathcal{T}.$ But $%
L(x,\alpha \dot{x})=1,$ hence $(x,\alpha \dot{x})\in \mathcal{O}$. Since $%
\pi ^{+}(x,\dot{x})=\pi ^{+}(x,\alpha \dot{x})=[(x,\dot{x})],$ it follows
that $[(x,\dot{x})]\in \pi ^{+}(\mathcal{O}).$

\textit{Smoothness}: of $\pi ^{+}$ and of its inverse follow immediately,
working in homogeneous coordinates, in which $\pi ^{+}$ is represented as
the identity.

\item follows immediately from $\rho =\left( \pi ^{+}\right) ^{\ast }\rho
^{+}$ and point $1$.
\end{enumerate}
\end{proof}

In particular, the above result shows that:
\begin{equation}
\mathcal{O}^{+}=\mathcal{T}^{+}.  \label{O-T_rel}
\end{equation}

With this section we have established that integration of differential forms on the observer space of Finsler spacetimes can be understood as integration of differential forms on (subsets of) $PTM^{+}$.

\section{Fibered manifolds and fields over a Finsler spacetime}

\label{sec:Fields}

Having understood how integrals of homogeneous functions on Finsler
spacetimes can be constructed, the next step in constructing action
integrals is to understand fields (and their derivatives) as sections $%
\gamma $ into fibered manifolds $Y$ over $PTM^{+}$. But, with this aim, we
need to understand the structure of such fibered manifolds.

For an improved readability of the article, we give a detailed summary of
jet bundles over fibered manifolds and coordinate free calculus of
variations in Appendix \ref{app:A}.

\subsection{Fibered manifolds over $PTM^{+}$}
Consider a Finsler spacetime $(M,L)$ and denote by $(Y,\Pi ,PTM^{+})$ an
arbitrary fibered manifold of dimension $7+m.$ Then, $Y$ will acquire a
double fibered manifold structure:
\begin{equation}
Y\overset{\Pi }{\longrightarrow }PTM^{+}\overset{\pi _{M}}{\longrightarrow }%
M.  \label{double fibered structure}
\end{equation}%
As a consequence, $Y$ will admit an atlas consisting of fibered charts $%
\left( V,\psi \right) ,$ $\psi =(x^{i},u^{\alpha },z^{\sigma}),$ $%
i=0,...,3, $ $\alpha =0,...,2$, $\sigma =1,...,m$ on $Y,$ that are adapted
to both fibrations, i.e.\thinspace , the two projections will be represented
in these charts as:
\begin{equation*}
\Pi :(x^{i},u^{\alpha },z^{\sigma })\mapsto \left( x^{i},u^{\alpha }\right)
,\quad \pi _{M}:(x^{i},u^{\alpha })\mapsto \left( x^{i}\right) \,.
\end{equation*}%
Further, corresponding to any induced local chart $(\Pi(V),\phi ),$ $%
\phi =(x^{i},u^{\alpha })$ on $PTM^{+},$ we can introduce the homogeneous
coordinates $(x^{i},\dot{x}^{i})$, which we will sometimes denote
collectively as $(x^{A}).$ This way, we obtain on $V = \Pi^{-1}(U^+)$ the coordinate
functions
\begin{equation*}
\tilde{\psi}:=(x^{i},\dot{x}^{i},y^{\sigma })=(x^{A},y^{\sigma })
\end{equation*}%
on $V$, which we will call \textit{fibered homogeneous coordinates}. The corresponding fiber coordinate $y^\sigma$ is typically not unique, its relation to the original coordinates $(x^i,u^\alpha,z^\sigma)$ may depend on the choice of representative $(x,\dot x)$ of $[(x,\dot x)]\in PTM^+$. The precise relation will be discussed in the applications.

In fibered homogeneous coordinates, local sections (physical fields)  $\gamma
:W^{+}\rightarrow Y,\ [(x,\dot{x})]\mapsto \gamma [(x,\dot{x})]$
(where $W^{+} \subset PTM^{+}$ is open) are represented as:%
\begin{equation}
\gamma :(x^{i},\dot{x}^{i})\mapsto (x^{i},\dot{x}^{i},y^{\sigma }(x^{i},\dot{%
x}^{i})).  \label{Finslerian section}
\end{equation}%
The set of all such sections is denoted by $\Gamma (Y)$.

\begin{remark}\label{remarkHomFibCoord}
	
Generically, the physical fields we are considering are $k$-homogeneous with respect to $\dot x$. Hence their coordinate representation in fibered homogeneous coordinates satisfies
$\gamma(x,\alpha \dot x^i) = (x^i,\alpha \dot x^i, y^\sigma(x^i,\alpha\dot x^i)) = (x^i,\alpha \dot x^i,\alpha^k y^\sigma(x^i,\dot x^i)) $. 

Alternatively we could represent them in the original coordinates on $Y$ as $\gamma(x^i,u^\alpha) =  (x^i, u^\alpha,z^\sigma(x^i,u^\alpha))$, where $ z^\sigma(x^i,u^\alpha(x,\dot x)) =: z^\sigma(x^i,\dot x^i) $ is a zero-homogeneous in $\dot x$ when expressed in terms of $\dot x$, i.e.\ does not depend on the representative of $[(x,\dot x)]\in PTM^+$.
\end{remark}

On the jet bundle $J^{r}Y,$ fibered charts $(V,\tilde{\psi})$ induce the
fibered charts \footnote{We note that, since we are using \textit{homogeneous} coordinates over each chart domain, the number of coordinate functions $(y^{\sigma}_{,i},y^{\sigma}_{\dot i})$ is $8m$, not $7m$ as one would expect taking into account the dimension of the fibers of $J^{1}Y \rightarrow Y$.} $(V^{r},\tilde{\psi}^{r})$, with:%
\begin{equation*}
\tilde{\psi}^{r}=(x^{i},\dot{x}^{i},y^{\sigma },y_{~,i}^{\sigma },y_{~\cdot
i}^{\sigma },...,y_{~\cdot i_{1}\cdot i_{2}...\cdot i_{r}}^{\sigma })\,,
\end{equation*}%
where, for $k=1,...,r$ and $\gamma \in \Gamma (Y)$ locally represented
as in (\ref{Finslerian section}),%
\begin{equation*}
y_{~,i_{1}...\cdot i_{k}}^{\sigma }(x^j,\dot x^j)=\dfrac{\partial ^{k}}{%
\partial x^{i_{1}}...\partial \dot{x}^{i_{k}}}(y^{\sigma }(x^{j},\dot{x}%
^{j}))
\end{equation*}%
are all partial $x,\dot{x}$-derivatives up to the total order $k$. The
canonical projections $\Pi ^{r,s}:J^{r}Y\rightarrow J^{s}Y,$ $J_{\left( x,\dot{x%
}\right) }^{r}\gamma \mapsto J_{\left( x,\dot{x}\right) }^{s}\gamma $ (with $%
r>s$), are then represented as:
\begin{equation*}
\Pi ^{r,s}:(x^{i},\dot{x}^{i},y^{\sigma },y_{~,i_{1}}^{\sigma
},...,y_{~\cdot i_{1}\cdot i_{2}...\cdot i_{r}}^{\sigma })\mapsto (x^{i},%
\dot{x}^{i},y^{\sigma },y_{~,i_{1}}^{\sigma },...,y_{~\cdot i_{1}\cdot
i_{2}...\cdot i_{s}}^{\sigma }),
\end{equation*}%
accordingly,
\begin{equation*}
\Pi ^{r}:J^{r}Y\rightarrow PTM^{+},~(x^{i},\dot{x}^{i},y^{\sigma
},y_{~,i_{1}}^{\sigma },...,y_{~\cdot i_{1}\cdot i_{2}...\cdot
i_{r}}^{\sigma })\mapsto (x^{i},\dot{x}^{i}).\,
\end{equation*}%

In the calculus of variations, we will need two classes of differential
forms on $J^{r}Y,$ namely, horizontal forms and contact forms, see Appendix~%
\ref{app:Lagrangians} for more details.

\begin{enumerate}
\item $\Pi ^{r}$-\textit{horizontal forms} $\rho\in \Omega _{k}(J^{r}Y)$ are
defined as forms that vanish whenever contracted with a $\Pi ^{r}$%
-vertical vector field. In the natural local basis $(dx^{i},d\dot{x}%
^{i},dy^{\sigma },...dy_{~\cdot i_{1}...\cdot i_{r}}^{\sigma }),$ they are
expressed as:%
\begin{equation}
\rho =\dfrac{1}{k!}\rho _{i_{1}i_{2}...i_{k}}dx^{i_{1}}\wedge
dx^{i_{2}}\wedge ...\wedge d\dot{x}^{i_{k}},
\end{equation}%
where $\rho _{i_{1}i_{2}...i_{k}}$ are smooth functions of the coordinates
on $J^{r}Y$. In particular, Lagrangians will be characterized as $\Pi ^{r}$%
-horizontal 7-forms $\lambda =\Lambda \mathrm{Vol}_{0}$ on $J^{r}Y,$ where $\mathrm{Vol}_{0}$ is as in \eqref{eq:Vol0}.

Similarly, $\Pi ^{r,s}$-horizontal forms, $0\leq s\leq r$ are
locally generated by wedge products of $dx^{i},d\dot{x}^{i},dy^{\sigma
},dy_{~,i}^{\sigma }...,dy_{~\cdot i_{1}...\cdot i_{s}}^{\sigma }$.

\item \textit{Contact forms} on $J^{r}Y$ are, by definition, forms $\rho \in \Omega
_{k}(J^{r}Y)$ that vanish along prolonged sections, i.e., $J^{r}\gamma
^{\ast }\rho =0,~\forall \gamma \in \Gamma (Y).$ For instance,
\begin{equation}\label{eq:contactForm}
\theta ^{\sigma }=dy^{\sigma }-y_{~,i}^{\sigma }dx^{i}-y_{~\cdot i}^{\sigma
}d\dot{x}^{i},~\ \ \theta _{~,i}^{\sigma }=dy_{~,i}^{\sigma
}-y_{~,i,j}^{\sigma }dx^{j}-y_{~,i\cdot j}^{\sigma }d\dot{x}^{j}~\ \ ~\ \
etc.
\end{equation}%
are contact forms, composing the so-called contact basis $\{dx^{i},d\dot{x}%
^{i},\theta ^{\sigma },\theta _{~,i}^{\sigma },\theta _{~\cdot i}^{\sigma
},...\theta _{~\cdot i_{1}...\cdot i_{r-1}}^{\sigma
},dy_{~,i_{1}...,i_{r}}^{\sigma },...dy_{~\cdot i_{1}...\cdot i_{r}}^{\sigma
}\}$ of $\Omega (J^{r}Y)$.

An important class of contact forms are\textit{\ source forms} (or \textit{%
dynamical forms}), $\rho \in \Omega _{8}(J^{r}Y)$ that can be expressed,
corresponding to any fibered chart, as:%
\begin{equation}
\rho =\rho _{\sigma }\theta ^{\sigma }\wedge \mathrm{Vol}_{0}
\label{source_forms_PTM+}
\end{equation}%
(see the Appendix for a coordinate-free definition); Euler-Lagrange forms of
Lagrangians fall into this class.
\end{enumerate}

\bigskip

Raising to $J^{r+1}Y,$ any differential form $\rho \in \Omega _{k}(J^{r}Y)$
can be uniquely decomposed as:%
\begin{equation*}
\left( \Pi ^{r+1,r}\right) ^{\ast }\rho =h\rho +p\rho ,
\end{equation*}%
where $h\rho $ is horizontal and $p\rho $ is contact. The horizontal
component $h\rho $ is what survives of $\rho $ when pulled back by prolonged
sections $J^{r}\gamma$ (where $\gamma \in \Gamma (Y),$) i.e.,
\begin{equation}
J^{r}\gamma ^{\ast }\rho =J^{r+1}\gamma ^{\ast }(h\rho ).  \label{prop_h}
\end{equation}

The mapping $h:\Omega(J^{r}Y) \rightarrow \Omega(J^{r+1}Y)$ is a morphism of exterior
algebras, called \textit{horizontalization. }On the natural basis 1-forms,
it acts as:
\begin{equation}
hdx^{i}:=dx^{i},~hd\dot{x}^{i}=d\dot{x}^{i},~\ hdy^{\sigma }=y_{~,i}^{\sigma
}dx^{i}+y_{~\cdot i}^{\sigma }d\dot{x}^{i}~\ \ etc.
\end{equation}%
Accordingly, for any smooth function $f$ on $J^{r}Y$, we obtain:
\begin{equation}
hdf=d_{A}fdx^{A}=d_{i}fdx^{i}+\dot{d}_{i}fd\dot{x}^{i},  \label{eq:hdf2}
\end{equation}%
where $d_{i}f$ and $\dot{d}_{i}f$ represent total $x^{i}$- and, accordingly,
total $\dot{x}^{i}$-derivatives (of order $r+1$). Using (\ref{prop_h}) for $%
\rho =df,$ we find:%
\begin{equation}
\partial _{i}(f\circ J^{r}\gamma )=d_{i}f\circ J^{r+1}\gamma ,~\ \ \ \dot{%
\partial}_{i}(f\circ J^{r}\gamma )=\dot{d}_{i}f\circ J^{r+1}\gamma \,.
\label{total derivs Finsler}
\end{equation}%
Alternatively, one may use a nonlinear connection on $\mathcal{A}^{+}\subset
PTM^{+}$ (e.g., the canonical one), to introduce the \textit{total adapted
derivative} operators
\begin{equation}
\boldsymbol{\delta }_{i}:=d_{i}-G_{~i}^{j}\dot{d}_{j},
\label{total adapted derivative}
\end{equation}%
which help constructing manifestly covariant expressions. More precisely,
using these operators, we can write \eqref{eq:hdf2} as
\begin{equation}
hdf=(\boldsymbol{\delta }_{i}f)dx^{i}+(\dot{d}_{i}f)\delta \dot{x}^{i}\,.
\label{hdf_covariant}
\end{equation}%
If $f$ is a coordinate invariant scalar function, then $\mathbf{\delta }%
_{i}f $ and $\dot{d}_{i}f$ are d-tensor components.

\subsection{Fibered automorphisms}

Variations of sections and, accordingly, of actions, are given by
1-parameter groups of fibered automorphisms of $Y$. But, in the case of
Finsler spacetimes, these will also have to take into account the doubly
fibered structure of the configuration bundle $Y$. This is why we introduce:

\begin{definition}[Automorphisms of $Y$]
An \textit{automorphism} of a fibered manifold $(Y,\Pi ,PTM^{+})$ is a
diffeomorphism $\Phi :Y\rightarrow Y$ such that there exists a fibered
automorphism $\phi $ of $(PTM^{+},\pi _{M},M)$ with $\Pi \circ \Phi =\phi
\circ \Pi .$
\end{definition}

In particular, this means that there exists a diffeomorphism $\phi
_{0}:M\rightarrow M$ which makes the following diagram commute:%

\begin{equation}\label{eq:phi-phi0}
	\xymatrix{Y \ar^{\Phi}[r] \ar_{\Pi}[d] & Y \ar^{\Pi}[d]\\
		PTM^+ \ar^{\phi}[r] \ar_{\pi_M}[d] & PTM^+ \ar^{\pi_M}[d]\\
		M \ar^{\phi_0}[r] & M}
\end{equation}

Locally, a fibered automorphism of $Y$ is represented as:%
\begin{equation*}
\tilde{x}^{i}=\tilde{x}^{i}(x^{j}),~\ \overset{\cdot }{\tilde{x}}\overset{}{%
^{i}}=~\overset{\cdot }{\tilde{x}}\overset{}{^{i}}(x^{j},\dot{x}^{j}),~\ \ \
\ \tilde{y}^{\sigma }=\tilde{y}^{\sigma }(x^{i},\dot{x}^{i},y^{\mu }).
\end{equation*}%
An automorphism of $Y\ $is called \textit{strict} if it covers the identity
of $PTM^{+}$, i.e., $\phi =id_{PTM^{+}}.$

Generators of 1-parameter groups $\left\{ \Phi _{\varepsilon }\right\} $ of
automorphisms of $Y$ are vector fields $\Xi \in \mathcal{X}(Y)$ that are
projectable with respect to both projections $\Pi $ and $\pi _{M};$ in
fibered homogeneous coordinates, this is expressed as:%
\begin{equation}
\Xi =\xi ^{i}(x^{j})\partial _{i}+\dot{\xi}^{i}(x^{j},\dot{x}^{j})\dot{%
\partial}_{i}+\Xi ^{\sigma }(x^{j},\dot{x}^{j},y^{\mu })\tfrac{\partial}{\partial y^\sigma}.
\label{eq:XI}
\end{equation}%
In particular, strict automorphisms are generated by $\Pi $-vertical vector
fields $\Xi =\Xi ^{\sigma }(x^{j},\dot{x}^{j},y^{\mu })\partial/\partial y^\sigma$.

Given such a 1-parameter group $\left\{ \Phi _{\varepsilon }\right\} ,$ any
section $\gamma \in \Gamma (Y)$ is deformed into the section%
\begin{equation*}
\gamma _{\varepsilon }:=\Phi _{\varepsilon }\circ \gamma \circ \phi
_{\varepsilon }^{-1}.
\end{equation*}%
In first approximation, if $\gamma $ is locally represented as: $\gamma:\left( x^{i},\dot{x}^{i}\right) \mapsto \left( x^{i},\dot{x}^{i},y^{\sigma}\left( x^{i},\dot{x}^{i}\right) \right) ,$ then:
\begin{equation*}
\gamma _{\varepsilon }:\left( x^{i},\dot{x}^{i}\right) \mapsto \left( x^{i},%
\dot{x}^{i},y^{\sigma }\left( x^{i},\dot{x}^{i}\right) +\varepsilon (\tilde{\Xi}
^{\sigma }\circ J^{1} \gamma)|_{\left( x^{i},\dot{x}^{i}\right) }+\mathcal{O}(\varepsilon ^{2})\right),
\end{equation*}%
where
\begin{equation}
\tilde{\Xi}^{\sigma }:=(\Xi ^{\sigma }-\xi ^{i}y_{~,i}^{\sigma }-\dot\xi
^{i}y_{~\cdot i}^{\sigma }).  \label{Csi_tilde_Finslerian}
\end{equation}
The functions $\tilde{\Xi} \circ J^{1} \gamma $, defined on each local chart in the domain of $\gamma $, are commonly (though
in a somewhat sloppy manner) denoted by $\delta y^{\sigma }.$

The automorphisms $\Phi _{\varepsilon }:Y\rightarrow Y$ are prolonged into
automorphisms $J^{r}\Phi _{\varepsilon }$ of $J^{r}Y$ by the rule:
\begin{equation*}
J^{r}\Phi _{\varepsilon }(J_{(x,\dot{x})}^{r}\gamma ):=J_{\phi (x,\dot{x}%
)}^{r}\gamma _{\varepsilon }.
\end{equation*}
The generator of the 1-parameter group $\left\{ J^{r}\Phi _{\varepsilon
}\right\}$ is called the $r$-th prolongation of the vector field $\Xi $
and denoted by $J^{r}\Xi $ (see the Appendix for the precise coordinate
formula for $J^{1}\Xi $).

\subsection{Homogeneous geometric objects on $TM$ as sections\label{sec:Homog_objects_sections}}

In order to apply the apparatus of calculus of variations with Finslerian
geometric objects (e.g., Finsler function $L,$ metric tensor $g,$
nonlinear/linear connection, homogeneous d-tensors) as dynamical variables,
we will describe these geometric objects as \textit{sections} of fiber
bundles $\left( Y,\Pi ,PTM^{+}\right) $. \textit{A priori}, $k$-homogeneous Finslerian geometric objects are
(locally defined)\ sections  $f:\overset{\circ }{TM}\rightarrow \overset{%
	\circ }{Y}$ into some fiber bundle $\overset{\circ }{Y}$ sitting on $\overset%
{\circ }{TM}$ (e.g, a bundle of tensors, or a bundle of connections over $%
\overset{\circ }{TM}$ etc.). The key idea allowing us to reinterpret them as sections of a bundle $Y$ sitting over $PTM^{+}$, is that $k$-homogeneity can be
interpreted as \textit{equivariance}, with respect to the action of $\left(
\mathbb{R}_{+}^{\ast },\cdot \right) $ on the fiber bundle $\overset{\circ }{%
	Y}$ and, respectively, on the principal bundle $(\overset{\circ }{TM},\pi
^{+},PTM^{+},\mathbb{R}_{+}^{\ast }).$

The construction of the configuration bundle $\left( Y,\Pi ,PTM^{+}\right) $ follows essentially the same
line of reasoning as the one made in \cite[Sec 5.4]{Giachetta}, in the case
of principal connections and relies on factoring out the action of $\mathbb{R}_{+}^{\ast }$, from both the total space and the base of the original bundle $\overset{%
	\circ }{Y}$.

Consider a fiber bundle $\overset{\circ }{Y}~\overset{\overset{\circ }{\Pi }}%
{\rightarrow }~\overset{\circ }{TM},$ with typical fiber $Z.$ Noticing that $%
\overset{\circ }{Y}$ can be identified with the fibered product $\overset{%
	\circ }{TM}\times _{\overset{\circ }{TM}}\overset{\circ }{Y}$ (via the
isomorphism $(\overset{\circ }{\Pi },id_{\overset{\circ }{Y}})$ covering the
identity of $\overset{\circ }{TM}$), it will be convenient to abuse the notation by explicitly mentioning the base point of any element in $\overset{\circ}{Y}$. This means that we will identify elements $y\in \overset{\circ }{Y}$ as triples $\left( x,\dot{x},y\right) ,$ where $\left( x,\dot{x}\right) =~\overset{\circ }{\Pi }(y)$.
We assume that $(\mathbb{R}_{+}^{\ast },\cdot )$ acts on $\overset{\circ }{Y}
$ by fibered automorphisms:%
\begin{equation}\label{Haction1}
	H:\mathbb{R}_{+}^{\ast }\times ~\overset{\circ }{Y}~\rightarrow ~%
	\overset{\circ }{Y},~\ \ H(\alpha ,\cdot )= H_{\alpha
	}\in \textrm{Aut}(\overset{\circ }{Y})
\end{equation}%
as:%
\begin{equation}\label{Haction2}
	H_{\alpha }(x,\dot{x},y)=\left( x, \alpha \dot{x}, \alpha^{k}y \right) ,
\end{equation}%
for some fixed $k\in \mathbb{R}.$

In particular, the above rule means that:

\begin{itemize}
	\item Each automorphism ${H}_{\alpha }\in \textrm{Aut}(\overset{\circ }{Y})$
	covers the homothety $\chi _{\alpha }:\overset{\circ }{TM}\rightarrow
	\overset{\circ }{TM}$, defined in Definition~\ref{def:homot}.

	\item The action is free and proper (properness is proven by verifying that
	the mapping $f:\mathbb{R}_{+}^{\ast }\times \overset{\circ }{Y}~\rightarrow ~%
	\overset{\circ }{Y}\times \overset{\circ }{Y},$ $\left( \alpha ,x,\dot{x}%
	,y\right) \mapsto (x,\alpha \dot{x},\alpha ^{k}y,x,\dot{x},y)$ is proper;
	the latter holds as the projection of a compact subset of a Cartesian
	product onto each factor is compact).
\end{itemize}

\textbf{Note. }In the following, we do not assume a specific form of the
fiber $Z$ of $\overset{\circ }{Y},$ we just assume that, for a given $k,$
rescaling of fiber elements $y$ by the power $\alpha ^{k},$ $\forall \alpha
>0,$ makes sense; e.g., in the case of vector bundles over $\overset{\circ }{%
	TM},$ this makes sense for any $k\in \mathbb{R},$ whereas for bundles whose
fibers do not admit a rescaling of elements, one is forced to choose $k=0$. An important example of bundles $\overset{\circ }{Y}$ are the pullback bundles $\pi^*_{TM|\mathcal{D}}(\mathcal{T}^p_qM)$, whose sections are the anisotropic tensors introduced in Definition \ref{def:aniso}.

\bigskip

This way (see \cite{Lee}, Ch. 21), the space of orbits of the action $H$, i.e., the set:%
\begin{equation} \label{eq:Y}
	Y=\overset{\circ }{Y}_{/\sim },
\end{equation}%
where the equivalence relation $\sim $ is given by:%
\begin{equation*}
	(x,\dot{x},y)~\sim (x^{\prime },\dot{x}^{\prime },y^{\prime
	})~\Leftrightarrow ~\exists \alpha >0:~~(x^{\prime },\dot{x}^{\prime
	},y^{\prime })= H_{\alpha }(x,\dot{x},y),
\end{equation*}%
is a smooth manifold. Moreover, $\overset{\circ }{Y}$ becomes a principal
bundle over $Y,$ with fiber $\mathbb{R}_{+}^{\ast }$ and projection $%
\textrm{proj}_{Y}:\overset{\circ }{Y}~\rightarrow Y,$ $(x,\dot{x},y)\mapsto \lbrack x,%
\dot{x},y].$

\begin{theorem}

	\begin{enumerate}
		\item The manifold $Y=\overset{\circ }{Y}_{/\sim }$ is a fiber bundle over $%
		PTM^{+},$ with typical fiber $Z.$

		\item $k$-homogeneous sections $f:\mathcal{Q}\rightarrow \overset{\circ }{Y},
		$ where $\mathcal{Q}\subset \overset{\circ }{TM}$ \ is a conic subbundle,
		are in a one-to-one correspondence with local sections $\gamma :\mathcal{Q}%
		^{+}\rightarrow Y,$ where $\mathcal{Q}^{+}=\pi ^{+}(\mathcal{Q})\subset
		PTM^{+}.$
	\end{enumerate}
\end{theorem}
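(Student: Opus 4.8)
The plan is to deduce both claims from one structural fact: the commuting square built from $\overset{\circ}{\Pi}\colon \overset{\circ}{Y}\to\overset{\circ}{TM}$, $\mathrm{proj}_Y\colon\overset{\circ}{Y}\to Y$, $\pi^{+}\colon\overset{\circ}{TM}\to PTM^{+}$ and an induced map $\Pi$ is Cartesian, i.e.\ $\overset{\circ}{Y}$ is $\mathbb{R}_{+}^{\ast}$-equivariantly diffeomorphic to the fibred product $Y\times_{PTM^{+}}\overset{\circ}{TM}$. The first step is to note that $\overset{\circ}{\Pi}$ is $H$-equivariant --- this is exactly the statement that each $H_{\alpha}$ covers $\chi_{\alpha}$ --- so it descends to a well-defined map $\Pi\colon Y\to PTM^{+}$, $[x,\dot x,y]\mapsto\pi^{+}(x,\dot x)$. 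Since $\mathrm{proj}_Y$ is a surjective submersion and $\Pi\circ\mathrm{proj}_Y=\pi^{+}\circ\overset{\circ}{\Pi}$ is smooth, $\Pi$ is a smooth surjective submersion, so $(Y,\Pi,PTM^{+})$ is at least a fibred manifold; what remains is local triviality with typical fibre $Z$.

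For claim 1, I would use that $\pi^{+}$ is a principal $\mathbb{R}_{+}^{\ast}$-bundle (Proposition~\ref{prop:ptm+bundle}) and hence admits local sections $s\colon U^{+}\to\overset{\circ}{TM}$ --- concretely, over $\pi^{+}(U_{i})$ one may take the representative normalised by $\dot x^{i}=1$. The submanifold $S:=(\overset{\circ}{\Pi})^{-1}(s(U^{+}))\subset\overset{\circ}{Y}$ meets every $H$-orbit lying over $U^{+}$ exactly once and transversally: $\overset{\circ}{\Pi}$ maps each $H$-orbit bijectively onto a $\chi$-orbit (equivariance together with freeness of $\chi$), while $s(U^{+})$ is a slice for the $\chi$-action and $\overset{\circ}{\Pi}$ is a submersion. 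Thus $S$ is a global slice for the principal bundle $\mathrm{proj}_Y$ over $\Pi^{-1}(U^{+})$ (equivalently, the image of a local section of $\mathrm{proj}_Y$), so $\mathrm{proj}_Y$ restricts to a diffeomorphism of $S$ onto $\Pi^{-1}(U^{+})$. But $S$ is simply the restriction of the fibre bundle $\overset{\circ}{Y}\to\overset{\circ}{TM}$ to $s(U^{+})\cong U^{+}$, hence --- after shrinking $U^{+}$ if necessary --- diffeomorphic over $U^{+}$ to $U^{+}\times Z$; composing, $\Pi^{-1}(U^{+})\cong U^{+}\times Z$ as bundles over $U^{+}$, and letting $U^{+}$ range over a cover of $PTM^{+}$ proves claim 1. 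The same slice argument shows the square is Cartesian, which is used below.

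For claim 2 (recall that, $\mathcal{Q}$ being conic, $(\pi^{+})^{-1}(\mathcal{Q}^{+})=\mathcal{Q}$), I take $k$-homogeneity of a section $f\colon\mathcal{Q}\to\overset{\circ}{Y}$ to mean $H$-equivariance, $f\circ\chi_{\alpha}=H_{\alpha}\circ f$. To such an $f$ I associate $\gamma\colon\mathcal{Q}^{+}\to Y$, $\gamma(\pi^{+}(x,\dot x)):=\mathrm{proj}_Y(f(x,\dot x))$; this is independent of the chosen representative precisely because $f$ is equivariant and $\mathrm{proj}_Y$ is $H$-invariant, it is smooth since locally $\gamma=\mathrm{proj}_Y\circ f\circ s$, and $\Pi\circ\gamma=\mathrm{id}_{\mathcal{Q}^{+}}$ follows from $\overset{\circ}{\Pi}\circ f=\mathrm{id}$ and commutativity of the square. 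Conversely, given a section $\gamma\colon\mathcal{Q}^{+}\to Y$, the Cartesian property furnishes for each $(x,\dot x)\in\mathcal{Q}$ a unique element $f(x,\dot x)$ of the $H$-orbit $\mathrm{proj}_Y^{-1}(\gamma(\pi^{+}(x,\dot x)))$ lying over $(x,\dot x)$ --- namely the preimage of $(\gamma(\pi^{+}(x,\dot x)),(x,\dot x))$ under the diffeomorphism $\overset{\circ}{Y}\cong Y\times_{PTM^{+}}\overset{\circ}{TM}$. This $f$ is smooth and a section of $\overset{\circ}{\Pi}$ by construction, and it is $H$-equivariant because $H_{\alpha}f(x,\dot x)$ satisfies the two defining properties of $f(x,\alpha\dot x)$ --- it lies over $\chi_{\alpha}(x,\dot x)$ and inside the $H$-invariant fibre $\mathrm{proj}_Y^{-1}(\gamma(\pi^{+}(x,\dot x)))$ --- so uniqueness gives $f(x,\alpha\dot x)=H_{\alpha}f(x,\dot x)$. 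That the two constructions are mutually inverse is then immediate: in one direction from the uniqueness clause, in the other from $\mathrm{proj}_Y\circ f=\gamma\circ\pi^{+}$.

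I expect the only genuine difficulty to be the smoothness of the inverse maps, i.e.\ checking that the commuting square is Cartesian in the smooth and not merely set-theoretic sense; this is where the global-slice property of $S$ does the real work, and everything else is formal manipulation of equivariant maps. A convenient alternative that bypasses the slice discussion altogether is to work throughout in fibred homogeneous coordinates, in which $\pi^{+}$, $\mathrm{proj}_Y$, $\Pi$ and $\overset{\circ}{\Pi}$ are all represented by (restrictions, or coordinate projections, of) the identity, so that $k$-homogeneity of the fibre components and the claimed correspondence become manifest --- at the cost of then having to verify independence of the chosen representative, in the spirit of Remark~\ref{remarkHomFibCoord}.
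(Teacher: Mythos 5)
Your proposal is correct and follows essentially the same route as the paper: the projection $\Pi$ is descended via $H$-equivariance, local triviality is obtained by combining a local section of the principal $\mathbb{R}_{+}^{\ast}$-bundle $\pi^{+}$ with the fibre-bundle structure of $\overset{\circ}{Y}$ over $\overset{\circ}{TM}$ (the paper packages your slice $S$ as "discarding the $\mathbb{R}_{+}^{\ast}$ factor" from the product decomposition over a neighbourhood where some $\dot x^{i}$ has constant sign), and part 2 is the same representative-independence argument identifying $k$-homogeneity with $H$-equivariance. Your explicit attention to smoothness of the inverse correspondence via the Cartesian/slice property is a point the paper passes over more quickly, but it is not a different method.
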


\begin{proof}

	\begin{enumerate}
		\item First, let us define the projection:%
		\begin{equation*}
			\Pi :Y\rightarrow PTM^{+},~\ \ [\left( x,\dot{x},y\right) ]\mapsto \lbrack
			(x,\dot{x})].
		\end{equation*}%
		This mapping is independent of the choice of representatives in the class $%
		[\left( x,\dot{x},y\right) ],$ as $\Pi [ \left( x,\alpha \dot{x}%
		,\alpha ^{k}y\right) ]=[(x,\alpha \dot{x})]=[(x,\dot{x})]=\Pi [ \left(
		x,\dot{x},y\right) ]$ and surjective.

		A local trivialization of $Y$ can be obtained using the principal bundle
		structures of both $\overset{\circ }{Y}$ over $Y$ and of $\overset{\circ }{TM%
		}$ over $PTM^{+}.$ More precisely, start with $\overset{\circ }{V}~=~\overset%
		{\circ }{\Pi }\overset{}{^{-1}}(U)\subset ~\overset{\circ }{Y},$ where $U\in
		\{U_{i},U_{i^{\prime }}\}\subset ~\overset{\circ }{TM}$ is a (small enough)
		coordinate neighborhood on which, say, $\dot{x}^{3}$ keeps a constant sign
		(as introduced in Section \ref{Section:PTM+}); then, $\overset{\circ }{V}$ is diffeomorphic
		to $U\times Z.$

		But, on the one hand, using the principal bundle structure of $(\overset{%
			\circ }{TM},\pi ^{+},PTM^{+},\mathbb{R}_{+}^{\ast }),$ the coordinate
		neighborhood $U$ is diffeomorphic to $U^{+}\times \mathbb{R}_{+}^{\ast },$
		where $U^{+}=\pi ^{+}(U)$ and, on the other hand, using the principal bundle
		structure of $\overset{\circ }{Y}$ over $Y$, the coordinate neighborhood $\overset{\circ }{V}$ is in its
		turn, diffeomorphic to $V\times \mathbb{R}_{+}^{\ast },$ where $%
		V:=\textrm{proj}_{Y}(\overset{\circ }{Y}).$

		This way, a trivialization of $\overset{\circ }{Y}$ can be written as
		follows:
		\begin{equation*}
		\xymatrix{
		V \times \mathbb{R}_{+}^{\ast} \ar[r] \ar_{\overset{\circ}{\Pi}}[d] & (U^{+} \times \mathbb{R}_{+}^{\ast})\times Z \ar^{\mathrm{proj}_1}[dl] \\
				U^{+}\times \mathbb{R}_{+}^{\ast} &
				}.
		\end{equation*}%
		A system of $\mathbb{R}_{+}^{\ast }$-adapted fibered coordinate functions on
		$\overset{\circ }{Y}$ is of the form $\left( x^{i},u^{\alpha },\dot{x}%
		^{3},z^{\sigma }\right) ,$ where $\left( x^{i},u^{\alpha }\right) ,$ with $%
		u^{\alpha }=\dfrac{\dot{x}^{\alpha }}{\dot{x}^{3}},$ $\alpha =0,1,2$ are
		coordinate functions on $U^{+}\subset PTM^{+}.$

		A trivialization of $Y$ is then obtained by discarding the $\mathbb{R}%
		_{+}^{\ast }$ factor in the above diagram:%
		\begin{equation*}
		\xymatrix{
				V \ar[r] \ar_{\Pi}[d] & U^{+}\times Z \ar^{\mathrm{proj}_1}[dl] \\
				U^{+} &
				}.
		\end{equation*}%
		It remains to show that $V=\Pi ^{-1}(U^{+}):$

		$\subset :$ If $[x,\dot{x},y]\in V=\mathrm{proj}_{Y}(\overset{\circ }{V}),$ then
		there exists a representative $(x,\dot{x},y)\in \overset{\circ }{V}=\overset{\circ }{\Pi }%
		\overset{}{^{-1}}(U)$ of the class $[x,\dot{x},y].$ But then, $\Pi _{0}(x,%
		\dot{x},y)=(x,\dot{x})\in U,$ which means $[(x,\dot{x})]=\Pi \lbrack x,\dot{x%
		},y]\in U^{+},$ i.e., $[x,\dot{x},y]\in \Pi ^{-1}(U^{+}).$

		$\supset :$ Starting with $[x,\dot{x},y]\in \Pi ^{-1}(U^{+}),$ we find that $%
		[(x,\dot{x})]\in U^{+};$ picking a representative $(x,\dot{x})\in U$ of the
		class $[(x,\dot{x})],$ the corresponding representative $(x,\dot{x},y)$ of
		the class $[(x,\dot{x},y)]$ belongs to $\overset{\circ }{\Pi }\overset{}{%
			^{-1}}(U)=\overset{\circ }{V}.$ That is, the class $[(x,\dot{x},y)]$ is in $%
		\textrm{proj}_{Y}(\overset{\circ }{V})=V.$

		Using the above trivialization, the corresponding fibered coordinates on $%
		V\subset Y$ are then obtained by discarding the $\dot{x}^{3}$ coordinate from the coordinates $(x^{i},u^{\alpha },\dot{x}^{3},z^{\sigma })$ on $\overset{\circ}{Y}$
		i.e.,%
		\begin{equation*}
			\psi =(x^{i},u^{\alpha },z^{\sigma }).
		\end{equation*}

		\item Let $f:\mathcal{Q}\rightarrow \overset{%
			\circ }{Y},$ $\left( x,\dot{x}\right) \mapsto f\left( x,\dot{x}\right) \in
		\overset{\circ }{Y}_{(x,\dot{x})}$ be a $k$-homogeneous section, i.e., $f\left(
		x,\alpha \dot{x}\right) =\alpha ^{k}f\left( x,\dot{x}\right) ,$ $\forall
		\alpha >0$ and define:%
		\begin{equation*}
			\gamma :\mathcal{Q}^{+}\rightarrow Y,~\ \ \gamma \lbrack (x,\dot{x})]=[x,%
			\dot{x},f(x,\dot{x})].
		\end{equation*}%
		The mapping $\gamma $ is independent of the choice of representatives $(x,%
		\dot{x})\in \lbrack (x,\dot{x})]$ by virtue of the $k$-homogeneity of $f.$
		Moreover, $(\Pi \circ \gamma )[(x,\dot{x})]=[(x,\dot{x})]$ for all $(x,\dot{x%
		})\in \mathcal{Q},$ which makes $\gamma $ a well defined local section of $Y.
		$

		The correspondence $f\mapsto \gamma $ is obviously injective. To prove
		surjectivity, pick an arbitrary $\gamma \in \Gamma (Y)$ and define $f(x,\dot x),$
		for every representative $\left( x, \dot x \right) \in \lbrack \left( x,\dot{x}%
		\right) ],$ as the third component $y$ of the representative $(x,\dot x,y)\in
		\gamma \lbrack (x,\dot{x})];$ then, $f(x,\alpha \dot x)=\alpha ^{k}y$ by the
		definition of equivalence classes in $Y,$ which means that $f$ is a $k$%
		-homogeneous section of $\overset{\circ }{Y}.$
	\end{enumerate}
\end{proof}

\bigskip

\textbf{Note: }Coming back to our discussion before Remark~\ref{remarkHomFibCoord}. Since we fixed the group action of $\mathbb{R}^*_+$, defined in \eqref{Haction1}, \eqref{Haction2}, on $Y$, on each fibered chart domain $V=\Pi ^{-1}(U^{+}),$ we can explicitly
introduce \textit{homogeneous fibered coordinates} as the local coordinates
\begin{equation*}
	\left( x^{i},\dot{x}^{i},y^{\sigma }\right) :=(x^{i},~\dot{x}^{3}u^{0}, ~\dot{x}^{3}u^{1}, ~\dot{x}^{3}u^{2}, \dot{x}^{3};~(\dot{x}^{3})^{k}z^{\sigma })
\end{equation*}
of an arbitrarily chosen representative of the class $[x,\dot{x},y]$  where $u^{0} = \frac{\dot x^{0}}{\dot x^{3}}$ etc. These are, obviously unique up to positive rescaling, i.e., $\left( x^{i},\dot{x}^{i},y^{\sigma }\right) $ and $(x^{i},\alpha \dot{x}^{i},\alpha^{k}y^{\sigma })$ will represent the same class.

\bigskip

\begin{enumerate}
	\item \textbf{Finsler functions }$L:\mathcal{A}\rightarrow \mathbb{R}.$ In
	this case, $\overset{\circ }{Y}~=\overset{\circ }{TM}\times \mathbb{R}$ is a trivial line bundle, which means the configuration bundle $Y=\overset{\circ }{Y}_{/\sim }$ is the space of orbits of the Lie group action $H:\mathbb{R}_{+}^{\ast }\times \overset{\circ }{Y}\rightarrow ~\overset{\circ }{Y}$
	given by the fibered automorphisms:%
	\begin{equation*}
		H_{\alpha }:\overset{\circ }{Y}\rightarrow ~\overset{\circ }{Y,}\ H_{\alpha
		}(x,\dot{x},y)=(x,\alpha \dot{x},\alpha ^{2}y),~\ \ \forall \alpha >0.
	\end{equation*}%
	This way, 2-homogeneous Finsler functions are identified with local sections
	$\gamma $
	\begin{equation*}
		L\mapsto \gamma \in \Gamma (Y),~\ \gamma \lbrack (x,\dot{x})]=[x,\dot{x},L(x,%
		\dot{x})].
	\end{equation*}%
	In homogeneous fibered coordinates, the class $[x,\dot{x},L(x,\dot{x})]$ is
	represented as $(x^{i},\dot{x}^{i},L(x,\dot{x})).$

	\item \textbf{The 1-particle distribution function} \textbf{of a kinetic
		gas, }see \cite{kinetic-gas,Hohmann:2020yia}, can be understood as a 0-homogeneous mapping \textbf{\ }$%
	\varphi :\mathcal{Q}\rightarrow \mathbb{R},$ defined on some conic subbundle
	$\mathcal{Q}\subset \overset{\circ }{TM}.$ Again $\overset{\circ }{Y}~=\overset{\circ }{TM}\times \mathbb{R}$ and the configuration
	bundle is $Y=(\overset{\circ }{TM}\times \mathbb{R)}_{/\sim },$ i.e.\ the space of orbits of the  Lie group action $H:\mathbb{R}_{+}^{\ast }\times \overset{\circ }{Y}\rightarrow ~\overset{\circ }{Y}$ is given by
	\begin{equation*}
	H_{\alpha }:\overset{\circ }{Y}\rightarrow ~\overset{\circ }{Y,}\ H_{\alpha}(x,\dot{x},y)=(x,\alpha \dot{x},y),~\ \ \forall \alpha >0.
	\end{equation*}%
	The corresponding section $\gamma :\mathcal{Q}^{+}\rightarrow Y,$ $\gamma
	\lbrack (x,\dot{x})]=[x,\dot{x},\varphi (x,\dot{x})]$ is represented in
	fibered homogeneous coordinates as: $\gamma :\left( x^{i},\dot{x}^{i}\right)
	\mapsto (x^{i},\dot{x}^{i},\varphi (x,\dot{x})).$

	\item $0$\textbf{-homogeneous metric tensors }$g:\mathcal{A}\rightarrow
	T_{2}^{0}(\overset{\circ }{TM})$, $g_{(x,\dot{x})}=g_{ij}(x,\dot{x}%
	)dx^{i}\otimes dx^{j}$ (which are thus treated as tensors of type (0,2) on $%
	\overset{\circ }{TM},$ operating on horizontal vector fields $X\in \Gamma
	(HTM),$ see Section \ref{sec:homog_geom_obj}), are obtained as sections $\gamma $ of the bundle $%
	Y=\overset{\circ }{Y}_{/\sim },$ where $\overset{\circ }{Y}=T_{2}^{0}(%
	\overset{\circ }{TM})$. The  Lie group action $H:\mathbb{R}_{+}^{\ast }\times \overset{\circ }{Y}\rightarrow ~\overset{\circ }{Y}$ is given by
	\begin{equation*}
	H_{\alpha }:\overset{\circ }{Y}\rightarrow ~\overset{\circ }{Y,}\ H_{\alpha}(x,\dot{x},y)=(x,\alpha \dot{x},y),~\ \ \forall \alpha >0.
	\end{equation*}%
	In fibered homogeneous coordinates (which are naturally induced by the
	coordinates $\left( x^{i}\right) $ on $M$), these sections are represented
	as $\gamma :\left( x^{i},\dot{x}^{i}\right) \mapsto (x^{i},\dot{x}^{i},g_{ij}(x,\dot{x})).$
\end{enumerate}

Homogeneous d-tensors of any rank and any homogeneity degree can be treated
similarly.

\section{Finsler field Lagrangians, action, extremals}
\label{sec:FinslerFieldTheories} Finally, we are in the position to
explicitly construct action based field theories on Finsler spacetimes. The
Finsler-related geometric notions have been introduced in Section \ref%
{sec:Fins}. Afterwards, we discussed the proper base manifold, $PTM^{+},$ for action integrals having homogeneous fields as dynamical variables in
Section \ref{sec:PTM+}, and we demonstrated that these homogeneous fields can
be understood as sections into fiber bundles over $PTM^{+}$ in Section \ref%
{sec:Fields}.

\subsection{Actions for fields as sections of $PTM^+$}

\label{ssec:ActFeq} We now display all necessary definitions needed for well
defined action based field theories on Finsler spacetimes.

\begin{definition}[Fields]
\label{def:fields} A homogeneous field on a Finsler spacetime $(M,L)$ is a local
section $\gamma $ of a fibered manifold $(Y,\Pi ,PTM^{+})$ over the positive projective tangent bundle $PTM^{+}$.
\end{definition}

\begin{definition}[Lagrangians]
\label{def:Lag} On a configuration bundle $(Y,\Pi ,PTM^{+})$ over a Finsler spacetime $(M,L)$, a \textit{Finsler field Lagrangian of order $r$} is a $\Pi^{r}$%
-horizontal $7$-form $\lambda^+ \in \Omega _{7}(J^{r}Y)$.
\end{definition}
This definition is a particular instance of the general definition of Lagrangians given in Appendix~\ref{app:natbun}.

In homogeneous fibered coordinates, any
Lagrangian on $Y$ can be expressed as:
\begin{equation}
\lambda^+ =\Lambda d\Sigma ^{+}=\mathcal{L}\mathrm{Vol}_{0}\,.
\label{eq:Lagrangian}
\end{equation}%
where $\Lambda =\Lambda (x^{i},\dot{x}^{i},y^{\sigma },y_{,i}^{\sigma
},...,y_{\cdot i_{1}...\cdot i_{r}}^{\sigma })$ is the Lagrange function and $d\Sigma ^{+}$ is an invariant volume form on an appropriately chosen open subset $\mathcal{Q}^{+} \subset PTM^{+}$; for instance, one can choose the canonical volume form \eqref{canonical volume form} on the set $\mathcal{A}_{0}^{+} \subset PTM^{+}$ of non-null admissible directions over a Finsler spacetime; in this case, we obtain the Lagrange density
\begin{equation}
\mathcal{L}=\Lambda \frac{|\det g|}{L^{2}}.  \label{relation_Lambda_L}
\end{equation}

\bigskip

\textbf{Note.} The pulled back form $J^{r}\gamma ^{\ast }\lambda ^{+}$ (where $\gamma \in \Gamma (Y)$) is a
differential form on $PTM^{+},$
hence, it must be invariant under positive rescaling in $\dot{x}$. In
coordinates, this becomes equivalent to the result below.

\begin{proposition}
In local homogeneous coordinates corresponding to any fibered chart $%
(V^{r},\psi ^{r})$ on $J^{r}Y$, any Finsler field Lagrangian function $\Lambda
:V^{r}\rightarrow \mathbb{R}$ must obey:
\begin{equation}
\dot{x}^{i}\dot{d}_{i}\Lambda =0\,.  \label{homogeneity condition Lagrangian}
\end{equation}
\end{proposition}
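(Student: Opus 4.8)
The plan is to use the fact that, for every section $\gamma\in\Gamma(Y)$, the pulled-back form $J^{r}\gamma^{\ast}\lambda^{+}$ is a genuine differential $7$-form on an open subset of $PTM^{+}$, and then to translate this into a condition on the coordinate function $\Lambda$. First I would write $\lambda^{+}=\Lambda\, d\Sigma^{+}$ as in \eqref{eq:Lagrangian}, understanding ``$d\Sigma^{+}$'' as $(\Pi^{r})^{\ast}$ of a fixed invariant volume form on $\mathcal{Q}^{+}\subset PTM^{+}$, e.g.\ the canonical one \eqref{canonical volume form}. Since $\Pi^{r}\circ J^{r}\gamma=\mathrm{id}$ on $W^{+}$ (the domain of $\gamma$), this yields $J^{r}\gamma^{\ast}\lambda^{+}=(\Lambda\circ J^{r}\gamma)\, d\Sigma^{+}$ as a $7$-form on $W^{+}\subset PTM^{+}$.

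Next I would invoke the characterization of differential forms that descend from $\overset{\circ}{TM}$ to $PTM^{+}$: expressed in homogeneous coordinates, a $7$-form on $PTM^{+}$ is necessarily $\pi^{+}$-horizontal, $\mathbf{i}_{\mathbb{C}}(\cdot)=0$, and $0$-homogeneous, $\mathfrak{L}_{\mathbb{C}}(\cdot)=0$, cf.\ \eqref{Lie deriv rho}--\eqref{contraction_rho_C}. As $d\Sigma^{+}$ is itself a form on $PTM^{+}$, it already obeys $\mathbf{i}_{\mathbb{C}}d\Sigma^{+}=0$ and $\mathfrak{L}_{\mathbb{C}}d\Sigma^{+}=0$; hence the horizontality of $J^{r}\gamma^{\ast}\lambda^{+}$ is automatic, and the Leibniz rule for $\mathfrak{L}_{\mathbb{C}}$ reduces the $0$-homogeneity condition to $\bigl(\mathbb{C}(\Lambda\circ J^{r}\gamma)\bigr)\, d\Sigma^{+}=0$. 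Since $d\Sigma^{+}$ is nowhere zero on $\mathcal{Q}^{+}$, this is exactly $\dot{x}^{i}\dot{\partial}_{i}(\Lambda\circ J^{r}\gamma)=0$.

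Finally I would apply the total-derivative identity \eqref{total derivs Finsler}, namely $\dot{\partial}_{i}(\Lambda\circ J^{r}\gamma)=(\dot{d}_{i}\Lambda)\circ J^{r+1}\gamma$, to rewrite the last equation as $\bigl(\dot{x}^{i}\dot{d}_{i}\Lambda\bigr)\circ J^{r+1}\gamma=0$. Because every point of $J^{r+1}Y$ is the $(r+1)$-jet of some section of $Y$ --- and by the one-to-one correspondence of Section~\ref{sec:Homog_objects_sections} these are exactly the $k$-homogeneous sections of $\overset{\circ}{Y}$ --- the smooth function $\dot{x}^{i}\dot{d}_{i}\Lambda$ vanishes at every point of $J^{r+1}Y$, which is precisely \eqref{homogeneity condition Lagrangian}.

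The computation itself is routine; the point that needs care is purely the bookkeeping with homogeneous, hence redundant, coordinates --- in particular, fixing ``$d\Sigma^{+}$'' to be pulled back from $PTM^{+}$ so that it genuinely satisfies the two descent conditions, and justifying that ``holds for all $\gamma$'' upgrades to ``holds identically on $J^{r+1}Y$''. Equivalently, one may argue intrinsically that $\Lambda=\lambda^{+}/d\Sigma^{+}$ is a well-defined function on $J^{r}Y$, so that its expression in homogeneous fibered coordinates is invariant under the prolonged fibre rescalings; restricted to the locus cut out by the homogeneity relations among the jet coordinates, the infinitesimal form of that invariance is again \eqref{homogeneity condition Lagrangian}, and the route through $J^{r}\gamma^{\ast}\lambda^{+}$ merely exhibits it explicitly.
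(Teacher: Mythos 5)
Your proposal is correct and follows essentially the same route as the paper: pull back along an arbitrary prolonged section to get a function on $PTM^{+}$, deduce its $0$-homogeneity, convert $\dot{\partial}_{i}(\Lambda\circ J^{r}\gamma)$ into $(\dot{d}_{i}\Lambda)\circ J^{r+1}\gamma$ via \eqref{total derivs Finsler}, and conclude by the arbitrariness of $\gamma$. The extra care you take in justifying why $\Lambda\circ J^{r}\gamma$ must be $0$-homogeneous (via the descent conditions on $d\Sigma^{+}$ and the Leibniz rule) is a more explicit version of the step the paper simply asserts, not a different argument.
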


\begin{proof}
Pick an arbitrary section of $\Pi ,$ say, $\gamma :U\rightarrow Y,$ where $%
U\subset PTM^{+}$ is a local chart domain. The function $\Lambda \circ
J^{r}\gamma $ is then defined on a subset of $PTM^{+},$ hence, it must be
0-homogeneous in $\dot{x};$ that is,
\begin{equation*}
\dot{x}^{i}\dot{\partial}_{i}(\Lambda \circ J^{r}\gamma )=0.
\end{equation*}%
But, from (\ref{total derivs Finsler}), $\dot{\partial}_{i}(\Lambda \circ
J^{r}\gamma )=(\dot{d}_{i}\Lambda )\circ J^{r+1}\gamma .$ Substituting into
the above relation and taking into account the arbitrariness of $\gamma $,
we get the result.
\end{proof}

\bigskip

The action attached to the Lagrangian \eqref{eq:Lagrangian} and to a piece $%
D^{+}\subset PTM^{+}$ is the function $S_{D^{+}}:\Gamma (Y)\rightarrow
\mathbb{R},$ given by:
\begin{equation*}
S_{D^{+}}(\gamma )=\underset{D^{+}}{\int }J^{r}\gamma ^{\ast }\lambda ^{+}.
\end{equation*}

By Proposition \ref{diffeo_compact subsets}, such action integrals on timelike domains $D^{+}$ can
equivalently be understood as integrals over pieces $D\subset \mathcal{O}$,
i.e.\ as actions formulated on the observer space. The advantage in the
representation of the action as integrals on $PTM^+$, is that the domain of
the integral does not depend on the Finsler Lagrangian.

\bigskip

The preparation from the previous sections, in particular, the formulation of
fields as sections of a configuration bundle $(Y,\Pi ,PTM^{+})$, allows us
now to straightforwardly apply the coordinate-free formulation of the
calculus of variations for Finsler field Lagrangians.

The \textit{variation} of the action under the flow $\left\{ \Phi
_{\varepsilon }\right\} $ of a doubly projectable vector field $\Xi \in
\mathcal{X}(Y)$ is given by the Lie derivative, see Appendix \ref{app:A}:
\begin{equation}
\delta _{\Xi }S_{D^+}(\gamma )=\underset{D^{+}}{\int }J^{r}\gamma ^{\ast }%
\mathfrak{L}_{J^{r}\Xi }\lambda ^{+}.
\end{equation}

A field $\gamma \in \Gamma (Y),$ $[(x,\dot{x})]\mapsto \gamma \lbrack (x,%
\dot{x})]$ on a Finsler spacetime is a \textit{critical section\ }for $S,$
if for any piece $D^{+}\subset PTM^{+}$ and for any $\Pi $-vertical vector
field $\Xi $ such that $\mathrm{supp}(\Xi \circ \gamma )\subset $ $D^{+}$: $\delta
_{\Xi }S_{D}(\gamma )=0.$

For any Lagrangian $\lambda ^{+}\in \Omega _{7}(J_{r}Y),$ there exists (see \cite%
{Krupka-book}, or Appendix \ref{app:A}) a unique source
form $\mathcal{E}_{\lambda ^{+}}\in \Omega _{8}(J^{s}Y)$ with $s\leq 2r,$
called the \textit{Euler-Lagrange form} of $\lambda ^{+},$ such that:%
\begin{equation}
J^{r}\gamma ^{\ast }(\mathfrak{L}_{J^{r}\Xi }\lambda ^{+})=J^{s}\gamma
^{\ast }\mathbf{i}_{J^{s}\Xi }\mathcal{E}_{\lambda ^{+}}-d(J^{s}\gamma
^{\ast }\mathcal{J}^{\Xi }),  \label{first_var_Finslerian}
\end{equation}%
for some $\mathcal{J}^{\Xi }\in \Omega _{6}(J^{s}Y).$ The 6-form $\mathcal{J}%
^{\Xi }$ (which is interpreted as a \textit{Noether current}), is only
unique up to a total derivative; in integral form, the above relation reads:%
\begin{equation}
\underset{D^{+}}{\int }J^{r}\gamma ^{\ast }(\mathfrak{L}_{J^{r}\Xi }\lambda
^{+})=\underset{D^{+}}{\int }J^{s}\gamma ^{\ast }\mathbf{i}_{J^{s}\Xi }%
\mathcal{E}(\lambda ^{+})-\underset{\partial D^{+}}{\int }J^{s}\gamma
^{\ast }\mathcal{J}^{\Xi }\,.
\end{equation}%
In a local contact basis,\eqref{eq:contactForm}, $\mathcal{E}_{\lambda ^{+}}$ is thus given as:%
\begin{equation*}
\mathcal{E}_{\lambda ^{+}}=\mathcal{E}_{\sigma }\theta ^{\sigma }\wedge
\mathrm{Vol}_{0}.
\end{equation*}%
The precise meaning of the requirement that $\mathcal{E}_{\lambda ^{+}}$ is
a source form is that the interior product
\begin{equation*}
\mathbf{i}_{J^{s}\Xi }\mathcal{E}_{\lambda ^{+}}=(\tilde{\Xi}^{\sigma }%
\mathcal{E}_{\sigma })\mathrm{Vol}_{0},
\end{equation*}%
only depends on the functions $\tilde{\Xi}^{\sigma }=\Xi ^{\sigma }-\xi ^{i}y^{\sigma}_{,i}-\dot{\xi}^{i}y^{\sigma}_{\cdot i}$, not on higher order components of $\Xi$.

In order to identify the Euler-Lagrange form, $\Pi $\textit{-vertical}
variation vector fields $\Xi =\Xi ^{\sigma }\partial _{\sigma }$ are
sufficient. More general transformations will, yet, be used when determining
energy-momentum tensors.

The field equations of $\lambda _{+}$ are then given by
\begin{equation*}
\mathcal{E}_{\sigma }\circ J^{s}\gamma =0.
\end{equation*}

\subsection{Finsler gravity sourced by a kinetic gas}

As an example for a field theory on Finsler spacetimes we discuss in the jet bundle language the dynamics
of a Finsler spacetime sourced by a kinetic gas - a theory which is considered as an extension of general relativity \cite%
{Hohmann:2018rpp,Pfeifer-Wohlfarthgravity,kinetic-gas,Hohmann:2020yia}.

We fist discuss the purely geometric (vacuum) field theory, where the
Finsler function $L$ itself is the dynamical field, and then add a matter
Lagrangian as source of these dynamics.

\subsubsection{Finsler gravity Lagrangian}

\label{sssec:FinslerGrav} We have shown above that, for theories using the
2-homogeneous Finsler function $L:\mathcal{A}\rightarrow \mathbb{R}$ as the
dynamical variable, the appropriate configuration bundle is (\ref{eq:Y}), with fiber $\mathbb{R}$; we will re-denote it here as $\left( Y_{g},\Pi _{g},PTM^{+}\right) $ and the homogeneous coordinates
corresponding to a fibered chart on $Y_{g}$ by $(x^{i},\dot{x}^{i},{\hat{L}}%
)$, i.e., comparing to the notations in Section \ref{sec:Fields}, $y^1 = \hat{L}$. The hat is meant to distinguish the last coordinate function on $Y_g$ from mappings $L:\mathcal{A}\rightarrow
\mathbb{R},$ $L=L\left( x,\dot{x}\right) ,$ i.e., from \textit{components of
sections }$\gamma $ of the configuration bundle; more precisely, $L=\hat{L}%
\circ \gamma $.

Briefly, we have:%
\begin{equation*}
Y_{g}:=(\overset{\circ }{TM}\times \mathbb{R})/_{\sim },\quad
\Pi _{g}:[(x,\dot{x},{\hat{L}})]\mapsto \lbrack (x,\dot{x})].
\end{equation*}%
As already said above, we identify $2$-homogeneous functions $L:\mathcal{A}\rightarrow \mathbb{R}$
(where $\mathcal{A}\subset TM$ is a conic subbundle) with sections $\gamma
\in \Gamma (Y_{g}),$
\begin{equation*}
L\mapsto \gamma :\mathcal{A}^{+}\rightarrow Y_{g},  \gamma \lbrack (x,\dot{x}%
)]=[x,\dot{x},L(x,\dot{x})]\,.
\end{equation*}%
Locally, $\gamma $ is described as: $(x^{i},\dot{x}^{i})\mapsto (x^{i},\dot{x}^{i},L(x,\dot{x})).$

On $Y_{g},$ a Lagrangian of order $r$ is expressed in fibered homogeneous
coordinates as $\lambda ^{+}=\Lambda d\Sigma ^{+},$ where $d \Sigma^{+}$ is the canonical volume form \eqref{canonical volume form} and $\Lambda ~=\Lambda
(x^{i},\dot{x}^{i},{\hat{L}},{\hat{L}}_{,i},{\hat{L}}_{\cdot i},...{\hat{L}}%
_{\cdot i_{1}...\cdot i_{r}})$ becomes a 0-homogeneous function of $\dot{x}$ whenever evaluated along sections $\gamma \in \Gamma (Y)$. The local contact basis of $\Omega (J^{r}Y)$
is then denoted by $\{dx^{i},d\dot{x}^{i},\theta ,\theta _{,i},\theta
_{\cdot i},...,\theta _{\cdot i_{1}...\cdot i_{r-1}},d{\hat{L}}%
_{,i_{1}...,i_{r}},...d{\hat{L}}_{\cdot i_{1}....\cdot i_{r}}\}$, where the
(unique) first order contact form is:
\begin{equation}
\theta =d{\hat{L}}-{\hat{L}}_{,i}dx^{i}-{\hat{L}}_{\cdot i}d\dot{x}^{i}.
\label{contact form Y_g}
\end{equation}%
On $Y_{g},$ it is convenient to use \textit{formal} adapted derivatives:%
\begin{equation*}
\boldsymbol{\delta }_{i}:=d_{i}-G_{~i}^{j}\dot{d}_{j},
\end{equation*}%
where the word "formal" means that $G_{~i}^{j}\in \mathcal{F}(Y_{g})$ are considered as functions on a chart of the jet bundle $J^{r}Y$
 - constructed by the usual formula from the coordinate functions ${\hat{L}};{\hat{L%
}}_{,i};...;{\hat{L}}_{\cdot i\cdot j};$ i.e., only when evaluated along
\textit{sections }$\gamma ,$ they become the usual canonical nonlinear
connection coefficients, defined on charts of $TM$. In particular, we get:%
\begin{equation}
\boldsymbol{\delta }_{i}{\hat{L}=0.}  \label{adapted deriv_L}
\end{equation}%
Using the latter relation and (\ref{hdf_covariant}), the contact form $%
\theta $ can be written in a manifestly covariant form:%
\begin{equation}
\theta =d{\hat{L}}-{\hat{L}}_{\cdot i}\delta \dot{x}^{i}.
\label{contact form Y_g covariant}
\end{equation}%
Source forms on $J^{r}Y_{g}$ are locally expressed as:%
\begin{equation*}
\rho =f~\theta \wedge d\Sigma ^{+},
\end{equation*}%
where $f=f(x^{i},\dot{x}^{i},{\hat{L}},{\hat{L}}_{,i},{\hat{L}}_{\cdot i},...%
{\hat{L}}_{\cdot i_{1}...\cdot i_{r}}).$

On the bundle $Y_{g},$ the following Lagrangian is a natural (generally
covariant) one:%
\begin{equation*}
\lambda _{g}^{+}=R_{0}d\Sigma ^{+},
\end{equation*}%
where, again, $R_{0}$ is constructed by means of the usual formula, in terms of the the coordinate functions $\hat{L}, \hat{L}_{,i}$ etc.; using (\ref%
{geodesic_spray}), (\ref{def:nlin}) and (\ref{N_curvature_comps}), we find
that $\lambda _{g}^{+}$ contains fourth order derivatives of $L$, i.e., $\lambda
_{g}^{+}\in \Omega _{7}(J^{4}Y_{g}).$
Naturality of this Lagrangian follows taking into account that, along any section, both $R_{0}$ (which is an invariant scalar, since it is constructed using only operations with d-tensors) and $d\Sigma^{+}$ are invariant under coordinate changes on $TM$ induced by arbitrary coordinate changes on $M$, see also the Appendix \ref{app:Lagrangians} for a discussion of natural Lagrangians.

The Euler-Lagrange form of $\lambda _{g}^{+}$ is $\mathcal{E}_{\lambda
^{+}}=E\theta \wedge d\Sigma ^{+}$, where, \cite{Hohmann:2018rpp}:%
\begin{equation}
E=\frac{1}{2}g^{ij}(LR_{0})_{\cdot i\cdot
j}-3R_{0}-g^{Lij}(P_{i|j}-P_{i}P_{j}+(\nabla P_{i})_{\cdot j})\,,
\label{vacuum_eqn}
\end{equation}%
and thus the field equation, which determines the extremal points of the
action is: $E=0$.

\subsubsection{Kinetic gas Lagrangian}
\label{sssec:kingas}

It turned out that there exists a physical field which naturally couples to
Finsler geometry and can act as source for the dynamics of a Finsler
spacetime. This field is the 1-particle distribution function (1PDF) $%
\varphi $ of a kinetic gas, which describes the dynamics of a kinetic gas on
the tangent bundle of spacetime \cite{Sarbach:2013fya,Ehlers2011,Sarbach}.

Usually, the gravitational field of a kinetic gas is described in terms of
the Einstein-Vlasov equations \cite{Andreasson:2011ng}, which, however, only
take the \textit{averaged} kinetic energy of the particles constituting the
gas into account. By coupling the 1PDF of the kinetic gas directly to the
Finslerian geometry of spacetime, this averaging can be omitted; thus, the velocity distribution of the gas particles contributing to the
gravitational field can be fully taken into account \cite%
{kinetic-gas,Hohmann:2020yia}.

A kinetic gas is defined as a collection of a large number of particles, whose
properties are encoded into 1PDF, i.e.\ a function
\begin{equation*}
\mathit{\ }\varphi :\mathcal{O}\rightarrow \mathbb{R},~\varphi =\varphi
\left( x,\dot{x}\right) .
\end{equation*}%
Its interpretation is the following. The number of particles crossing a
given (6-dimensional) hypersurface $\sigma \subset \mathcal{O}$ is%
\begin{equation}
N[\sigma ]=\underset{\sigma }{\int }\varphi ~\textrm{vol},  \label{N}
\end{equation}%
where $\textrm{vol}=\dfrac{1}{3!}d\omega \wedge d\omega \wedge d\omega $ is the canonical invariant volume form on $\sigma $, determined
by the Lorentzian (or pseudo-Finslerian) structure on spacetime. This volume form induces a coupling between the geometry of spacetime and
the 1PDF. Prolonging $\varphi $ to $TM$ by 0-homogeneity as
discussed in Section \ref{sec:Homog_objects_sections}, we can
equivalently regard $\varphi $ as a function defined on $\mathcal{O}^{+}=\pi
^{+}\left( \mathcal{O}\right) \subset PTM^{+}$. The partial functions $%
\varphi _{x}=\varphi \left( x,\cdot \right) $ are all assumed to be
compactly supported (which is physically interpreted as the fact that the
speeds of the particles composing the gas have an upper bound lower than the
speed of light).

The Lagrangian defining the dynamics of the kinetic gas on a Finsler
spacetime is, see \cite{kinetic-gas},
\begin{equation}
\lambda _{m}^{+} = m\varphi d\Sigma ^{+} =m \varphi \frac{|\det(g)|}{\hat{L}^{2}}  \mathrm{Vol}_{0} =\mathcal{L}%
_{m}\mathrm{Vol}_{0} \label{eq:kgasLag}
\end{equation}%
where the Lagrange density  $\mathcal{L}_{m}=m \frac{|\det(g)|}{\hat{L}^{2}} \varphi$ depends on $x^{i},\dot{x}^{i},{%
\hat{L}},...,{\hat{L}}_{\cdot i \cdot j}$ and $m$ is the mass parameter of the gas
particles, here assumed all of the same mass for simplicity. The ${\hat{L}}%
,...,{\hat{L}}_{\cdot i\cdot j}$ dependence in the Lagrange density appears
due to the dependence of the volume form on the Finsler Lagrangian and on
the Finsler metric tensor $g$. Yet, for the sake of uniformness (since we
will couple it to $\lambda _{g}^{+},$ which lives on $J^{4}Y_{g}$), we will
regard $\mathcal{L}_{m}$ as a function on $J^{4}Y_{g},$ rather than on $%
J^{2}Y_{g}.$

Consider on $J^{4}Y_{g}$ the Lagrangian%
\begin{equation*}
\lambda ^{+}=\dfrac{1}{2\kappa ^{2}}\lambda _{g}^{+}+\lambda _{m}^{+},
\end{equation*}%
then, calculation of the Euler-Lagrange form by variation with respect to $L$ leads,
\cite{kinetic-gas,Hohmann:2020yia}, to the Finsler gravity equations sourced by a kinetic gas:%
\begin{equation}
\frac{1}{2}g^{ij}(LR_{0})_{\cdot i\cdot
j}-3R_{0}-g^{Lij}(P_{i|j}-P_{i}P_{j}+(\nabla P_{i})_{\cdot j})=-\kappa
^{2}\varphi\,, \label{complete field eqn}
\end{equation}
where $\kappa$ is the gravitational coupling constant.

We note that the above equation determines \textit{nonzero} values of $L$; accordingly, in the construction of the actions corresponding to $\lambda_{g}^{+}$ and $\lambda_{m}^{+}$, one must only consider non-lightlike domains for $L$. This is a difference from actions of metric field theories built directly over the spacetime manifold $M$, which do not distinguish between possible causal properties of vectors.

\section{Energy-momentum distribution tensor}
\label{sec:EMDistri} 

An important concept in physics, which is derived from the action of a field theory, is the the energy-momentum tensor. One way to interpret the energy momentum tensor mathematically is that it measures \cite{Gotay}, ``the response of the matter Lagrangian to compactly supported diffeomorphisms of \textit{spacetime}''. This interpretation will be preserved in Finslerian field theory. In other words, naturality (general covariance, or general invariance \cite{Krupka-Trautman}) of Lagrangians will still be understood as invariance under (lifted) diffeomorphisms of \textit{spacetime} - though, in this case, the base of our configuration bundle is not spacetime, but its positively projectivized tangent bundle. This will require an extension of the technique presented in \cite{Voicu-em-tensors} and will result in a ``weaker'' (averaged) energy-momentum conservation law.

\subsection{Generally covariant Lagrangians}\label{ssec:GCL}

To identify the energy-momentum tensor in our construction of field theories
on Finsler spacetimes, we need some preparations:

\begin{enumerate}
\item \textbf{Lifts of diffeomorphisms $\phi_{0}$ of} $M$ \textbf{into doubly fibered automorphisms of} $Y$, that cover the natural lifts\footnote{Such lifts exist, e.g. when $Y$ is a bundle of $k$-homogeneous d-tensors, which is the $\mathbb{R}_{+}^{\ast}$-orbit space of a bundle $\overset{\circ}{Y}$ of d-tensors on $\overset{\circ}{TM}$. Diffeomorphisms $\phi_{0}$ of $M$ are naturally lifted into fibered automorphisms $d \phi_{0}$ of $TM$ and further, by tensor lifting to $\overset{\circ}{Y}$.} of $\phi_{0}$ to $PTM^{+}$, see the diagram \eqref{eq:phi-phi0}, since a priori diffeomorphisms of $M$ do not act on $Y$.

\item \textbf{A splitting of the total Lagrangian $\lambda^{+}$ of the theory into a background (vacuum) Lagrangian and a matter one} and, accordingly, of the variables of the theory into background and dynamical ones. The background Lagrangian (which we denote by $\lambda^{+}_{g}$) will only depend on the background variables (e.g., metric components, Finsler function etc), whereas the matter Lagrangian $\lambda^{+}_{m}$ will depend on all the variables, see \cite{Giachetta}. Roughly, denoting the background coordinates by $y^{\sigma}_B$ and \textit{non-background} or \textit{dynamical} variables $y^{\sigma}_D$, we have:
\begin{equation*}
\lambda^{+}(y^{\sigma}_B,..., y^{\sigma}_B{}_{,i...\cdot j},y^{\sigma}_D,...y^{\sigma}_D{}_{,i...\cdot j}) = \lambda^{+}_g(y^{\sigma}_B,..., y^{\sigma}_B{}_{,i...\cdot j}) + \lambda^{+}_m(y^{\sigma}_B,..., y^{\sigma}_B{}_{,i...\cdot j},y^{\sigma}_D,...y^{\sigma}_D{}_{,i...\cdot j})\,.
\end{equation*}
For instance, in general relativity, one has $y^{\sigma}_B = g^{ij}$, whereas $y^{\sigma}_D$ can be, e.g., the electromagnetic 4-potential.
The names "background" vs. "dynamical" come from the fact that the Lagrangian can be split into a part  $\lambda^{+}_{g}$, which only contains the background variables, and a part $\lambda^{+}_{m}$ which contains all information about the dynamical variables and their coupling to the background variables. Hence, even if one leaves $\lambda^{+}_g$ aside and fixes a value of the background fields, one can study the dynamics of the dynamical fields coupled to a fixed background.

\noindent Then, under the assumption that the matter Lagrangian $\lambda^{+}
_{m}$ is \textit{generally covariant} (see again the end of Appendix \ref%
{app:Lagrangians}), it will be invariant under any one-parameter group of canonical lifts of
diffeomorphisms of the spacetime manifold $M$, thus giving rise to
conserved Noether currents $\mathcal{J}^{\Xi }$ (where $\Xi =\mathfrak{F}%
(\xi _{0})$, is the canonical lift to $Y$ of a diffeomorphism generating
vector field $\xi _{0}$ from $M$). Roughly speaking, the energy-momentum
tensor will be given by the correspondence $\xi _{0}\mapsto \mathcal{J}^{\Xi
}.$
\end{enumerate}

In the case of Finsler spacetimes, the fundamental background variable is the Finsler
Lagrangian $L$ itself. Yet, the whole construction can be done in a completely similar manner, e.g., for
the Finsler metric tensor components $g_{ij},$ as background variables.

Consider a fibered product
\begin{equation*}
Y:=Y_{g}\times _{PTM^{+}}Y_{m}
\end{equation*}%
over $PTM^{+}$, where $Y_{g}=(\overset{\circ }{TM}\times \mathbb{R})_{/_{\sim }}$ was constructed in Section \ref{sssec:FinslerGrav}
and $Y_{m}$ is both a a fiber bundle over $PTM^{+}$ and a natural fiber bundle over $M$. In
particular, $Y$ has a double fibered manifold structure:%
\begin{equation*}
Y\overset{\Pi }{\longrightarrow }PTM^{+}\overset{\pi _{M}}{\longrightarrow }M
\end{equation*}%
We denote the homogeneous coordinates corresponding to a doubly fibered
chart on $Y$ by $(x^{i},\dot{x}^{i},{\hat{L}},y^{\sigma}_{D})$, where $y_B=\hat{L}$
is the coordinate on the fiber of $Y_{g}$ and $y^{\sigma}_{D}$ are local
coordinates on the fiber of $Y_{m}.$

As both $Y_m$ and $Y_g$ are natural bundles over M it follows that any vector field $%
\xi _{0}\in \mathcal{X}(M)$ admits a canonical lift $\Xi \in \mathcal{X}(Y).$

Consider a generally covariant Lagrangian $\lambda _{m}^{+}\in \Omega
(J^{r}Y):$%
\begin{equation*}
\lambda _{m}^{+}=\mathcal{L}_{m}(x^{i},\dot{x}^{i},{\hat{L}},{\hat{L}}_{,i},{%
\hat{L}}_{\cdot i},...,{\hat{L}}_{\cdot i_{1}...i_{r}},y^{\sigma
}_{D},...,y_{D\cdot i_{1}...i_{r}}^{\sigma })\mathrm{Vol}_{0},
\end{equation*}%
which will be interpreted as the \textit{matter Lagrangian} (as already mentioned above,
the total Lagrangian of the theory will be obtained as $\lambda
^{+}:=\lambda _{g}^{+}+\lambda _{m}^{+}$). Since $\lambda _{m}^{+}$ is
generally covariant, for any compactly supported vector field $\xi _{0}\in \mathcal{X}%
(M),$ $\lambda _{m}^{+}$ is invariant under the flow of the $r$-th jet
prolongation of the canonical lift $\Xi :=\mathfrak{F}(\xi _{0}),$ i.e.,%
\begin{equation}
\mathfrak{L}_{J^{r}\Xi }\lambda _{m}^{+}=0.
\label{invariance cond finslerian}
\end{equation}

In the following, we will explore in detail the consequences of this invariance of $\lambda_m^+$.

\subsection{The energy momentum distribution tensor and the energy momentum
density}

We will first give the technical precise definition of the energy-momentum distribution tensor, and demonstrate the concept on the example of the kinetic gas at the end of this subsection.

Assume $\left\{ \phi _{0,\varepsilon }\right\} $ is a 1-parameter group of
compactly supported diffeomorphisms of $M,$ generated by $\xi _{0}\in
\mathcal{X}(M)$, $\xi _{0}=\xi ^{i}\partial _{i}.$ Then:

\begin{enumerate}
\item Each $\phi _{0,\varepsilon }$ is first naturally lifted to $TM,$ as $%
\phi _{\varepsilon }:=d\phi _{0,\varepsilon }.$ The generator of $\left\{
\phi _{\varepsilon }\right\} $ is the complete lift $\xi \in \mathcal{X}(%
\overset{\circ }{TM})$ of $\xi _{0}:$%
\begin{equation}
\xi =\xi ^{i}\partial _{i}+\dot{\xi}^{i}\dot{\partial}_{i},~\ \ \ \ \dot{\xi}%
^{i}=\xi _{~,j}^{i}\dot{x}^{j}.  \label{csi_lift PTM+}
\end{equation}
Since the canonical lift $\xi$ is 0-homogeneous, we can identify it with a vector field on $PTM^{+}$ (more precisely, with its pushforward by $\pi^{+}$), see Section \ref{sssec:PTM+TM}.

\item Further, taking into account that $Y_{g}=(\overset{\circ }{TM}\times
\mathbb{R})_{/_{\sim }}$ is obtained as a quotient space of the \textit{%
trivial} bundle $\overset{\circ }{TM}\times \mathbb{R},$ the canonical lift, \cite{Giachetta} $%
\Phi _{g,\varepsilon }:Y_{g}\rightarrow Y_{g}$ of $\phi _{\varepsilon }$ is also a trivial one i.e., it acts on the fiber variable $\hat{L}$ as the identity:
\begin{equation*}
\Phi _{g,\varepsilon }[(x,\dot{x},{\hat{L}})]=[(\phi _{\varepsilon }(x,\dot{x%
}),{\hat{L}})];
\end{equation*}
The above mapping is well defined (i.e., independent on the choice of the
representative of the class $[(x,\dot{x},{\hat{L}})]$), due to the linearity
of $\phi _{\varepsilon }$ in $\dot{x}.$ As the lifted diffeomorphisms act trivially on $\hat{L}$, the generator $\xi $ is
canonically lifted into a vector field $\Xi _{g}\in \mathcal{X}(Y_{g}),$ with vanishing $\frac{\partial}{\partial \hat{L}}$ component,
i.e.:
\begin{equation*}
\Xi _{g}=\xi ^{i}\partial _{i}+\dot{\xi}^{i}\dot{\partial}_{i} + 0 \tfrac{\partial}{\partial \hat{L}}.
\end{equation*}

\item
According to our first assumption at the beginning of Section \ref{ssec:GCL}, there exists a canonical lift $\xi $ to $Y_{m}$, into some vector field $\Xi _{m}$ of the form $\Xi =\xi ^{i}\partial _{i}+\dot{\xi}^{i}\dot{\partial} _{i} + \Xi ^{\sigma } \frac{\partial}{\partial y^{\sigma }_{D}}.$ All in all, we
obtain that the canonical lift of $\xi _{0}\in \mathcal{X}(M)$ to the
fibered product $Y=Y_{g}\times _{PTM^{+}}Y_{m}$ is expressed in a fibered
chart by adding to $\xi$ the contributions describing the transformation of each of the fiber variables
\begin{equation}
\Xi =\xi ^{i}\partial _{i}+\dot{\xi}^{i}\dot{\partial} _{i} + 0\tfrac{\partial}{\partial \hat L} +\Xi ^{\sigma }\tfrac{\partial}{\partial y^{\sigma}_{D}},  \label{total lift csi}
\end{equation}
where, see \cite{Gotay}, $\Xi ^{\sigma }$ are functions of the coordinates $%
x^{i},\dot{x}^{i},y^{\sigma }_{D},...,y_{D\cdot i_{1}...i_{r}}^{\sigma }$ and of
a finite number of the derivatives of $\xi ^{i}.$
\end{enumerate}

\noindent \textbf{First variation formula.}

Accordingly, the Euler-Lagrange form $\mathcal{E}(\lambda _{m}^{+})$ will be
split into a $Y_{g}$ and a $Y_{m}$-component as%
\begin{equation*}
\mathcal{E}(\lambda _{m}^{+})=\mathcal{E}_{g}(\lambda _{m}^{+})+\mathcal{E}%
_{m}(\lambda _{m}^{+}),
\end{equation*}%
where:%
\begin{equation}
\mathcal{E}_{g}(\lambda _{m}^{+})=\dfrac{\delta \mathcal{L}_{m}}{\delta {%
\hat{L}}}\theta \wedge \mathrm{Vol}_{0},\ ~\ \ \ \ \ \mathcal{E}_{m}(\lambda
_{m}^{+})=\dfrac{\delta \mathcal{L}_{m}}{\delta y^{\sigma }_{D}}\theta ^{\sigma}_{D}\wedge\mathrm{Vol}_{0},  \label{E_g}
\end{equation}%
and $\theta^{\sigma}_{D}= \mathrm{d} y^{\sigma}_{D}-y^{\sigma}_{D,i}\mathrm{d}x^{i} - y^{\sigma}_{D\cdot i}\mathrm{d}\dot{x}^{i} $. Since $h%
\mathfrak{L}_{J^{r}\Xi }\lambda _{m}^{+}=0$ (which follows from the
invariance condition (\ref{invariance cond finslerian})), this leads to:%
\begin{equation*}
h\mathbf{i}_{J^{s}\Xi }\mathcal{E}_{g}(\lambda _{m}^{+})+h\mathbf{i}%
_{J^{s}\Xi }\mathcal{E}_{m}(\lambda _{m}^{+})-hd\mathcal{J}^{\Xi }=~0.
\end{equation*}%
But, on-shell for the variables $y^{\sigma }_{D},$ i.e., along sections $\gamma
:=(L,\gamma _{m})$ such that the ``matter field'', i.e. the section $\gamma
_{m}:PTM^{+}\rightarrow Y_{m},\left( x^{i},\dot{x}^{i}\right) \mapsto \left(
x^{i},\dot{x}^{i},y^{\sigma }_{D}\left( x^{i},\dot x^i\right) \right) $, is critical for $%
\lambda _{m}^{+},$ the $\mathcal{E}_{m}$-term above vanishes, i.e.:
\begin{equation}
h\mathbf{i}_{J^{s}\Xi }\mathcal{E}_{g}(\lambda _{m}^{+})-hd\mathcal{J}^{\Xi
}~\simeq _{\gamma _{m}}0,  \label{variation lambda_m}
\end{equation}%
where $\simeq _{\gamma _{m}}$ means equality on-shell for the matter
field $\gamma _{m}.$

\bigskip

\noindent \textbf{The energy-momentum distribution tensor.}

The surviving Euler-Lagrange component $h\mathbf{i}_{J^{s}\Xi }\mathcal{E}%
_{g}(\lambda _{m}^{+})$ in (\ref{variation lambda_m}) can again be split
into a linear expression in $\xi ^{i}$ and a divergence expression; the
latter will couple with $hd\mathcal{J}^{\Xi}$ into a boundary
term and will provide the building block of the energy-momentum
distribution tensor $\Theta$. More precisely,

\begin{lemma}
\label{Theta_B_Lemma} For any natural Finsler field Lagrangian $\lambda
_{m}^{+}\in \Omega _{7}(J^{r}Y),$ there exist unique $\mathcal{F}(M)$%
-linear mappings $\Theta :\mathcal{X}(M)\rightarrow \Omega (J^{s}Y),$ $\mathcal{B}:%
\mathcal{X}(M)\rightarrow \Omega (J^{s+1}Y),$ with $\Pi ^{s}$
(respectively, $\Pi ^{s+1}$)-horizontal values (where $s\leq 2r)$ such that,
for any $\xi _{0}\in \mathcal{X}(M)$:%
\begin{equation}
h\mathbf{i}_{J^{s}\Xi }\mathcal{E}_{g}(\lambda _{m}^{+})=\mathcal{B}(\xi
_{0})+hd\Theta (\xi _{0}).  \label{Theta_B_splitting}
\end{equation}
\end{lemma}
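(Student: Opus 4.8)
The plan is to extend the coordinate-free integration-by-parts procedure of \cite{Voicu-em-tensors} to the doubly fibered setting $Y\to PTM^{+}\to M$, the only genuinely new feature being that the base $PTM^{+}$ carries the fibre coordinates $\dot x^{i}$ inherited from $TM$. I would begin by making the left-hand side of \eqref{Theta_B_splitting} explicit. By \eqref{E_g}, $\mathcal{E}_{g}(\lambda_{m}^{+})=\tfrac{\delta\mathcal{L}_{m}}{\delta\hat{L}}\,\theta\wedge\mathrm{Vol}_{0}$; since $\mathrm{Vol}_{0}$ is $\Pi^{s}$-horizontal and every summand containing the contact form $\theta$ is killed by horizontalization, we get $h\mathbf{i}_{J^{s}\Xi}\mathcal{E}_{g}(\lambda_{m}^{+})=\big(\tfrac{\delta\mathcal{L}_{m}}{\delta\hat{L}}\,\mathbf{i}_{J^{s}\Xi}\theta\big)\mathrm{Vol}_{0}$ with $\mathbf{i}_{J^{s}\Xi}\theta=\tilde{\Xi}^{\hat L}$. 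Because the canonical lift $\Xi$ of $\xi_{0}$ acts trivially on the fibre coordinate of $Y_{g}$, i.e. its $\partial/\partial\hat{L}$-component vanishes (see \eqref{total lift csi}), formula \eqref{Csi_tilde_Finslerian} together with \eqref{csi_lift PTM+} gives $\tilde{\Xi}^{\hat L}=-\big(\xi^{i}\hat{L}_{,i}+\xi^{i}{}_{,j}\dot x^{j}\hat{L}_{\cdot i}\big)$, which is (formally) minus the Lie derivative of $\hat{L}$ along the complete lift of $\xi_{0}$. Hence
\begin{equation*}
h\mathbf{i}_{J^{s}\Xi}\mathcal{E}_{g}(\lambda_{m}^{+})=-\frac{\delta\mathcal{L}_{m}}{\delta\hat{L}}\big(\xi^{i}\hat{L}_{,i}+\xi^{i}{}_{,j}\dot x^{j}\hat{L}_{\cdot i}\big)\mathrm{Vol}_{0},
\end{equation*}
a $\Pi^{s}$-horizontal $7$-form whose coefficient is a first-order linear differential operator in the components $\xi^{i}$ of $\xi_{0}$.

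Next I would peel off the total-derivative part by a single integration by parts in the base coordinates $x^{j}$. Setting $W^{ij}:=-\tfrac{\delta\mathcal{L}_{m}}{\delta\hat{L}}\,\dot x^{j}\hat{L}_{\cdot i}\in\mathcal{F}(J^{s}Y)$, one has $W^{ij}\xi^{i}{}_{,j}=d_{j}\!\big(W^{ij}\xi^{i}\big)-\big(d_{j}W^{ij}\big)\xi^{i}$, so I would define
\begin{equation*}
\Theta(\xi_{0}):=W^{ij}\,\xi^{i}\,\mathbf{i}_{\partial_{j}}\mathrm{Vol}_{0}\in\Omega_{6}(J^{s}Y),\qquad
\mathcal{B}(\xi_{0}):=-\Big(\tfrac{\delta\mathcal{L}_{m}}{\delta\hat{L}}\hat{L}_{,i}+d_{j}W^{ij}\Big)\xi^{i}\,\mathrm{Vol}_{0}\in\Omega_{7}(J^{s+1}Y),
\end{equation*}
and invoke the standard relation between the horizontalization of exterior derivatives of $\Pi^{s}$-horizontal forms and total divergences (Appendix \ref{app:A}, \cite{Krupka-book}), which gives $hd\Theta(\xi_{0})=\big(d_{j}(W^{ij}\xi^{i})\big)\mathrm{Vol}_{0}$ and thus \eqref{Theta_B_splitting}. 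Since $\mathcal{E}_{g}(\lambda_{m}^{+})$ lives on $J^{s}Y$ with $s\le 2r$ and the coordinates $\hat{L}_{\cdot i},\hat{L}_{,i}$ are of order one, $\Theta$ indeed takes values in $\Omega_{6}(J^{s}Y)$ and $\mathcal{B}$, which picks up one more total derivative, in $\Omega_{7}(J^{s+1}Y)$.

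Both $\Theta$ and $\mathcal{B}$ are manifestly $\mathcal{F}(M)$-linear in $\xi_{0}$, as $W^{ij}$, $d_{j}W^{ij}$, $\hat{L}_{,i}$ do not involve $\xi_{0}$; the decomposition remains consistent under $\xi_{0}\mapsto f\xi_{0}$, $f\in\mathcal{F}(M)$, because the extra $f_{,j}$-term produced in the left-hand side equals precisely the $df$-term produced in $hd\big(f\Theta(\xi_{0})\big)$. Replacing $d_{j}$ by the adapted total derivative $\boldsymbol{\delta}_{j}$ of \eqref{total adapted derivative} and using \eqref{hdf_covariant} puts $\Theta$ and $\mathcal{B}$ into a manifestly d-tensorial form; this simultaneously shows that, although written via homogeneous coordinates, they patch together into globally defined objects, so the construction is chart-independent, exactly as in the pseudo-Riemannian situation treated in \cite{Voicu-em-tensors,Gotay}.

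Finally, uniqueness. If $(\Theta,\mathcal{B})$ and $(\Theta',\mathcal{B}')$ both satisfy \eqref{Theta_B_splitting} with the stated horizontality and $\mathcal{F}(M)$-linearity, then $hd\big((\Theta-\Theta')(\xi_{0})\big)=(\mathcal{B}'-\mathcal{B})(\xi_{0})$; the right-hand side is \emph{algebraic} in $\xi_{0}$ (no derivatives of the $\xi^{i}$), while $hd$ of an $\mathcal{F}(M)$-linear $\Pi^{s}$-horizontal $6$-form necessarily contains a term proportional to the first jet $\xi^{i}{}_{,j}$. Evaluating at points where $\xi_{0}$ vanishes but its first jet does not forces the coefficients of $\Theta-\Theta'$ to vanish, whence $\Theta=\Theta'$ and then $\mathcal{B}=\mathcal{B}'$. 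I expect the computation of the left-hand side and the single integration by parts to be routine; the delicate point is this uniqueness argument, which requires checking in arbitrary fibered charts that the "pointwise-in-$\xi$" and "total-divergence" subspaces intersect trivially — this is exactly where one leans on the coordinate-free splitting lemmas of \cite{Voicu-em-tensors}, adapted here to accommodate the extra $\dot x$-directions of $PTM^{+}$.
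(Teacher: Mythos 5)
Your computation of the left-hand side is correct and coincides with the paper's: since $\boldsymbol{\delta}_i\hat L=0$ gives $\hat L_{,i}=G^j{}_i\hat L_{\cdot j}$ and $\hat L_{\cdot i}=2\dot x_i$, your $\tilde\Xi^{\hat L}=-(\xi^i\hat L_{,i}+\xi^i{}_{,j}\dot x^j\hat L_{\cdot i})$ is exactly the paper's $-2\dot x_i\nabla\xi^i$, and the overall strategy (a single integration by parts in the base) is also the paper's. The gap lies in the construction of $\Theta$ and $\mathcal{B}$ themselves. The lemma requires mappings valued in $\Omega(J^{s}Y)$, i.e.\ globally defined forms, but your $\Theta(\xi_0)=W^{ij}\xi^i\,\mathbf{i}_{\partial_j}\mathrm{Vol}_0$ and the associated $\mathcal{B}$ are chart-dependent: under an induced coordinate change, $\partial_j$ acquires a $\tfrac{\partial^2\tilde x^k}{\partial x^j\partial x^l}\dot x^l\,\tilde{\dot\partial}_k$ term, so $\mathbf{i}_{\partial_j}\mathrm{Vol}_0$ is not tensorial, and neither is $d_jW^{ij}$. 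Your remedy --- that replacing $d_j$ by $\boldsymbol{\delta}_j$ ``puts $\Theta$ and $\mathcal{B}$ into manifestly d-tensorial form'' --- is not a rewriting: since $\mathbf{i}_{\delta_j}d\Sigma^+=\mathbf{i}_{\partial_j}d\Sigma^+-G^k{}_j\,\mathbf{i}_{\dot\partial_k}d\Sigma^+$, the covariant objects differ from the ones you defined by nonzero connection terms (a multiple of $G^k\,\mathbf{i}_{\dot\partial_k}d\Sigma^+$). The paper avoids this by performing the splitting covariantly from the start: it writes $\mathcal{E}_g=-\tfrac12\mathfrak{T}\hat L^{-1}\theta\wedge d\Sigma^+$ with $\mathfrak{T}$ a scalar invariant, obtains $h\mathbf{i}_{J^{s}\Xi}\mathcal{E}_g=\mathfrak{T}\hat L^{-1}\dot x_i\nabla\xi^i\,d\Sigma^+$, and uses $\nabla\hat L=0$, $\nabla\dot x_i=0$, $\dot x^i{}_{|j}=0$ to land on $\Theta(\xi_0)=\mathfrak{T}\hat L^{-1}\dot x^j\dot x_i\xi^i\,\mathbf{i}_{\delta_j}d\Sigma^+$ and $\mathcal{B}(\xi_0)=-\Theta^j{}_{i|j}\xi^i\,d\Sigma^+$. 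This covariant step is the actual content of the lemma, and your proposal elides it.

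Your uniqueness argument also rests on a false premise. It is not true that $hd$ of every $\mathcal{F}(M)$-linear $\Pi^{s}$-horizontal $6$-form must contain a term in $\xi^i{}_{,j}$: for a vertical contraction $\sigma(\xi_0)=f^k{}_i\xi^i\,\mathbf{i}_{\dot\partial_k}d\Sigma^+$ one has $dx^j\wedge\mathbf{i}_{\dot\partial_k}d\Sigma^+=0$ (because $dx^j\wedge d\Sigma^+=0$, as $d\Sigma^+$ already contains the full factor $d^4x$), so $hd\sigma(\xi_0)$ is purely algebraic in $\xi_0$. Hence the ``pointwise-in-$\xi$'' and ``total-divergence'' subspaces do \emph{not} intersect trivially on the full class of horizontal $6$-forms, and evaluating where $\xi_0$ vanishes but its $1$-jet does not cannot force $\Theta=\Theta'$. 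The paper's own uniqueness claim is pinned down only by the uniqueness of the scalar splitting $\mathfrak{T}\hat L^{-1}\dot x_i\nabla\xi^i=\nabla(\mathfrak{T}\hat L^{-1}\dot x_i\xi^i)-\xi^i\dot x_i\hat L^{-1}\nabla\mathfrak{T}$ together with the specific horizontal-contraction form of $\Theta$; if you want a genuine uniqueness proof you must restrict the admissible $\Theta$ to that class, which your argument does not do.
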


\begin{proof}
We will first construct $\Theta$ and $\mathcal{B}$ in a fibered chart and then, show that the obtained expressions are independent on the choice of this chart. In any fibered chart, $\mathcal{E}_{g}$ is expressed as:%
\begin{equation}
\mathcal{E}_{g}(\lambda _{m}^{+})=\dfrac{\delta \mathcal{L}_{m}}{\delta {%
\hat{L}}}~\theta \wedge \mathrm{Vol}_{0}=: - \frac{1}{2} \mathfrak{T~}\hat{L}^{-1} \theta \wedge d\Sigma
^{+},  \label{E_g_1}
\end{equation}%
where $\mathfrak{T}$ is a $0$-homogeneous scalar which acts as source term for Finsler gravity equations~\eqref{vacuum_eqn}, and the factor $\hat L^{-1}$ is introduced to ensure this degree of  homogeneity (as both $\hat{L}^{-1} \theta$ and $d\Sigma^{+}$ are 0-homogeneous.) The precise expression of $\mathfrak{T}$ depends on the chosen volume form. For instance, if $d \Sigma^{+}$ is the canonical volume form (\ref{canonical volume form}), then:
\begin{equation}
\mathfrak{T}=-2\dfrac{{\hat{L}}^{3}}{\left\vert \det g\right\vert }\dfrac{%
\delta \mathcal{L}_{m}}{\delta {\hat{L}}}.  \label{tau}
\end{equation}%
Since $\lambda ^{+}$ is a natural Lagrangian, $\dfrac{\delta \mathcal{L}_{m}%
}{\delta {\hat{L}}}$ is a scalar density and, accordingly, $\mathfrak{T}$ is
a scalar invariant. Then, $\mathbf{i}_{J^{s}\Xi }\mathcal{E}_{g}(\lambda
_{m}^{+})= (-\frac{1}{2}\mathfrak{T}\hat{L}^{-1}\mathbf{i}_{J^{s}\Xi }\theta )d\Sigma ^{+}+ \frac{1}{2}(\mathfrak{T%
}\hat{L}^{-1}) \theta \wedge \mathbf{i}_{J^{s}\Xi } d\Sigma ^{+};$ since the last term is a
multiple of $\theta$, it is a contact form; the remaining component is thus
the horizontal component $h\mathbf{i}_{J^{s}\Xi }\mathcal{E}_{g}(\lambda
_{m}^{+})$ and can be expressed (up to a pullback by $\Pi ^{s+1,s}$ of the
right hand side) as:%
\begin{equation}
h\mathbf{i}_{J^{s}\Xi }\mathcal{E}_{g}(\lambda _{m}^{+})=- \frac{1}{2}(\mathfrak{T~}%
\hat{L}^{-1} \mathbf{i}_{J^{s}\Xi }\theta )d\Sigma ^{+}.  \label{hE_g_rough}
\end{equation}%
Further, using $\theta =d{\hat{L}}-\hat{L}_{,i}dx^{i}-{\hat{L}}_{\cdot i}d%
\dot{x}^{i}=d{\hat{L}}-\boldsymbol{\delta }_{i}\hat{L}dx^{i}-{\hat{L}}_{\cdot i}%
\mathbf{\delta }\dot{x}^{i}$ and $\boldsymbol{\delta }_{i}\hat{L}=0,$ we find:
\begin{equation*}
\theta =d{\hat{L}}-{\hat{L}}_{\cdot i}\delta \dot{x}^{i}=d{\hat{L}}-2\dot{x}_{i}(d\dot{x}^{i}+G_{~j}^{i}dx^{j});
\end{equation*}%
inserting into $\theta $ the lift (\ref{total lift csi}) of $\xi _{0},$ this
becomes:%
\begin{equation*}
\mathbf{i}_{J^{r}\Xi }\theta =-2\dot{x}_{i}(\dot{\xi}^{i}+G_{~j}^{i}\xi
^{j})=-2\dot{x}_{i}(\xi _{~,j}^{i}\dot{x}^{j}+G_{~j}^{i}\xi ^{j})=-2\dot{x}%
_{i}\nabla \xi ^{i},
\end{equation*}%
where in the second equality we used: $\dot{\xi}^{i}=\xi _{~,j}^{i}\dot{x}%
^{j}.$ We can thus rewrite (\ref{hE_g_rough}) as:%
\begin{equation*}
h\mathbf{i}_{J^{s}\Xi }\mathcal{E}_{g}(\lambda _{m}^{+})= \mathfrak{T~}\hat{L}^{-1}\dot{%
x}_{i}\nabla \xi ^{i}d\Sigma ^{+}.
\end{equation*}%
Taking into account that $\nabla \dot{x}_{i}=0$ and $\nabla \hat{L}=0$, this can be uniquely split
into a linear term in $\xi ^{i}$ and the divergence of a linear term in $\xi
^{i}$:
\begin{equation}
h\mathbf{i}_{J^{s}\Xi }\mathcal{E}_{g}(\lambda _{m}^{+})= [\nabla (\mathfrak{T} \hat{L}^{-1}\dot{x}_{i}\xi ^{i}) - \xi ^{i}\dot{x}_{i}\hat{L}^{-1}\nabla \mathfrak{T}]d\Sigma ^{+}.
\label{splitting E_g}
\end{equation}%
Then, using $\dot{x}_{~|j}^{i}=0$ and $\nabla =\dot{x}^{j}D_{\mathbf{\delta }_{j}},$ we can rearrange the divergence term as
\begin{equation}
 \nabla (\mathfrak{T}\hat{L}^{-1}\dot{x}%
_{i}\xi ^{i}) = (\mathfrak{T}\hat{L}^{-1}\dot{x}_{i}\dot{x}^{j}\xi ^{i})_{|j},
\end{equation}
which suggests the notation:
\begin{equation}
\Theta _{~i}^{j}:=\mathfrak{T}\hat{L}^{-1}\dot{x}^{j}\dot{x}_{i}  \label{def Theta local}
\end{equation}%
As $\mathfrak{T}$ is a scalar invariant, the functions $\Theta _{~i}^{j},$
defined on the given fibered chart, transform under induced fibered
coordinate changes as the components of a tensor on $M$ (equivalently, as
d-tensor components on $TM$). Also, noticing that the last term in (\ref{splitting E_g}) can be written as:
$ \xi ^{i}\dot{x}_{i}\hat{L}^{-1}\nabla \mathfrak{T} = \xi ^{i} \dot{x}_{i}\hat{L}^{-1}\dot{x}^{j}\mathfrak{T}_{|j} = \Theta_{~i|j}^{j}\xi ^{i} $, this suggests to introduce the mappings $\Theta :\mathcal{X}(M)\rightarrow \Omega _{6}(J^{s}Y)$, $\mathcal{B}:\mathcal{X}%
(M)\rightarrow \Omega _{7}(J^{s+1}Y)$ given by
\begin{eqnarray}
\Theta (\xi _{0}) &=&(\Theta _{~i}^{j}\xi ^{i})\mathbf{i}_{\delta
_{j}}d\Sigma ^{+},  \label{def_theta} \\
\mathcal{B}(\xi _{0}) &=&-\Theta _{~i|j}^{j}\xi ^{i}d\Sigma ^{+},
\label{def_B}
\end{eqnarray}%
(where $\xi _{0}=\xi ^{i}\partial _{i}$). These mappings are well defined,
i.e., independent on the chosen coordinate charts; moreover,
they have $\Pi ^{s}$ (respectively, $\Pi ^{s+1}$)-horizontal values, they
are both linear in $\xi $ and obey (\ref{Theta_B_splitting}), which
completes the proof of the existence. Uniqueness of $\mathcal{B}$ and $%
\Theta $ follows from the uniqueness of the splitting (\ref{splitting E_g})
and the arbitrariness of $\xi ^{i}$.
\end{proof}

\bigskip

\textbf{Note.} The proof of the above result is based on a similar idea to the one of Lemma 2 in \cite{Voicu-em-tensors}. The essential difference, in the Finslerian case, is that naturality of Lagrangians is based on the group of diffeomorphisms of $M$ (and not of $PTM^{+}$ as one would have expected following \cite{Voicu-em-tensors}), i.e., naturality comes from a manifold of lower dimension than the one of the base space of our configuration manifold $Y$. This will result, as we will see below, in a ``weaker'' (averaged) form of the energy-momentum balance law.

Actually, taking into account (\ref{def Theta local}), in homogeneous
fibered coordinates, $\Theta $ is expressed as:
\begin{equation}
\Theta =\Theta _{~j}^{i}dx^{j}\otimes \mathbf{i}_{\mathbf{\delta }%
_{i}}d\Sigma ^{+}=\mathfrak{T}(\hat{F}_{\cdot j}dx^{j})\otimes \mathbf{i}_{l^{i}\mathbf{%
\delta }_{i}}d\Sigma ^{+},  \label{local expression Theta}
\end{equation}%
where $\hat{F}= \sqrt{|\hat{L}|}$; a quick computation shows that $\hat{L}^{-1}\dot{x}^{i}\dot{x}_{j} =\hat{F}_{\cdot j} l^{i},$ regardless of the sign of $\hat{L}$. Equivalently, in a coordinate-free writing:%
\begin{equation}
\Theta =\mathfrak{T}\omega ^{+}\otimes \mathbf{i}_{\ell }d\Sigma ^{+},
\label{Theta coordinate free}
\end{equation}%
where we have identified, by abuse of notation, the Reeb vector field $\ell
=l^{i}\delta _{i}\in \mathcal{X}(\mathcal{A}_{0}^{+})$ with the vector field
on $J^{s+1}Y$ obtained by replacing $\delta _{i}$ with the formal total
adapted derivative $\boldsymbol{\delta }_{i},$ i.e., with: $l^{i}\boldsymbol{\delta }_{i}\in \mathcal{X}(J^{s+1}Y)$. In the same fashion, the values $\omega _{%
\left[ \left( x,\dot{x}\right) \right] }^{+}$ of the mapping $\omega ^{+}:%
\mathcal{A}_{0}^{+}\rightarrow \Omega _{1}\left( M\right) $ are identified
with their pullbacks to $J^{s+1}Y.$

\begin{definition}[Energy-momentum distribution tensor]
The energy-momentum distribution tensor associated to a natural Lagrangian $%
\lambda _{m}^{+}$ on a bundle $Y=Y_{g}\times _{PTM^{+}}Y_{m}$, which is natural over a Finsler spacetime $M$,
is the $\mathcal{F}(M)$-linear mapping $\Theta :\mathcal{X}(M)\mapsto \Omega
_{6}(J^{s}Y)$ defined by (\ref{Theta coordinate free}).
\end{definition}

\begin{definition}[Energy-momentum scalar]
	We call the function $\mathfrak{T}: \mathcal{A}_{0}^{+} \rightarrow \mathbb{R}$, defined by the relation \eqref{E_g_1},  and explicitly given by \eqref{tau}, the energy-momentum scalar.
\end{definition}

We will call the $\mathcal{F}(M)$-linear mapping $\mathcal{B}:\mathcal{X}%
(M)\mapsto \Omega _{7}(J^{s+1}Y)$ defined by (\ref{def_B}), the \textit{%
balance function}, as energy-momentum conservation (or energy-momentum
balance) law is naturally characterized in terms of $\mathcal{B},$ as we
will see below.

\bigskip

\noindent \textbf{Averaged energy-momentum conservation law.}

Consider, in the following, local sections $\gamma\in \Gamma (Y)$ such
that  $\mathrm{supp}(J^s\gamma^*\lambda_m^+)\subset \mathcal{T}^+$.
This way, it makes sense to integrate the form $J^{s}\gamma
^{\ast }\mathbf{i}_{J^{s}\Xi }\mathcal{E}_{g}(\lambda _{m}^{+})$ on the
entire set $\mathcal{T}_{x}^{+}=\mathcal{O}_{x}^{+}$ of timelike directions
at $x.$

Consider a piece $D_{0}\subset M$ and denote by
\begin{equation*}
\mathcal{T}^{+}(D_{0}):=\underset{x\in D_{0}}{\cup }\mathcal{T}_{x}^{+}=%
\underset{x\in D_{0}}{\cup }\mathcal{O}_{x}^{+},
\end{equation*}%
the set of all timelike (equivalently, of observer) directions corresponding
to $D_{0}.$ Then, (\ref{Theta_B_splitting})\ becomes, with $\gamma :=\left(
L,\gamma _{m}\right) $:%
\begin{equation}
\underset{\mathcal{T}^{+}(D_{0})}{\int }J^{s}\gamma ^{\ast }\mathbf{i}%
_{J^{s}\Xi }\mathcal{E}_{g}(\lambda _{m}^{+})=\underset{\mathcal{T}%
^{+}(D_{0})}{\int }J^{s+1}\gamma ^{\ast }\mathcal{B}(\xi _{0})+\underset{%
\partial \mathcal{T}^{+}(D_{0})}{\int }J^{s}\gamma ^{\ast }\Theta (\xi _{0}).
\label{theta_B_splitting integral}
\end{equation}%
But, on-shell for $\gamma _{m},$ we have, according to (\ref{variation
lambda_m}): $J^{s}\gamma ^{\ast }\mathbf{i}_{J^{s}\Xi }\mathcal{E}%
_{g}(\lambda _{m}^{+})-J^{s}\gamma ^{\ast }d\mathcal{J}^{\Xi }~\simeq
_{\gamma _{m}}0;$ substituting into the above relation, this gives:
\begin{equation}
\underset{\mathcal{T}^{+}(D_{0})}{\int }J^{s+1}\gamma ^{\ast }\mathcal{B}%
(\xi _{0})+\underset{\partial \mathcal{T}^{+}(D_{0})}{\int }J^{s}\gamma
^{\ast }(\Theta (\xi _{0})-\mathcal{J}^{\Xi })~\simeq _{\gamma _{m}}0.
\label{integral variation lambda_m}
\end{equation}%
We are now able to prove the following result.

\begin{theorem}
Consider a bundle $Y_{m}$ over $PTM^{+},$ which is natural over $M$, and an arbitrary section $%
\gamma =({L},\gamma _{m})\in \Gamma (Y_{g}\times _{PTM^{+}}Y_{m})$ such that $\mathrm{supp}(J^s\gamma^*\lambda_m^+)\subset \mathcal{T}^+ $, then:
\begin{enumerate}
	\item \textbf{Averaged energy-momentum conservation law: }At any $x\in M$ and in any corresponding fibered chart:
	\begin{equation}
	\underset{\mathcal{T}_{x}^{+}}{\int }(\Theta _{~i|j}^{j}\circ J^{s+1}\gamma
	)d\Sigma _{x}^{+}=0,  \label{avg conservation law}
	\end{equation}%
	where $d\Sigma ^{+}=:d^{4}x\wedge d\Sigma _{x}^{+}.$

	\item $\Theta \left( \xi _{0}\right) $ \textbf{is a ``corrected Noether
		current''}, i.e., for any $\xi _{0}\in \mathcal{X}(M)$%
	\begin{equation}
	\underset{\partial \mathcal{T}^{+}(D_{0})}{\int }J^{s}\gamma ^{\ast }\Theta
	(\xi _{0})=\underset{\partial \mathcal{T}^{+}(D_{0})}{\int }J^{s}\gamma
	^{\ast }\mathcal{J}^{\Xi },  \label{Theta_Noether current}
	\end{equation}%
	where $\Xi $ denotes the canonical lift of $\xi _{0}$ to $Y.$
\end{enumerate}
\end{theorem}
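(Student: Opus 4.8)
The plan is to deduce both statements from the integrated, on-shell first variation formula \eqref{integral variation lambda_m}, combined with Lemma~\ref{Theta_B_Lemma} and a Fubini-type decomposition along the double fibration $Y\to PTM^{+}\to M$. For statement~1, I would first test \eqref{integral variation lambda_m} against generators $\xi _{0}\in\mathcal{X}(M)$ that are compactly supported in the interior of a piece $D_{0}\subset M$. Since $\Theta(\xi _{0})$ and the Noether current $\mathcal{J}^{\Xi}$ are $\mathcal{F}(M)$-linear in $\xi _{0}$ (and its finitely many derivatives), and since the standing hypothesis $\mathrm{supp}(J^{s}\gamma^{\ast}\lambda _{m}^{+})\subset\mathcal{T}^{+}$ forces, via \eqref{tau} and \eqref{Theta coordinate free}, the fibrewise support of $\mathfrak{T}\circ J^{s}\gamma$ to be a compact subset of each $\mathcal{T}_{x}^{+}$ lying away from $\partial\mathcal{T}_{x}^{+}$, the pulled-back integrand of the boundary term in \eqref{integral variation lambda_m} is supported in a compact subset of the interior of $\mathcal{T}^{+}(D_{0})$; hence its integral over $\partial\mathcal{T}^{+}(D_{0})$ vanishes. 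Equation \eqref{integral variation lambda_m} thus collapses, on-shell for $\gamma _{m}$, to $\int _{\mathcal{T}^{+}(D_{0})}J^{s+1}\gamma^{\ast}\mathcal{B}(\xi _{0})\simeq _{\gamma _{m}}0$.

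Next I would substitute $\mathcal{B}(\xi _{0})=-\Theta _{~i|j}^{j}\,\xi ^{i}\,d\Sigma ^{+}$ from \eqref{def_B}, write $d\Sigma ^{+}=d^{4}x\wedge d\Sigma _{x}^{+}$ as in the theorem statement, and invoke Fubini. Because $\xi ^{i}=\xi ^{i}(x)$ depends on the base point only, it factors out of the fibre integral, yielding
\begin{equation*}
\int _{D_{0}}\xi ^{i}(x)\left[\int _{\mathcal{T}_{x}^{+}}(\Theta _{~i|j}^{j}\circ J^{s+1}\gamma)\,d\Sigma _{x}^{+}\right]d^{4}x=0 .
\end{equation*}
The bracketed quantity is a smooth function of $x$ (a smooth integrand of compact fibrewise support depending smoothly on the parameter $x$), so by the fundamental lemma of the calculus of variations applied to each function $\xi ^{i}$ separately — and to arbitrary pieces $D_{0}$ — it vanishes identically, which is precisely \eqref{avg conservation law}.

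For statement~2, I would return to \eqref{integral variation lambda_m} for an \emph{arbitrary} $\xi _{0}\in\mathcal{X}(M)$, where the boundary term need no longer drop out. Solving for it and again inserting $\mathcal{B}(\xi _{0})=-\Theta _{~i|j}^{j}\xi ^{i}\,d\Sigma ^{+}$ and the splitting $d\Sigma ^{+}=d^{4}x\wedge d\Sigma _{x}^{+}$, one finds
\begin{equation*}
\int _{\partial\mathcal{T}^{+}(D_{0})}J^{s}\gamma^{\ast}\bigl(\Theta(\xi _{0})-\mathcal{J}^{\Xi}\bigr)\simeq _{\gamma _{m}}\int _{D_{0}}\xi ^{i}(x)\left[\int _{\mathcal{T}_{x}^{+}}(\Theta _{~i|j}^{j}\circ J^{s+1}\gamma)\,d\Sigma _{x}^{+}\right]d^{4}x ,
\end{equation*}
and the right-hand side vanishes by statement~1. (That the freedom of $\mathcal{J}^{\Xi}$ up to an exact form does not affect this is immediate from Stokes' theorem.) Rearranging gives \eqref{Theta_Noether current}.

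The step I expect to be the main obstacle is the support bookkeeping underpinning the two limiting operations above: one must check that the hypothesis on $\mathrm{supp}(J^{s}\gamma^{\ast}\lambda _{m}^{+})$ genuinely propagates through \eqref{E_g_1}, \eqref{tau}, \eqref{def_theta} and the first variation formula to $\mathcal{E}_{g}(\lambda _{m}^{+})$, $\Theta(\xi _{0})$, $\mathcal{B}(\xi _{0})$ and $\mathcal{J}^{\Xi}$, so that all fibre integrals over the non-compact $\mathcal{T}_{x}^{+}$ converge, that the factors $\hat{L}^{-1}$ in $\Theta _{~i}^{j}$ and $\left\vert\det g\right\vert/L^{2}$ in $d\Sigma ^{+}$ never reach $\partial\mathcal{T}_{x}^{+}$ (where $L=0$), and that the exchange of the order of integration is legitimate. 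Everything else is formal manipulation of \eqref{integral variation lambda_m}, \eqref{def_B} and \eqref{Theta coordinate free}.
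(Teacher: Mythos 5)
Your proposal is correct and follows essentially the same route as the paper: both derive statement~1 by killing the boundary term of the on-shell integrated first variation formula \eqref{integral variation lambda_m} for $\xi_0$ compactly supported in $D_0$ (using the support hypothesis fibrewise and the vanishing of $\xi_0$, hence of $\Xi$, near $\partial D_0$), then writing $\int_{\mathcal{T}^+(D_0)}J^{s+1}\gamma^*\mathcal{B}(\xi_0)$ as an iterated integral and invoking the arbitrariness of $\xi^i$, and both obtain statement~2 by feeding statement~1 back into \eqref{integral variation lambda_m}. Your explicit flagging of the support bookkeeping matches the (more tersely stated) argument in the paper's proof.
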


\begin{proof}
\begin{enumerate}
	\item Fix $x_{0}\in M.$ Consider an arbitrary piece $D_{0}\subset M$
	containing $x_{0}$ as an interior point and an arbitrary $\xi _{0}\in
	\mathcal{X}(M)$ with support contained in $D_{0}.$

	Now, let us have a look at the boundary term in (\ref{integral variation lambda_m}). Since the support of the integrand, at every $x\in M$, is strictly contained in $\mathcal{T}^+_{x}$,
	 the only possible nonzero values are obtained at points $\left[\left( x,\dot{x}\right) \right] $ with $x\in \partial D_{0}.$ But, at these
	points, $\xi _{0}$ identically vanishes (hence also $\Xi =0$, as $\Xi$ is built from $\xi$ and its derivatives), which means
	that this boundary term is actually zero. It follows:
	\begin{equation}
	\underset{\mathcal{T}^{+}(D_{0})}{\int }J^{s+1}\gamma ^{\ast }%
	\mathcal{B}(\xi _{0})~\simeq _{\gamma _{m}}0.
	\end{equation}

	In coordinates, this is:
	\begin{equation*}
	\underset{\mathcal{T}^{+}(D_{0})}{\int }(\Theta _{~i|j}^{j} \circ J^{s+1}\gamma) \xi ^{i}d\Sigma
	^{+}\simeq _{\gamma _{m}}0.
	\end{equation*}

	Squeezing $D_{0}$ around $x_{0}$ such that $D_{0}$ is contained into a
	single chart domain, the above integral can be written as an iterated
	integral $\underset{D_{0}}{\int }\xi ^{i}(\underset{\mathcal{T}_{x}^{+}}{%
		\int }(\Theta _{~i|j}^{j}\circ J^{s+1}\gamma) d\Sigma _{x}^{+})d^{4}x,$ which, taking into
	account the arbitrariness of $\xi ^{i}$, leads to the result.

	\item follows then immediately from (\ref{integral variation
		lambda_m}) and 1.
\end{enumerate}
\end{proof}

Relation (\ref{Theta_Noether current}) says that, the energy-momentum tensor
$\Theta (\xi _{0})$ is, at least up to a term which does not contribute to
the integral (\ref{Theta_Noether current})), the conserved Noether current $%
\mathcal{J}^{\Xi }$ - i.e. (see also \cite{Gotay}), it gives the correct
notions of energy and momentum of the system under discussion.

\begin{remark}
	Taking into account that $\mathcal{O}_{x}^{+}=\mathcal{T}%
	_{x}^{+}$, the averaged conservation law can be rewritten as:%
	\begin{equation}
		\underset{\mathcal{O}_{x}^{+}}{\int }(\Theta _{~i|j}^{j}\circ J^{s+1}\gamma
		)d\Sigma _{x}^{+}=0.  \label{conservation law O}
	\end{equation}
\end{remark}

It is worth noting that, due to the fact that naturality of Lagrangians comes from $M$, which is a space of lower dimension than the one of the space $PTM^{+}$ on which the action integral is considered, in the above relation, integration over $\mathcal{O}_{x}^{+}$ (or, equivalently, $\mathcal{T}_{x}^{+}$) cannot be removed, i.e., we can typically only establish an \textit{averaged} conservation law. This is a distinctive feature of Finslerian field theory.

\bigskip

\noindent\textbf{Energy-momentum density on $M$}

The mapping $\Theta :\mathcal{X}(M)\rightarrow \Omega (J^{s}Y)$ gives rise
to an energy-momentum tensor\textit{\ density} on $M$, by averaging over
observer (or timelike) directions $\mathcal{O}_{x}^{+}=\pi ^{+}(\mathcal{O}%
_{x}).$ Consider an arbitrary fibered chart on $Y$;  $\Theta (\xi )=\Theta
_{~j}^{i}\xi ^{j}\otimes \mathbf{i}_{\delta _{i}}d\Sigma ^{+}$. Then, for
any section $\gamma \in \Gamma(Y)$ such that $\mathrm{supp}(J^s\gamma^*\lambda_m^+)\subset \mathcal{T}^+ $, set
\begin{equation}
\mathcal{T}_{~j}^{i}(x):=\underset{\mathcal{O}_{x}^{+}}{\int }
(\Theta _{~j}^{i} \circ J^{s}\gamma)_{|(x,\dot{x})}d\Sigma _{x}^{+},~\ \ \ \forall x\in M\,.
\label{em_tensor_density}
\end{equation}%
Under the above assumption this integral is
finite, so the result is well defined. Moreover, given the expression of $%
d\Sigma _{x}^{+},$ the functions $\mathcal{T}_{~j}^{i}(x)$ represent the
components of a tensor density on $M.$

\bigskip

\noindent \textbf{Example: the energy momentum distribution tensor of a kinetic gas.}

The kinetic gas example, which motivated the whole above construction, has been previously presented from a somewhat pedestrian perspective. In \cite[Eqs. (42)-(43) ]{kinetic-gas}, the maps $\Theta$ and $B$, can be read off. We briefly identify these maps here from the more abstract and mathematically precise construction we presented.

In the case of a kinetic gas discussed in Section \ref{sssec:kingas}, the kinetic gas Lagrangian \eqref{eq:kgasLag} is given by $\lambda _{m}^{+}=m\varphi d\Sigma ^{+},$ where $\varphi$ is the 1-particle
distribution function, reinterpreted as a function of $x,\dot{x},L$ and its
derivatives,
\begin{equation*}
\varphi (x,\dot{x})=f(x,\dot{x},L(x,\dot{x}),...,L_{\cdot i\cdot j}(x,\dot{x}))
\end{equation*}
and $d \Sigma^{+}$ is chosen as the canonical volume form \eqref{canonical volume form}.
Varying $\lambda _{m}^{+}$, we use \eqref{tau} to obtain $\mathfrak{T}:=\dfrac{1}{2}m\varphi;$
accordingly, the energy-momentum tensor distribution $\Theta $ has the local
components, compare to \eqref{def Theta local},
\begin{equation*}
\Theta _{~j}^{i}=\dfrac{1}{2L}m \varphi~\dot{x}^{i}\dot{x}_{j}.
\end{equation*}%
For any kinetic gas, the averaged conservation law (\ref{conservation law O}%
) holds.

In particular, for collisionless gases, it is known that $\varphi $ is
subject to the \textit{Liouville equation} $\nabla \varphi =0,$ equivalently:%
\begin{equation*}
D_{\ell}\varphi =0.
\end{equation*}
Taking into account that $l_{~|j}^{i}=0$, we notice that the Liouville
equation is nothing else than a \textit{pointwise} \textit{covariant
conservation law }of $\Theta $:%
\begin{equation*}
D_{\delta _{i}}\Theta _{~j}^{i}=0.
\end{equation*}

\noindent \textbf{Particular case: Lorentzian spaces.}

On a Lorentzian manifold $(M,a)$, the quantities%
\begin{equation}
T_{~j}^{i}=\dfrac{1}{\sqrt{\left\vert \det a\right\vert }}\mathcal{T}%
_{~j}^{i}  \label{riemannian_em_tensor}
\end{equation}%
represent the components of a tensor of type (1,1) on $M,$ and their
Levi-Civita covariant derivatives are, \cite{Crampin}, just the integrals of
the Chern covariant derivatives of $\Theta :$ $T_{j;i}^{i}=(\sqrt{\left\vert
\det a\right\vert })^{-1}\underset{\mathcal{O}_{x}^{+}}{\int }J^{s}\gamma
^{\ast }\Theta _{~j|i}^{i}(x,\dot{x})d\Sigma _{x}.$ Hence, the
energy-momentum conservation law (\ref{avg conservation law}) reads%
\begin{equation*}
T_{~j;i}^{i}=0.
\end{equation*}

In the particular case of kinetic gases on a Lorentzian spacetime, our
expression (\ref{em_tensor_density}) of the energy-momentum density agrees
to the known one, see \cite{Sarbach}.

\bigskip

It is important to note that, in general Finsler spacetimes, we have no
metric tensor on $M,$ hence (\ref{riemannian_em_tensor}) makes no sense. All
we can get is an energy-momentum tensor\textit{\ density }on $M,$ by
averaging over observer directions as in (\ref{em_tensor_density}) and,
accordingly, the conservation law (\ref{conservation law O}) of the
energy-momentum distribution $\Theta $.

\section{Summary and Outlook}
In this article we have proposed a general framework for action based field theories on Finsler spacetimes. The starting point of our construction is the assumption that physical fields are homogeneous sections of suitable bundles defined over (conic subbundles of) the tangent bundle of a Finsler spacetime. Using the assumption of homogeneity, we have constructed an equivalent description of fields as sections of bundles over the positive projective tangent bundle \(PTM^+\) instead. This step is crucial for a well-defined application of the variational principle, as it allows for variations with compact support within \(PTM^+\), which is not possible in the aforementioned approach using homogeneous sections over the tangent bundle. Within this framework, we studied the implications of general covariance, and derived the corresponding conserved energy-momentum distribution. As a particular example, we studied the kinetic gas.

Since the framework we propose is kept very general, it can be applied to a wide range of conceivable theories. The most natural class of fields to study, besides the kinetic gas, would be d-tensor fields. The latter provide a simple generalization of tensor fields on the spacetime manifold, which attain a dependence on directions in the tangent space, in addition to their dependence on spacetime. This additional dependence could be employed to model a velocity-dependent interaction between such fields with observers or particles. Such a dependence would be expected in an effective description of the quantum nature of spacetime, and leads to a modified dispersion relation for highly energetic particles, which could possibly be detected in observations. An ongoing effort is to extend our construction to a well defined notion of  spinors and spinor field theories on general Finsler spacetimes.

Another potential application of our proposed framework is to address the so far unexplained observations in cosmology. The well-known standard model of cosmology, coined $\Lambda$CDM model as it models 95\% of the matter content of the universe as dark energy \(\Lambda\) and cold dark matter (CDM), both of which have so far eluded direct detection, is under growing tension due to discrepancies between the measured values of the Hubble parameter in different observations. The correct interpretation of these observations depends crucially on understanding the propagation of electromagnetic radiation (and, with the advent of multi-messenger astronomy, also of gravitational waves, neutrinos and high-energetic cosmic particles). A modified propagation law, as it could arise for a field propagating on a Finsler spacetime background, could therefore provide alternative explanations that might resolve the observed tension.



\begin{acknowledgments}
C.P. was funded by the Deutsche Forschungsgemeinschaft (DFG, German Research Foundation) - Project Number 420243324. M.H. was supported by the Estonian Research Council grant PRG356 “Gauge Gravity” and by the European Regional Development Fund through the Center of Excellence TK133 “The Dark Side of the Universe”. The authors would like to acknowledge networking support by the COST Action QGMM (CA18108), supported by COST (European Cooperation in Science and Technology). Also, they would like to express their thanks to the anonymous JMP referee, for his/her useful comments and questions.

This article may be downloaded for personal use only. Any other use requires prior permission of the authors and AIP Publishing. This article appeared in the Journal of Mathematical Physics and may be found at https://aip.scitation.org/doi/10.1063/5.0065944.
\end{acknowledgments}

\section*{Data Availability Statement}
The study presented in this article is of purely theoretical and mathematical nature. All results and all sources on which these results are based are cited. Data sharing is not applicable to this article as no new data were created or analyzed in this study


\appendix


\section{Jet bundles and the coordinate-free calculus of variations}

\label{app:A}

In this appendix we briefly present the jet bundle formalism, which allows
for a coordinate-free description of calculus of variations, in terms of
differential forms; for more details, we mainly refer to the monograph \cite%
{Krupka-book}.

\subsection{Fibered manifolds and their jet prolongation}

A \emph{fibered manifold} is a triple $\left( Y,\pi ,X\right) ,$ where $X,Y $
are smooth manifolds with $\dim X=n,$ $\dim Y=n+m$ and $\pi :Y\rightarrow X $
is a surjective submersion. The level sets $Y_{x}=\pi ^{-1}(x)$ are called
the fibers of~$Y.$

Any fibered manifold admits an atlas consisting of fibered charts. These are
local charts $(V,\psi ),$ $\psi =(x^{A},y^{\sigma })$ such that there exists
a local chart $(U,\phi ),$ $\phi =(x^{A})$ on $X$, with $\pi (V)=U,$
in which $\pi $ is represented as $\pi :(x^{A},y^{\sigma })\mapsto (x^{A}).$

In particular, \textit{fiber bundles, }as understood in \cite%
{Palais}, are fibered manifolds that are locally trivial, i.e., in the
above, each $V$ is homeomorphic to a Cartesian product $U\times Z,$
where $Z$ is a manifold, called the typical fiber.

\bigskip

Assume, in the following, that $\left( Y,\pi ,X\right) $ is a fibered
manifold. \textit{Local sections} $\gamma :U\rightarrow Y$ (with $U\subset X$
open) are smooth maps such that $\pi \circ \gamma =id_{X};$ in a fibered
chart, they are represented as:%
\begin{equation*}
\gamma :(x^{A})\mapsto (x^{A},y^{\sigma }(x^{A})).
\end{equation*}%
We denote by $\Gamma (Y)$ the set of sections of $(Y,\pi ,X).$ In the
following, capital Latin indices $A,B,C,...$ will run from 0 to $n-1$ and
Greek indices $\sigma ,\mu ,\nu ,\rho ,...$ will run from $1$ to $m.$

\bigskip

\noindent \textbf{Physical interpretation.} In field theory, these manifolds and quantities are interpreted as follows:
\begin{itemize}
	\item The manifold $Y$ is called the \textit{configuration space}.
	\item The base manifold $X$ is typically (but not always) interpreted as spacetime; a notable exception to this rule is Finslerian field theory, where $X = PTM^{+}$ is the positively projectivized tangent bundle of the spacetime manifold $M$ (and the naturality of Lagrangians will be discussed with respect to $M$). In the following, we will reserve the notation $M$ for manifolds to be interpreted as spacetimes and denote by $X$ generic base manifolds.
	\item Sections $\gamma \in \Gamma(Y)$ are interpreted as \textit{fields}.
\end{itemize}
\bigskip

The \emph{jet bundle} $J^{r}Y=\left\{ J_{x}^{r}\gamma ~|~\gamma \in \Gamma
(Y),~x\in X\right\} $ is naturally equipped with an atlas consisting of
fibered charts $(V^{r},\psi ^{r}),$ $\psi ^{r}=(x^{A},y^{\sigma
},y_{~C_{1}}^{\sigma },...,y_{~C_{1}C_{2}...C_{r}}^{\sigma })$ on $J^{r}Y,$
induced by fibered charts $(V,\psi ),$ via
\begin{equation}
y_{~C_{1}...C_{k}}^{\sigma }(J_{x}^{r}\gamma )=\dfrac{\partial ^{k}y^{\sigma
}}{\partial x^{C_{1}}...\partial x^{C_{k}}}(x^{A}).
\end{equation}%
Any section of $Y$ is naturally prolonged into a section $J^{r}\gamma $ of $J^{r}Y$; in a chart $(V^{r},\psi ^{r}):$
\begin{equation*}
J^{r}\gamma :(x^{A})\mapsto \left( x^{A},y^{\sigma }(x^{A}),\dfrac{\partial
y^{\sigma }}{\partial x^{A}}(x^{B}),....,\dfrac{\partial ^{r}y^{\sigma }}{%
\partial x^{A_{1}}...\partial x_{r}^{A}}(x^{B})\right) .
\end{equation*}%
When referring to local expressions of geometric objects on $J^{r}Y$, we
always understand their expressions in fibered charts $(V^{r},\psi ^{r})$ as
above.

$J^{r}Y$ is a fibered manifold over all lower order jet bundles $J^{s}Y,$ $%
0\leq s<r$ (where $J^{0}Y:=Y$), with canonical projections
\begin{equation*}
\pi ^{r,s}:J^{r}Y\rightarrow J^{s}Y,~\ \ (x^{A},y^{\sigma
},y_{~C_{1}}^{\sigma },...,y_{~C_{1}C_{2}...C_{r}}^{\sigma })\mapsto
(x^{A},y^{\sigma },y_{~C_{1}}^{\sigma },...,y_{~C_{1}C_{2}...C_{s}}^{\sigma
}).
\end{equation*}
$J^{r}Y$ is also a fibered manifold over $X,$ with projection
\begin{equation*}
\pi ^{r}:J^{r}Y\rightarrow X,~~\ \ \ \ (x^{A},y^{\sigma },y_{~C_{1}}^{\sigma
},...,y_{~C_{1}C_{2}...C_{r}}^{\sigma })\mapsto (x^{A}).
\end{equation*}

\subsection{Horizontal and contact forms}

\label{app:horcont}

Let us introduce the following sets on $J^{r}Y$:
\begin{enumerate}
\item $\Omega _{k}(J^{r}Y)$, the set of differential $k$-forms defined over
open subsets $W^{r}\subset J^{r}Y$

\item $\Omega (J^{r}Y):=\underset{k\in \mathbb{N}}{\bigoplus }\Omega _{k}(J^{r}Y)$
the set of all differential forms over open subsets $W^{r}\subset J^{r}Y$;

\item $\mathcal{X}(J^{r}Y):=\Gamma (TJ^{r}Y)$ the module of vector fields on
$W^{r}\subset J^{r}Y$;

\item $\mathcal{F}(J^{r}Y)$, the set of all smooth functions $f:W\rightarrow
\mathbb{R}$ defined on open subsets $W\subset J^{r}Y$.
\end{enumerate}
A differential form $\rho \in \Omega _{k}(J^{r}Y)$ is $\pi ^{r}$\textit{\
-horizonta}l, if $\mathbf{i}_{\Xi }\rho =0$ whenever $\Xi \in \mathcal{X}%
(J^{r}Y)$ is $\pi ^{r}$-vertical (i.e., whenever $d\pi ^{r}(\Xi )=0$). In a
fibered chart, any $\pi ^{r}$-horizontal form is expressed as:%
\begin{equation}
\rho =\dfrac{1}{k!}\rho _{A_{1}A_{2}...A_{k}}dx^{A_{1}}\wedge dx^{A_{2}}\wedge ...\wedge dx^{A_{k}},
\label{horizontal_form}
\end{equation}%
where $\rho _{A_{1}A_{2}...A_{k}}$ are smooth functions of the coordinates $x^{A},y^{\sigma},y_{~C_{1}}^{\sigma },...,y_{~C_{1}C_{2}...C_{r}}^{\sigma}$ on $J^{r}Y$. Similarly, $\pi
^{r,s}$\textit{-horizontal} forms, $0\leq s\leq r$ are locally generated by $%
dx^{A},dy^{\sigma },...,dy_{~C_{1}...C_{s}}^{\sigma }$. A particular example
of horizontal forms are Lagrangians, which we define in the next subsection.

The \textit{horizontalization }operator is the unique\ morphism of exterior
algebras $h:\Omega ^{r}(Y)\rightarrow \Omega ^{r+1}(Y)$ such that, for any $%
f\in \mathcal{F}(J^{r}Y)$ and any fibered chart: $hf=f\circ \pi ^{r+1,r}~$and%
\begin{equation}
hdf=d_{A}fdx^{A},  \label{hdf}
\end{equation}%
where $d_{A}f:=\partial _{A}f+\dfrac{\partial f}{\partial y^{\sigma }}%
y_{~A}^{\sigma }+...\dfrac{\partial f}{\partial y_{~C_{1}...C_{r}}^{\sigma }}%
y_{~C_{1}...C_{r}A}^{\sigma }$ is the total derivative (of order $r+1$) with
respect to $x^{A}.$ On the natural basis 1-forms, it acts as:
\begin{equation}
hdx^{A}:=dx^{A},~\ hdy^{\sigma }=y_{~A}^{\sigma
}dx^{A},...,hdy_{~C_{1}...C_{k}}^{\sigma }=y_{~C_{1}...C_{k}A}^{\sigma
}dx^{A},\ ~\ \ k=\overline{1,r}.  \label{horizontalization_basis}
\end{equation}%
A useful property is the following. For any $f\in \mathcal{F}(J^{r}Y)$, $%
\gamma \in \Gamma (Y):$
\begin{equation}
\partial _{A}(f\circ J^{r}\gamma )=J^{r+1}\gamma ^{\ast }d_{A}f.
\label{property_h}
\end{equation}

\bigskip

A differential form $\rho \in \Omega (J^{r}Y)$ is a \textit{contact form }if
$J^{r}\gamma ^{\ast }\rho =0,$ $\forall \gamma \in \Gamma (Y)$. For
instance,
\begin{equation}
\theta ^{\sigma }=dy^{\sigma }-y_{~C}^{\sigma }dx^{C},~\ \ \theta
_{~A_{1}}^{\sigma }=dy_{~A_{1}}^{\sigma }-y_{~A_{1}C}^{\sigma
}dx^{C},...,\theta _{~A_{1}A_{2}...A_{r-1}}^{\sigma
}=dy_{~A_{1}A_{2}...A_{r-1}}^{\sigma }-y_{~A_{1}A_{2}...A_{r-1}C}^{\sigma
}dx^{C},  \label{contact basis}
\end{equation}%
are contact forms on a given chart domain $V^{r}\subset J^{r}Y,$ providing a
local basis $\{dx^{A},\theta ^{\sigma },....,\theta
_{~A_{1}...A_{r-1}}^{\sigma },dy_{~A_{1}...A_{r}}^{\sigma }\}$ of the module
$\Omega _{1}(J^{r}Y),$ called the \textit{contact basis.}

Raising to the next ``floor'' $J^{r+1}Y,$ any differential form can be
uniquely split as%
\begin{equation*}
\left( \pi ^{r+1,r}\right) ^{\ast }\rho =h\rho +p\rho ,
\end{equation*}%
where $p\rho $ is contact. Intuitively, $h\rho $ is what will survive of $%
\rho $ when pulled back to $X$ by prolonged sections $J^{r+1}\gamma ,$ where
$\gamma \in \Gamma (Y),$ while $p\rho $ becomes invisible: $J^{r+1}\gamma
^{\ast }(p\rho )=0.$

In particular, a $k$-form $\rho \in \Omega (J^{r}Y)$ is $1$\textit{-contact }%
if $\mathbf{i}_{\Xi }\rho $ is a $\pi ^{r}$-horizontal form whenever $\Xi
\in \mathcal{X}(J^{r}Y)$ is $\pi ^{r}$-vertical; in coordinates, 1-contact forms $\rho$ can be recognized by the fact that, in their expression in the contact basis, each term contains exactly one of the contact basis 1-forms $\theta^{\sigma},...,\theta^{\sigma}_{~A_{1} ...A_{r}}$ defined in (\ref{contact basis})).

A $\pi ^{r,0}$-horizontal, 1-contact $(n+1)$-form $\eta \in \Omega
_{n+1}^{r}Y$ is called a \textit{source form}. Locally, a source form is
expressed as:
\begin{equation}
\eta =\eta _{\sigma }\theta ^{\sigma }\wedge d^{n}x,  \label{source_form}
\end{equation}%
where $\eta _{\sigma }=\eta _{\sigma }(x^{A},y^{\mu
},....y_{~A_{1}...A_{r}}^{\mu }).$

\bigskip

\noindent \textbf{Fibered morphisms.}

An \textit{automorphism }of a fibered manifold $\left( Y,\pi ,X\right) $ is,
\cite{Krupka-book}, a diffeomorphism $\Phi :Y\rightarrow Y$ such that exists
a mapping $\phi \in \textrm{Diff}(X)$ with $\pi \circ \Phi =\phi \circ \pi $, i.e.,
the following diagram is commutative:%
\begin{equation}
	\xymatrix{Y \ar^{\Phi}[r] \ar_{\pi}[d] & Y \ar^{\pi}[d]\\
		X \ar^{\phi}[r] & X}
\end{equation}

In this case, $\Phi $ is said to \textit{cover} $\phi .$ In coordinates,
these must be of the form:%
\begin{eqnarray}
\ \phi &:&(x^{A})\mapsto \tilde{x}^{A}(x^{B})  \label{fibered_morphism_base}
\\
\Phi &:&(x^{A},y^{\sigma })\mapsto (\tilde{x}^{A}(x^{B}),\tilde{y}^{\sigma
}(x^{B},y^{\mu })).  \label{fibered_morphism}
\end{eqnarray}

The automorphism $\Phi $ is called \textit{strict} if $\phi =id_{X}.$

Any generator $\Xi $ of a 1-parameter group $\left\{ \Phi _{\varepsilon
}\right\} $ of automorphisms of $Y$ is a $\pi $-projectable\textit{\ }vector
field, i.e, $\pi _{\ast }\Xi $ is a well defined vector field on $X$; in a
fibered chart, projectable vector fields are represented as:%
\begin{equation}
\Xi =\xi ^{A}(x^{B})\partial _{A}+\Xi ^{\sigma }(x^{B},y^{\mu }) \tfrac{\partial}{\partial y^\sigma}.
\end{equation}%
In particular, 1-parameter groups of strict automorphisms are generated by $%
\pi $-\textit{vertical} vector fields $\Xi =\Xi ^{\sigma }(x^{B},y^{\mu
})\tfrac{\partial}{\partial y^\sigma}.$

Automorphisms $\Phi :Y\rightarrow Y$ are prolonged into automorphisms of $%
J^{r}Y$ as: $J^{r}\Phi (J_{x}^{r}\gamma ):=J_{\phi (x)}^{r}(\Phi \circ
\gamma \circ \phi ^{-1}).$ The generator of the 1-parameter group $%
\left\{ J^{r}\Phi _{\varepsilon }\right\} ,$ with $\Phi _{\varepsilon }$ as
above, is called the $r$-th prolongation of the vector field $\Xi $ and
denoted by $J^{r}\Xi .$ In particular, for $r=1,$ this is given by:%
\begin{equation*}
J^{1}\Xi =\xi ^{A}\partial _{A}+\Xi ^{\sigma }\partial _{\sigma }+\Xi
_{~A}^{\sigma }\tfrac{\partial }{\partial y_{~A}^{\sigma }},~\ \ \ \ \ \Xi
_{~A}^{\sigma }=d_{A}\Xi ^{\sigma }-y_{~A}^{\sigma }\xi ^{A}.\
\end{equation*}

\subsection{Lagrangians and first variation formula}

\label{app:Lagrangians}

A \textit{Lagrangian} is defined as a $\pi^{r}$-horizontal form $\lambda \in
\Omega _{n}^{r}Y$ of degree $n=\dim X$\textit{; }locally,
\begin{equation}
\lambda =\mathcal{L}d^{n}x,~\ \ \ \ \ \ \mathcal{L=L}(x^{A},y^{\sigma
},...,y_{A_{1}...A_{r}}^{\sigma }),  \label{general Lagrangian}
\end{equation}%
where $d^{n}x:=dx^{1}\wedge ...\wedge dx^{n}.$

By a \textit{piece} $D\subset X,$ we understand, \cite{Krupka-book}, a
compact $n$-dimensional submanifold with boundary of $X$. The \textit{%
action\ }attached to the Lagrangian (\ref{general Lagrangian}) and to a
piece $D\subset X$ is the function $S_{D}:\Gamma (Y)\rightarrow \mathbb{R},$
given by:
\begin{equation*}
S_{D}(\gamma )=\underset{D}{\int }J^{r}\gamma ^{\ast }\lambda .
\end{equation*}%
Consider an arbitrary 1-parameter group $\left\{ \Phi _{\varepsilon
}\right\} $ of automorphisms of $Y,$ with ($\pi$-projectable) generator $\Xi \in \mathcal{X}(Y).$
This will induce a deformation $\gamma \mapsto \gamma _{\varepsilon }:=\Phi
_{\varepsilon }\circ \gamma \circ \phi _{\varepsilon }^{-1}$ of sections $%
\gamma \in \Gamma (Y):$%

\begin{equation}
	\xymatrix{Y \ar^{\Phi_{\epsilon}}[r] & Y\\
		X \ar^{\gamma}[u] & X \ar_{\phi_{\epsilon}^{-1}}[l] \ar_{\gamma_{\epsilon}}[u]}
\end{equation}


The \textit{variation} $\delta _{\Xi }S_{D}(\gamma ):=\dfrac{d}{d\varepsilon
}|_{\varepsilon =0}S_{\phi _{\varepsilon }(D)}(\gamma _{\varepsilon })$ is
then expressed as the Lie derivative:%
\begin{equation}
\delta _{\Xi }S_{D}(\gamma )=\underset{D}{\int }J^{r}\gamma ^{\ast }%
\mathfrak{L}_{J^{r}\Xi }\lambda .  \label{variation}
\end{equation}%
A section $\gamma \in \Gamma (Y)$ is a \textit{critical section\ }for $S,$
if for any compact $D\subset X$ and for any $\pi$-projectable $\Xi \in \mathcal{X}(Y)$ such
that $supp(\Xi \circ \gamma )\subset D$, there holds: $\delta_{\Xi} S_{D}(\gamma
)=0.$

\bigskip

For any Lagrangian $\lambda \in \Omega _{n}(J^{r}Y)$ and any $\Xi \in
\mathcal{X}(Y),$ there holds the \textit{first variation formula:}%
\begin{equation}
J^{r}\gamma ^{\ast }(\mathfrak{L}_{J^{r}\Xi }\lambda )=J^{s}\gamma ^{\ast }
\mathbf{i}_{J^{s}\Xi }\mathcal{E}_{\lambda }-J^{s}\gamma ^{\ast }d\mathcal{%
\ J}^{\Xi },  \label{first_variation_general}
\end{equation}
where:

\begin{itemize}
\item $\mathcal{E}_{\lambda }\in \Omega _{n+1}(J^{s}Y)$ is a source form of
order $s\leq 2r,$ called the \textit{Euler-Lagrange form}\footnote{The coordinate-free definition of the Euler-Lagrange form associated to a
Lagrangian $\lambda $ employs the notion of Lepage equivalent of $\lambda ,$
see \cite{Krupka-book}, p. 122 and 125.; yet, for our purposes, the precise expressions of Lepage equivalents of our Lagrangians will not be necessary.}; locally, if $\lambda =\mathcal{L}d^{n}x,$
then:
\begin{equation*}
\mathcal{E}_{\lambda }=E_{\sigma }\theta ^{\sigma }\wedge d^{n}x,
\end{equation*}
with:\textit{\ }
\begin{equation}
E_{\sigma }=\dfrac{\delta \mathcal{L}}{\delta y^{\sigma }}=\dfrac{\partial
\mathcal{L}}{\partial y^{\sigma }}-d_{A}\dfrac{\partial \mathcal{L}}{
\partial y_{~A}^{\sigma }}+...+(-1)^{r}d_{A_{1}}...d_{A_{r}}\dfrac{\partial
\mathcal{L}}{\partial y_{~A_{1}...A_{r}}^{\sigma }}.  \label{EL expressions}
\end{equation}
The section $\gamma \in \Gamma (Y)$ is critical for $\lambda $ if and only
if $E_{\sigma }\circ J^{s}\gamma =0.$

\item The $\left( n-1\right) $-form $\mathcal{J}^{\Xi }\in \Omega
_{n-1}(J^{s}Y)$ is called the \textit{Noether current }associated with $%
\lambda $ and to the vector field $\Xi .$ If $\Xi $ is a symmetry generator
for $\lambda ,$ i.e., if $\mathfrak{L}_{J^{r}\Xi }\lambda =0,$ then, \textit{%
\ Noether's first theorem} states that the Noether current is conserved
along critical sections:
\begin{equation}
J^{s}\gamma ^{\ast }d\mathcal{J}^{\Xi }\approx 0,  \label{J_conservation}
\end{equation}
where $\approx $ denotes equality on-shell, i.e., for critical sections $%
\gamma .$
\end{itemize}

The Euler-Lagrange form of $\lambda $ is unique, while the Noether current $%
\mathcal{J}^{\Xi }$ is only unique up to an exact form $d\rho $.

In integral form, the first variation formula reads:%
\begin{equation}
\underset{D}{\int }J^{r}\gamma ^{\ast }(\mathfrak{L}_{J^{r}\Xi }\lambda )=
\underset{D}{\int }J^{s}\gamma ^{\ast }\mathbf{i}_{J^{s}\Xi }\mathcal{E}
(\lambda )-\underset{\partial D}{\int }J^{s}\gamma ^{\ast }\mathcal{J}^{\Xi
}.  \label{first_variation_integral}
\end{equation}

\begin{remark}
\begin{enumerate}
\item The fact that $\mathcal{E}_{\lambda }=E_{\sigma }\theta ^{\sigma
}\wedge d^{n}x$ is a source form implies that locally, only the $\partial
_{A}$ and $\tfrac{\partial}{\partial y^\sigma}$-components of $J^{r}\Xi $ will contribute to
$\textbf{i}_{J^{s}\Xi }\mathcal{E}_{\lambda }$ (i.e., higher order
components of $J^{r}\Xi $ will not contribute to it):
\begin{equation}
\mathbf{i}_{J^{s}\Xi }\mathcal{E}_{\lambda }=(\tilde{\Xi}^{\sigma }E_{\sigma
})d^{n}x,~\ \ ~\ \ \tilde{\Xi}^{\sigma }=\Xi ^{\sigma }-y_{~C}^{\sigma }\xi
^{C}.  \label{Csi_tilde}
\end{equation}%
The functions $(\tilde{\Xi}^{\sigma }\circ J^{s}\gamma ):X\rightarrow
\mathbb{R}$ are commonly denoted in the literature by $\delta y^{\sigma }.$

\item In order to identify the Euler-Lagrange form, it is sufficient to use $%
\pi $\textit{-vertica}l variation vector fields $\Xi \in \mathcal{X}(Y).$
Yet, general vector fields are needed in discussing general covariance and
its consequence, energy-momentum conservation.
\end{enumerate}
\end{remark}

\noindent \textbf{Natural bundles and natural Lagrangians.}\label{app:natbun}

Let $\mathcal{M}_{n}$ denote the category of smooth $n$-dimensional
manifolds, with smooth embeddings as morphisms and $\mathcal{FB},$ the
category of smooth fiber bundles, whose morphisms are smooth fibered
morphisms.

A \textit{natural bundle functor over} $n$\textit{-manifolds} is, \cite%
{Palais}, a functor $\mathfrak{F}:\mathcal{M}_{n}\rightarrow \mathcal{FB},$
such that:

\begin{enumerate}
\item For each $M\in Ob(\mathcal{M}_{n}),$ $\mathfrak{F}(M)$ is a
 fiber bundle over $M$;

\item For each embedding $\alpha _{0}:M\rightarrow M^{\prime }\in Morf(
\mathcal{M}_{n}),$ the fibered manifold morphism $\mathfrak{F}(\alpha _{0}):
\mathfrak{F}(M)\rightarrow \mathfrak{F}(M^{\prime })$ covers $\alpha _{0}.$
\end{enumerate}

If $Y=\mathfrak{F}(M),$ then any automorphism $\phi $ of $M\in Ob(\mathcal{M}%
_{n})$ admits a canonical (or natural) lift $\Phi :=\mathfrak{F}(\phi )$ to $%
Y$. These natural lifts encode the transformations of fields - more
precisely, their local expressions are identical to transition functions on $%
Y$ (see, e.g., \cite{Giachetta}, \cite{Fatibene}). For instance, if $Y$ is a
bundle of tensors of over $M,$ then the canonical lift $\Phi =\mathfrak{F}%
(\phi )$ of $\phi \in Diff(M)$ is given by pullback/pushforward.

Passing to infinitesimal generators, any vector field $\xi \in \mathcal{X}%
(M) $ admits a canonical lift $\Xi :=\mathfrak{F}(\xi )\in \mathcal{X}(Y)$;
in a fibered chart, the components $\Xi ^{\sigma }$ can always be expressed
in terms of the components $\xi ^{i}$ of $\xi $ and a finite number of
partial derivatives thereof, \cite{Gotay}.

For example, in the case of the bundle of tensors $Y=T_{q}^{p}M$ of type $%
(p,q)$ over $M,$ one obtains $\Xi =\xi ^{i}\partial _{i}+\Xi
_{j_{1}...j_{q}}^{i_{1}...i_{p}}\dfrac{\partial }{\partial
	y_{j_{1}...j_{q}}^{i_{1}...i_{p}}},$ where:%
\[
\Xi _{j_{1}...j_{q}}^{i_{1}...i_{p}}=\xi
_{~,h}^{i_{1}}y_{j_{1}...j_{q}}^{hi_{2}...i_{p}}+...\xi
_{~,h}^{i_{p}}y_{j_{1}...j_{q}}^{i_{1}...i_{p-1}h}-\xi
_{~,j_{1}}^{h}y_{hj_{2}...j_{q}}^{i_{1}...i_{p}}-\xi
_{~,j_{1}}^{h}y_{j_{1}...j_{q-1}h}^{i_{1}...i_{p}}.
\]

\bigskip

A globally defined Lagrangian $\lambda \in \Omega _{n}(J^{r}\mathfrak{F}(M))$
is called \textit{natural}, or \textit{generally covariant}, if it is
invariant under canonical lifts of arbitrary diffeomorphisms of spacetime,
i.e., $J^{r}\mathfrak{F}(\phi )^{\ast }\lambda =\lambda $ for all $\phi \in
Diff(M),$ \cite{Fatibene}. Using the formal
similarity between lifts of active diffeomorphisms $\phi \in Diff(M)$ and
(manifold-induced) fibered coordinate changes on $\mathfrak{F}(M),$
naturality amounts to the fact that $\lambda $ must be invariant to \textit{any} such coordinate changes (defined on \textit{any} manifold $\mathfrak{F}(M),$
where $M\in Ob(\mathcal{M}_{n})$). In terms of infinitesimal generators, this
reads:
\begin{equation}
\mathfrak{L}_{J^{r}\mathfrak{F}(\xi )}\lambda =0,
\label{general covariance condition}
\end{equation}
for all $\xi \in \mathcal{X}(M).$
General covariance gives rise to a notion of energy-momentum tensor, \cite{Gotay}.

\bibliographystyle{unsrturl}
\bibliography{MathFound}

\begin{thebibliography}{10}

\bibitem{Riemann1}
Bernhard Riemann.
\newblock {{\"U}ber die Hypothesen, welche der Geometrie zu Grunde liegen}.
\newblock {\em Abhandlungen der K{\"o}niglichen Gesellschaft der Wissenschaften
  zu G{\"o}ttingen}, 13:133--150, 1868.
\newblock URL: \url{http://www.deutschestextarchiv.de/riemann_hypothesen_1867}.

\bibitem{Riemann2}
Bernhard Riemann.
\newblock On the hypotheses which lie at the bases of geometry.
\newblock {\em Nature}, 8:14--17, 1873.
\newblock \href {https://doi.org/10.1038/008014a0}
  {\path{doi:10.1038/008014a0}}.

\bibitem{Finsler}
P.~Finsler.
\newblock {\em \"{U}ber Kurven und Fl\"{a}chen in allgemeinen R\"{a}umen}.
\newblock PhD thesis, Georg-August Universit\"{a}t zu G\"{o}ttingen, 1918.

\bibitem{Bao}
David Bao, S.-S. Chern, and Z.~Shen.
\newblock {\em {An Introduction to Finsler-Riemann Geometry}}.
\newblock Springer, New York, 2000.

\bibitem{Bucataru}
R.~Miron and I.~Bucataru.
\newblock {\em Finsler Lagrange geometry}.
\newblock Editura Academiei Romane, 2007.

\bibitem{EinsteinGR}
Albert Einstein.
\newblock {Die Feldgleichungen der Gravitation}.
\newblock {\em Sitzung der physikalisch-mathematischen Klasse}, pages 844--847,
  1915.
\newblock URL: \url{http://echo.mpiwg-berlin.mpg.de/MPIWG:ZZB2HK6W}.

\bibitem{Randers}
Gunnar Randers.
\newblock On an asymmetrical metric in the four-space of general relativity.
\newblock {\em Phys. Rev.}, 59:195--199, 1941.
\newblock URL: \url{https://link.aps.org/doi/10.1103/PhysRev.59.195}, \href
  {https://doi.org/10.1103/PhysRev.59.195} {\path{doi:10.1103/PhysRev.59.195}}.

\bibitem{Asanov}
G.~S. Asanov.
\newblock {\em Finsler Geometry, Relativity and Gauge Theories}.
\newblock D. Reidel Publishing Company, 1985.

\bibitem{Pfeifer:2019wus}
Christian Pfeifer.
\newblock {Finsler spacetime geometry in Physics}.
\newblock {\em Int. J. Geom. Meth. Mod. Phys.}, 16(supp02):1941004, 2019.
\newblock \href {http://arxiv.org/abs/1903.10185} {\path{arXiv:1903.10185}},
  \href {https://doi.org/10.1142/S0219887819410044}
  {\path{doi:10.1142/S0219887819410044}}.

\bibitem{Cerveny}
V.~Cerveny.
\newblock Fermat's variational principle for anisotropic inhomogeneous media.
\newblock {\em Studia Geophysica et Geodaetica}, 46:567, 2002.

\bibitem{Klimes}
L.~Klimes.
\newblock Relation of the wave-propagation metric tensor to the curvatures of
  the slowness and ray- velocity surfaces.
\newblock {\em Studia Geophysica et Geodaetica}, 46:589, 2002.

\bibitem{MARKVORSEN2016208}
Steen Markvorsen.
\newblock A finsler geodesic spray paradigm for wildfire spread modelling.
\newblock {\em Nonlinear Analysis: Real World Applications}, 28:208 -- 228,
  2016.
\newblock \href {https://doi.org/https://doi.org/10.1016/j.nonrwa.2015.09.011}
  {\path{doi:https://doi.org/10.1016/j.nonrwa.2015.09.011}}.

\bibitem{Yajima2009}
Takahiro Yajima and Hiroyuki Nagahama.
\newblock {Finsler geometry of seismic ray path in anisotropic media}.
\newblock {\em Proceedings of the Royal Society A: Mathematical, Physical and
  Engineering Sciences}, 465(2106):1763--1777, 2009.
\newblock \href {https://doi.org/10.1098/rspa.2008.0453}
  {\path{doi:10.1098/rspa.2008.0453}}.

\bibitem{Perlick}
Volker Perlick.
\newblock {\em Ray Optics, {Fermat's} Principle, and Applications to General
  Relativity}.
\newblock Number~61 in Lecture Notes in Physics. Springer, Berlin, 2000.
\newblock \href {https://doi.org/10.1007/3-540-46662-2}
  {\path{doi:10.1007/3-540-46662-2}}.

\bibitem{Rubilar:2007qm}
G.~F. Rubilar.
\newblock {Linear pre-metric electrodynamics and deduction of the light cone}.
\newblock {\em Annalen Phys.}, 11:717--782, 2002.
\newblock \href {http://arxiv.org/abs/0706.2193} {\path{arXiv:0706.2193}},
  \href
  {https://doi.org/10.1002/1521-3889(200211)11:10/11<717::AID-ANDP717>3.0.CO;2-6}
  {\path{doi:10.1002/1521-3889(200211)11:10/11<717::AID-ANDP717>3.0.CO;2-6}}.

\bibitem{TAVAKOL198523}
R.K. Tavakol and N.~Van Den~Bergh.
\newblock {Finsler spaces and the underlying geometry of space-time}.
\newblock {\em Physics Letters A}, 112(1):23 -- 25, 1985.
\newblock \href {https://doi.org/https://doi.org/10.1016/0375-9601(85)90453-0}
  {\path{doi:https://doi.org/10.1016/0375-9601(85)90453-0}}.

\bibitem{Tavakol1986}
R.~K. Tavakol and N.~Van~den Bergh.
\newblock {Viability criteria for the theories of gravity and finsler spaces}.
\newblock {\em General Relativity and Gravitation}, 18(8):849 -- 859, 1986.
\newblock \href {https://doi.org/10.1007/BF00770205}
  {\path{doi:10.1007/BF00770205}}.

\bibitem{Schreck:2015seb}
M.~Schreck.
\newblock {Classical Lagrangians and Finsler structures for the nonminimal
  fermion sector of the Standard-Model Extension}.
\newblock {\em Phys. Rev.}, D93(10):105017, 2016.
\newblock \href {http://arxiv.org/abs/1512.04299} {\path{arXiv:1512.04299}},
  \href {https://doi.org/10.1103/PhysRevD.93.105017}
  {\path{doi:10.1103/PhysRevD.93.105017}}.

\bibitem{Kostelecky:2010hs}
Alan~V. Kostelecky and Neil Russell.
\newblock {Classical kinematics for Lorentz violation}.
\newblock {\em Phys. Lett.}, B693:443--447, 2010.
\newblock \href {http://arxiv.org/abs/1008.5062} {\path{arXiv:1008.5062}},
  \href {https://doi.org/10.1016/j.physletb.2010.08.069}
  {\path{doi:10.1016/j.physletb.2010.08.069}}.

\bibitem{Kostelecky:2003fs}
V.~Alan Kostelecky.
\newblock {Gravity, Lorentz violation, and the standard model}.
\newblock {\em Phys. Rev.}, D69:105009, 2004.
\newblock \href {http://arxiv.org/abs/hep-th/0312310}
  {\path{arXiv:hep-th/0312310}}, \href
  {https://doi.org/10.1103/PhysRevD.69.105009}
  {\path{doi:10.1103/PhysRevD.69.105009}}.

\bibitem{Kostelecky:2011qz}
Alan Kostelecky.
\newblock {Riemann-Finsler geometry and Lorentz-violating kinematics}.
\newblock {\em Phys. Lett.}, B701:137--143, 2011.
\newblock \href {http://arxiv.org/abs/1104.5488} {\path{arXiv:1104.5488}},
  \href {https://doi.org/10.1016/j.physletb.2011.05.041}
  {\path{doi:10.1016/j.physletb.2011.05.041}}.

\bibitem{Bogoslovsky1994}
G.Yu. Bogoslovsky.
\newblock A viable model of locally anisotropic space-time and the finslerian
  generalization of the relativity theory.
\newblock {\em Fortschritte der Physik/Progress of Physics}, 42(2):143--193,
  1994.
\newblock \href {https://doi.org/10.1002/prop.2190420203}
  {\path{doi:10.1002/prop.2190420203}}.

\bibitem{Bogoslovsky}
G.Yu. Bogoslovsky.
\newblock A special-relativistic theory of the locally anisotropic space-time.
\newblock {\em Il Nuovo Cimento B Series 11}, 40:99, 1977.

\bibitem{Raetzel:2010je}
Dennis Raetzel, Sergio Rivera, and Frederic~P. Schuller.
\newblock {Geometry of physical dispersion relations}.
\newblock {\em Phys.Rev.}, D83:044047, 2011.
\newblock \href {http://arxiv.org/abs/1010.1369} {\path{arXiv:1010.1369}},
  \href {https://doi.org/10.1103/PhysRevD.83.044047}
  {\path{doi:10.1103/PhysRevD.83.044047}}.

\bibitem{Amelino-Camelia:2014rga}
Giovanni Amelino-Camelia, Leonardo Barcaroli, Giulia Gubitosi, Stefano
  Liberati, and Niccoló Loret.
\newblock {Realization of doubly special relativistic symmetries in Finsler
  geometries}.
\newblock {\em Phys. Rev.}, D90(12):125030, 2014.
\newblock \href {http://arxiv.org/abs/1407.8143} {\path{arXiv:1407.8143}},
  \href {https://doi.org/10.1103/PhysRevD.90.125030}
  {\path{doi:10.1103/PhysRevD.90.125030}}.

\bibitem{Lobo:2020qoa}
Iarley~P. Lobo and Christian Pfeifer.
\newblock {Reaching the Planck scale with muon lifetime measurements}.
\newblock {\em Phys. Rev. D}, 103(10):106025, 2021.
\newblock \href {http://arxiv.org/abs/2011.10069} {\path{arXiv:2011.10069}},
  \href {https://doi.org/10.1103/PhysRevD.103.106025}
  {\path{doi:10.1103/PhysRevD.103.106025}}.

\bibitem{Gibbons:2007iu}
G.W. Gibbons, Joaquim Gomis, and C.N. Pope.
\newblock General very special relativity is {F}insler geometry.
\newblock {\em Phys.Rev.}, D76:081701, 2007.
\newblock \href {http://arxiv.org/abs/0707.2174} {\path{arXiv:0707.2174}}.

\bibitem{Lammerzahl:2018lhw}
Claus Lammerzahl and Volker Perlick.
\newblock {Finsler geometry as a model for relativistic gravity}.
\newblock 2018.
\newblock URL:
  \url{https://inspirehep.net/record/1657800/files/1802.10043.pdf}, \href
  {http://arxiv.org/abs/1802.10043} {\path{arXiv:1802.10043}}.

\bibitem{Rutz}
S.~F. Rutz.
\newblock A {F}insler generalisation of {E}instein's vacuum field equations.
\newblock {\em General Relativity and Gravitation}, 25:1139, 1993.

\bibitem{Pfeifer-Wohlfarthgravity}
Christian Pfeifer and Mattias N.~R. Wohlfarth.
\newblock {Finsler geometric extension of Einstein gravity}.
\newblock {\em Phys. Rev.}, D85:064009, 2012.
\newblock \href {http://arxiv.org/abs/1112.5641} {\path{arXiv:1112.5641}},
  \href {https://doi.org/10.1103/PhysRevD.85.064009}
  {\path{doi:10.1103/PhysRevD.85.064009}}.

\bibitem{kinetic-gas}
Manuel Hohmann, Christian Pfeifer, and Nicoleta Voicu.
\newblock {Relativistic kinetic gases as direct sources of gravity}.
\newblock {\em Phys. Rev. D}, 101(2):024062, 2020.
\newblock \href {http://arxiv.org/abs/1910.14044} {\path{arXiv:1910.14044}},
  \href {https://doi.org/10.1103/PhysRevD.101.024062}
  {\path{doi:10.1103/PhysRevD.101.024062}}.

\bibitem{Minguzzi:2014fxa}
E.~Minguzzi.
\newblock {The connections of pseudo-Finsler spaces}.
\newblock {\em Int. J. Geom. Meth. Mod. Phys.}, 11(07):1460025, 2014.
\newblock [Erratum: Int.J.Geom.Meth.Mod.Phys. 12, 1592001 (2015)].
\newblock \href {http://arxiv.org/abs/1405.0645} {\path{arXiv:1405.0645}},
  \href {https://doi.org/10.1142/S0219887814600251}
  {\path{doi:10.1142/S0219887814600251}}.

\bibitem{Kouretsis:2008ha}
A.P. Kouretsis, M.~Stathakopoulos, and P.C. Stavrinos.
\newblock {The General Very Special Relativity in Finsler Cosmology}.
\newblock {\em Phys.Rev.}, D79:104011, 2009.
\newblock \href {http://arxiv.org/abs/0810.3267} {\path{arXiv:0810.3267}}.

\bibitem{Mavromatos:2010nk}
Nick~E. Mavromatos, Vasiliki~A. Mitsou, Sarben Sarkar, and Ariadne Vergou.
\newblock {Implications of a Stochastic Microscopic Finsler Cosmology}.
\newblock {\em Eur. Phys. J.}, C72:1956, 2012.
\newblock \href {http://arxiv.org/abs/1012.4094} {\path{arXiv:1012.4094}},
  \href {https://doi.org/10.1140/epjc/s10052-012-1956-7}
  {\path{doi:10.1140/epjc/s10052-012-1956-7}}.

\bibitem{Papagiannopoulos:2017whb}
G.~Papagiannopoulos, S.~Basilakos, A.~Paliathanasis, S.~Savvidou, and P.~C.
  Stavrinos.
\newblock {Finsler–Randers cosmology: dynamical analysis and growth of matter
  perturbations}.
\newblock {\em Class. Quant. Grav.}, 34(22):225008, 2017.
\newblock \href {http://arxiv.org/abs/1709.03748} {\path{arXiv:1709.03748}},
  \href {https://doi.org/10.1088/1361-6382/aa8be1}
  {\path{doi:10.1088/1361-6382/aa8be1}}.

\bibitem{Li:2015sja}
Xin Li, Sai Wang, and Zhe Chang.
\newblock {Anisotropic inflation in the Finsler spacetime}.
\newblock {\em Eur. Phys. J.}, C75(6):260, 2015.
\newblock \href {http://arxiv.org/abs/1502.02256} {\path{arXiv:1502.02256}},
  \href {https://doi.org/10.1140/epjc/s10052-015-3468-8}
  {\path{doi:10.1140/epjc/s10052-015-3468-8}}.

\bibitem{Hohmann:2016pyt}
Manuel Hohmann and Christian Pfeifer.
\newblock {Geodesics and the magnitude-redshift relation on cosmologically
  symmetric Finsler spacetimes}.
\newblock {\em Phys. Rev.}, D95(10):104021, 2017.
\newblock \href {http://arxiv.org/abs/1612.08187} {\path{arXiv:1612.08187}},
  \href {https://doi.org/10.1103/PhysRevD.95.104021}
  {\path{doi:10.1103/PhysRevD.95.104021}}.

\bibitem{Saridakis:2021lqd}
Emmanuel~N. Saridakis et~al.
\newblock {Modified Gravity and Cosmology: An Update by the CANTATA Network}.
\newblock 5 2021.
\newblock \href {http://arxiv.org/abs/2105.12582} {\path{arXiv:2105.12582}}.

\bibitem{Beem}
J.~K. Beem.
\newblock Indefinite {F}insler spaces and timelike spaces.
\newblock {\em Can. J. Math.}, 22:1035, 1970.

\bibitem{Pfeifer:2011tk}
C.~Pfeifer and M.~N.~R. Wohlfarth.
\newblock {Causal structure and electrodynamics on Finsler spacetimes}.
\newblock {\em Phys.Rev.}, D84:044039, 2011.
\newblock \href {http://arxiv.org/abs/1104.1079} {\path{arXiv:1104.1079}}.

\bibitem{Minguzzi:2014aua}
E.~Minguzzi.
\newblock {Light cones in Finsler spacetime}.
\newblock {\em Commun. Math. Phys.}, 334(3):1529--1551, 2015.
\newblock \href {http://arxiv.org/abs/1403.7060} {\path{arXiv:1403.7060}},
  \href {https://doi.org/10.1007/s00220-014-2215-6}
  {\path{doi:10.1007/s00220-014-2215-6}}.

\bibitem{Javaloyes:2018lex}
Miguel~Angel Javaloyes and Miguel S\'anchez.
\newblock {On the definition and examples of cones and Finsler spacetimes}.
\newblock 2018.
\newblock \href {http://arxiv.org/abs/1805.06978} {\path{arXiv:1805.06978}}.

\bibitem{Hohmann:2018rpp}
Manuel Hohmann, Christian Pfeifer, and Nicoleta Voicu.
\newblock {Finsler gravity action from variational completion}.
\newblock {\em Phys. Rev.}, D100(6):064035, 2019.
\newblock \href {http://arxiv.org/abs/1812.11161} {\path{arXiv:1812.11161}},
  \href {https://doi.org/10.1103/PhysRevD.100.064035}
  {\path{doi:10.1103/PhysRevD.100.064035}}.

\bibitem{Bernal2020}
Antonio Bernal, Miguel~\'Angel Javaloyes, and Miguel S\'anchez.
\newblock {Foundations of Finsler Spacetimes from the
  Observers\textquoteright{} Viewpoint}.
\newblock {\em Universe}, 6(4):55, 2020.
\newblock \href {http://arxiv.org/abs/2003.00455} {\path{arXiv:2003.00455}},
  \href {https://doi.org/10.3390/universe6040055}
  {\path{doi:10.3390/universe6040055}}.

\bibitem{cosmoFinsler}
Manuel Hohmann, Christian Pfeifer, and Nicoleta Voicu.
\newblock {Cosmological Finsler Spacetimes}.
\newblock {\em Universe}, 6(5):65, 2020.
\newblock \href {http://arxiv.org/abs/2003.02299} {\path{arXiv:2003.02299}},
  \href {https://doi.org/10.3390/universe6050065}
  {\path{doi:10.3390/universe6050065}}.

\bibitem{Lammerzahl:2012kw}
Claus Lammerzahl, Volker Perlick, and Wolfgang Hasse.
\newblock {Observable effects in a class of spherically symmetric static
  Finsler spacetimes}.
\newblock {\em Phys. Rev.}, D86:104042, 2012.
\newblock \href {http://arxiv.org/abs/1208.0619} {\path{arXiv:1208.0619}},
  \href {https://doi.org/10.1103/PhysRevD.86.104042}
  {\path{doi:10.1103/PhysRevD.86.104042}}.

\bibitem{Hasse:2019zqi}
Wolfgang Hasse and Volker Perlick.
\newblock {Redshift in Finsler spacetimes}.
\newblock {\em Phys. Rev.}, D100(2):024033, 2019.
\newblock \href {http://arxiv.org/abs/1904.08521} {\path{arXiv:1904.08521}},
  \href {https://doi.org/10.1103/PhysRevD.100.024033}
  {\path{doi:10.1103/PhysRevD.100.024033}}.

\bibitem{Bejancu}
Bejancu A. and Farran H.R.
\newblock {\em Geometry of Pseudo-Finsler Submanifolds}.
\newblock Springer, 2000.

\bibitem{Caponio-Masiello}
Erasmo Caponio and Antonio Masiello.
\newblock {On the analyticity of static solutions of a field equation in
  Finsler gravity}.
\newblock {\em Universe}, 6(4):59, 2020.
\newblock \href {http://arxiv.org/abs/2004.10613} {\path{arXiv:2004.10613}},
  \href {https://doi.org/10.3390/universe6040059}
  {\path{doi:10.3390/universe6040059}}.

\bibitem{Caponio-Stancarone}
Erasmo Caponio and Giuseppe Stancarone.
\newblock {Standard static Finsler spacetimes}.
\newblock {\em Int. J. Geom. Meth. Mod. Phys.}, 13(04):1650040, 2016.
\newblock \href {http://arxiv.org/abs/1506.07451} {\path{arXiv:1506.07451}},
  \href {https://doi.org/10.1142/S0219887816500407}
  {\path{doi:10.1142/S0219887816500407}}.

\bibitem{Werner:2012rc}
M.~C. Werner.
\newblock {Gravitational lensing in the Kerr-Randers optical geometry}.
\newblock {\em Gen. Rel. Grav.}, 44:3047--3057, 2012.
\newblock \href {http://arxiv.org/abs/1205.3876} {\path{arXiv:1205.3876}},
  \href {https://doi.org/10.1007/s10714-012-1458-9}
  {\path{doi:10.1007/s10714-012-1458-9}}.

\bibitem{Kostelecky:2012ac}
V.~Alan~Kosteleck\'y, N.~Russell, and R.~Tso.
\newblock {Bipartite Riemann\textendash{}Finsler geometry and Lorentz
  violation}.
\newblock {\em Phys. Lett. B}, 716:470--474, 2012.
\newblock \href {http://arxiv.org/abs/1209.0750} {\path{arXiv:1209.0750}},
  \href {https://doi.org/10.1016/j.physletb.2012.09.002}
  {\path{doi:10.1016/j.physletb.2012.09.002}}.

\bibitem{Silva:2020tqr}
J.~E.~G. Silva.
\newblock {A field theory in Randers-Finsler spacetime}.
\newblock {\em EPL}, 133(2):21002, 2021.
\newblock \href {http://arxiv.org/abs/2009.03806} {\path{arXiv:2009.03806}},
  \href {https://doi.org/10.1209/0295-5075/133/21002}
  {\path{doi:10.1209/0295-5075/133/21002}}.

\bibitem{Heefer:2020hra}
Sjors Heefer, Christian Pfeifer, and Andrea Fuster.
\newblock {Randers pp-waves}.
\newblock {\em Phys. Rev. D}, 104(2):024007, 2021.
\newblock \href {http://arxiv.org/abs/2011.12969} {\path{arXiv:2011.12969}},
  \href {https://doi.org/10.1103/PhysRevD.104.024007}
  {\path{doi:10.1103/PhysRevD.104.024007}}.

\bibitem{torrome2021note}
Ricardo~Gallego Torromé.
\newblock A note on clifford bundles and certain finsler type spaces, 2021.
\newblock \href {http://arxiv.org/abs/2106.00235} {\path{arXiv:2106.00235}}.

\bibitem{Kropina}
Kropina~V. K.
\newblock On projective two-dimensional finsler spaces with special metric.
\newblock {\em Trudy seminara po vektornomu i tenzornomu analizu}.

\bibitem{Cohen:2006ky}
Andrew~G. Cohen and Sheldon~L. Glashow.
\newblock {Very special relativity}.
\newblock {\em Phys. Rev. Lett.}, 97:021601, 2006.
\newblock \href {http://arxiv.org/abs/hep-ph/0601236}
  {\path{arXiv:hep-ph/0601236}}, \href
  {https://doi.org/10.1103/PhysRevLett.97.021601}
  {\path{doi:10.1103/PhysRevLett.97.021601}}.

\bibitem{Fuster:2015tua}
Andrea Fuster and Cornelia Pabst.
\newblock {Finsler pp-waves}.
\newblock {\em Phys. Rev.}, D94(10):104072, 2016.
\newblock \href {http://arxiv.org/abs/1510.03058} {\path{arXiv:1510.03058}},
  \href {https://doi.org/10.1103/PhysRevD.94.104072}
  {\path{doi:10.1103/PhysRevD.94.104072}}.

\bibitem{Fuster:2018djw}
Andrea Fuster, Cornelia Pabst, and Christian Pfeifer.
\newblock {Berwald spacetimes and very special relativity}.
\newblock {\em Phys. Rev.}, D98(8):084062, 2018.
\newblock \href {http://arxiv.org/abs/1804.09727} {\path{arXiv:1804.09727}},
  \href {https://doi.org/10.1103/PhysRevD.98.084062}
  {\path{doi:10.1103/PhysRevD.98.084062}}.

\bibitem{Elbistan:2020mca}
M.~Elbistan, P.~M. Zhang, N.~Dimakis, G.~W. Gibbons, and P.~A. Horvathy.
\newblock {Geodesic motion in Bogoslovsky-Finsler spacetimes}.
\newblock {\em Phys. Rev. D}, 102(2):024014, 2020.
\newblock \href {http://arxiv.org/abs/2004.02751} {\path{arXiv:2004.02751}},
  \href {https://doi.org/10.1103/PhysRevD.102.024014}
  {\path{doi:10.1103/PhysRevD.102.024014}}.

\bibitem{Gurlebeck:2018nme}
Norman Gurlebeck and Christian Pfeifer.
\newblock {Observers’ measurements in premetric electrodynamics: Time and
  radar length}.
\newblock {\em Phys. Rev.}, D97(8):084043, 2018.
\newblock \href {http://arxiv.org/abs/1801.07724} {\path{arXiv:1801.07724}},
  \href {https://doi.org/10.1103/PhysRevD.97.084043}
  {\path{doi:10.1103/PhysRevD.97.084043}}.

\bibitem{Tavakol2009}
R.K. Tavakol.
\newblock {Geometry of spacetime and Finsler geometry}.
\newblock {\em International Journal of Modern Physics A}, 24(08n09):1678 --
  1685, 2009.
\newblock \href {https://doi.org/10.1142/S0217751X09045224}
  {\path{doi:10.1142/S0217751X09045224}}.

\bibitem{Girelli:2006fw}
Florian Girelli, Stefano Liberati, and Lorenzo Sindoni.
\newblock {Planck-scale modified dispersion relations and Finsler geometry}.
\newblock {\em Phys. Rev. D}, 75:064015, 2007.
\newblock \href {http://arxiv.org/abs/gr-qc/0611024}
  {\path{arXiv:gr-qc/0611024}}, \href
  {https://doi.org/10.1103/PhysRevD.75.064015}
  {\path{doi:10.1103/PhysRevD.75.064015}}.

\bibitem{Letizia:2016lew}
Marco Letizia and Stefano Liberati.
\newblock {Deformed relativity symmetries and the local structure of
  spacetime}.
\newblock {\em Phys. Rev. D}, 95(4):046007, 2017.
\newblock \href {http://arxiv.org/abs/1612.03065} {\path{arXiv:1612.03065}},
  \href {https://doi.org/10.1103/PhysRevD.95.046007}
  {\path{doi:10.1103/PhysRevD.95.046007}}.

\bibitem{Chern-Chen-Lam}
S.S. Chern, W.S. Chen, and Lam K.S.
\newblock {\em {Lectures on Differential Geometry}}.
\newblock World Scientific, 1999.

\bibitem{Giachetta}
G.~Giachetta, L.~Mangiarotti, and G~Sardanashvily.
\newblock {\em {Advanced Classical Field Theory}}.
\newblock World Scientific, 2009.

\bibitem{Szilasi:2014csf}
József Szilasi, Rezső~L. Lovas, and Dávid~Cs. Kertész.
\newblock {\em Connections, Sprays and Finsler Structures}.
\newblock World Scientific, Singapore, 2014.

\bibitem{Szilasi2011}
J.~Szilasi, R.~L. Lovas, and D.~Cs. Kertesz.
\newblock Several ways to {B}erwald manifolds - and some steps beyond.
\newblock {\em Extracta Math.}, 26:89--130, 2011.
\newblock \href {http://arxiv.org/abs/1106.2223} {\path{arXiv:1106.2223}}.

\bibitem{Javaloyes_IJGMMP_2019}
Miguel~Angel Javaloyes.
\newblock Anisotropic tensor calculus.
\newblock {\em International Journal of Geometric Methods in Modern Physics},
  16(supp02):1941001, 2019.
\newblock \href
  {http://arxiv.org/abs/https://doi.org/10.1142/S0219887819410019}
  {\path{arXiv:https://doi.org/10.1142/S0219887819410019}}, \href
  {https://doi.org/10.1142/S0219887819410019}
  {\path{doi:10.1142/S0219887819410019}}.

\bibitem{Lee}
J.~M. Lee.
\newblock {\em {Introduction to Smooth Manifolds}}.
\newblock Springer, Berlin Heidelberg, 2012.

\bibitem{Krupka-book}
D.~Krupka.
\newblock {\em {Introduction to Global Variational Geometry}}.
\newblock Springer, Berlin Heidelberg, 2015.

\bibitem{Hohmann:2020yia}
Manuel Hohmann, Christian Pfeifer, and Nicoleta Voicu.
\newblock {The kinetic gas universe}.
\newblock {\em Eur. Phys. J. C}, 80(9):809, 2020.
\newblock \href {http://arxiv.org/abs/2005.13561} {\path{arXiv:2005.13561}},
  \href {https://doi.org/10.1140/epjc/s10052-020-8391-y}
  {\path{doi:10.1140/epjc/s10052-020-8391-y}}.

\bibitem{Sarbach:2013fya}
Olivier Sarbach and Thomas Zannias.
\newblock {Relativistic Kinetic Theory: An Introduction}.
\newblock {\em AIP Conf. Proc.}, 1548(1):134--155, 2013.
\newblock \href {http://arxiv.org/abs/1303.2899} {\path{arXiv:1303.2899}},
  \href {https://doi.org/10.1063/1.4817035} {\path{doi:10.1063/1.4817035}}.

\bibitem{Ehlers2011}
J.~Ehlers.
\newblock {\em General-Relativistc Kinetic Theory Of Gases}, pages 301--388.
\newblock Springer Berlin Heidelberg, Berlin, Heidelberg, 2011.
\newblock \href {https://doi.org/10.1007/978-3-642-11099-3_4}
  {\path{doi:10.1007/978-3-642-11099-3_4}}.

\bibitem{Sarbach}
Olivier Sarbach and Thomas Zannias.
\newblock {The geometry of the tangent bundle and the relativistic kinetic
  theory of gases}.
\newblock {\em Class. Quant. Grav.}, 31:085013, 2014.
\newblock \href {http://arxiv.org/abs/1309.2036} {\path{arXiv:1309.2036}},
  \href {https://doi.org/10.1088/0264-9381/31/8/085013}
  {\path{doi:10.1088/0264-9381/31/8/085013}}.

\bibitem{Andreasson:2011ng}
Hakan Andreasson.
\newblock {The Einstein-Vlasov System/Kinetic Theory}.
\newblock {\em Living Rev. Rel.}, 14:4, 2011.
\newblock \href {http://arxiv.org/abs/1106.1367} {\path{arXiv:1106.1367}},
  \href {https://doi.org/10.12942/lrr-2011-4} {\path{doi:10.12942/lrr-2011-4}}.

\bibitem{Gotay}
Mark~J. Gotay and Jerrold~E. Marsden.
\newblock Stress-energy-momentum tensors and the {B}elinfante-{R}osenfeld
  formula.
\newblock In {\em Mathematical aspects of classical field theory ({S}eattle,
  {WA}, 1991)}, volume 132 of {\em Contemp. Math.}, pages 367--392. Amer. Math.
  Soc., Providence, RI, 1992.
\newblock \href {https://doi.org/10.1090/conm/132/1188448}
  {\path{doi:10.1090/conm/132/1188448}}.

\bibitem{Krupka-Trautman}
D.~Krupka and A.~Trautman.
\newblock General invariance of lagrangian structures.
\newblock {\em Bull. Acad. Polon. Sci}, 22(2):207--211, 1974.

\bibitem{Voicu-em-tensors}
Nicoleta Voicu.
\newblock Energy–momentum tensors in classical field theories — a modern
  perspective.
\newblock {\em International Journal of Geometric Methods in Modern Physics},
  13(08):1640001, Sep 2016.
\newblock URL: \url{http://dx.doi.org/10.1142/S0219887816400016}, \href
  {https://doi.org/10.1142/s0219887816400016}
  {\path{doi:10.1142/s0219887816400016}}.

\bibitem{Crampin}
M~Crampin.
\newblock On the construction of riemannian metrics for berwald spaces by
  averaging.
\newblock {\em Houston Jour. Math.}, 40(3):737--750, 2014.

\bibitem{Palais}
R.~S. Palais and C.~L. Terng.
\newblock Natural bundles have finite orders.
\newblock {\em Topology}, 16:271--277, 1976.

\bibitem{Fatibene}
Fatibene L.
\newblock {\em Relativistic Theories, Gravitational Theories, and General
  Relativity}.
\newblock 2021.

\end{thebibliography}

\end{document}